\long\def\commentbegin #1\commentend{}
\newcommand{\eps}{\varepsilon}
\newcommand{\E}{\text{E}}
\date{}
\newtheorem{definition}{Definition}
\newtheorem{claim}{Claim}
\def\polylog{\operatorname{polylog}}
\newcommand{\I}{{\mathbbm{1}}}
\def\capa{{\textsf{cap}}}
\def\tok{{\textsf{w}}}
\def\tot{{\textsf{total}}}
\def\rwLength{{\textsf{rwLength}}}
\def\numPhases{{\textsf{numPhases}}}
\def\phaseNum{{\textsf{phaseNum}}}
\def\stepNum{{\textsf{stepNum}}}
\def\outbox{{\textsf{Outbox}}}
\title{Fully-Distributed Byzantine Agreement in Sparse Networks}
\author{John Augustine \thanks{Department of Computer Science and Engineering, Indian Institute of Technology at Madras, Chennai, Tamil Nadu, 600036, India. Email: \href{mailto: augustine@iitm.ac.in}{\tt augustine@iitm.ac.in}. He is also affiliated with and supported by the Centre for Cybersecurity, Trust and Reliability (CyStar) at IIT Madras.}  \and Fabien Dufoulon \thanks{School of Computing and Communications, Lancaster University, Lancaster, UK.  Email: \href{mailto: f.dufoulon@lancaster.ac.uk}{\tt f.dufoulon@lancaster.ac.uk}.}  \and Gopal Pandurangan \thanks{Department of Computer Science, University of Houston, Houston, TX 77204, USA. Email: \href{mailto: gopal@cs.uh.edu}{\tt gopal@cs.uh.edu}. Supported in part by ARO Grant W911NF-231-0191 and NSF grant CCF-2402837.}}
\begin{document}
\maketitle

Byzantine agreement is a fundamental problem in fault-tolerant distributed networks that has been studied intensively for the last four decades. Most of these works designed protocols for \emph{complete} networks.  A key goal in  Byzantine protocols is to tolerate as many Byzantine nodes as possible.

The work of Dwork, Peleg, Pippenger, and Upfal [STOC 1986, SICOMP 1988] was the first to address the  Byzantine agreement problem in \emph{sparse, bounded degree} networks and presented a protocol that achieved \emph{almost-everywhere} agreement among honest nodes.
In such networks, all known  Byzantine agreement protocols (e.g., Dwork, Peleg, Pippenger, and Upfal, STOC 1986; Upfal, PODC 1992; King, Saia, Sanwalani, and Vee, FOCS 2006) that tolerated a large number of Byzantine nodes had a major drawback that they were \emph{not fully-distributed} --- in those protocols, nodes are required to have initial knowledge of the entire network topology.  This drawback makes such protocols inapplicable to real-world communication networks such as peer-to-peer (P2P) networks, which are typically sparse and bounded degree and where nodes initially have only \emph{local} knowledge of themselves and their neighbors. Indeed, a fundamental open question raised by the above works is whether one can design Byzantine protocols that tolerate a large number of Byzantine nodes in sparse networks that work with only \emph{local}  knowledge, i.e., {\em fully-distributed} protocols.
The work of Augustine, Pandurangan, and Robinson [PODC 2013] presented the first fully-distributed Byzantine agreement protocol that works in sparse networks, but it tolerated only up to $O(\sqrt{n}/\polylog{n})$ Byzantine nodes (where $n$ is the total network size). 

We present   {\em fully-distributed} Byzantine agreement protocols for sparse, bounded degree networks that tolerate significantly more Byzantine nodes,  answering the earlier open question. Our protocols work under the powerful \emph{full information model} where the Byzantine nodes can behave arbitrarily and maliciously, have knowledge about the entire state of the network at every round, including random choices made by the nodes up to (and including) the current round,  have unlimited computational power, and may collude among themselves.  We first present a protocol that tolerates 
up to $o(\frac{n}{\log n})$ Byzantine nodes and with high probability\footnote{Throughout, ``with high probability (whp)'' means with probability at least $1 - \frac{1}{n^c}$, where $c \geq 1$ is a fixed constant.}, solves  {\em almost-everywhere} agreement where all except $o(n)$ honest nodes reach agreement. The protocol runs in $\tilde{O}(n^2)$ rounds. 
We then present a faster protocol that runs in nearly linear (i.e., $\tilde{O}(n)$) rounds and tolerates up to $o(\frac{n}{\log^2 n})$ Byzantine nodes. Both protocols are communication-efficient in the sense that honest nodes send only $\polylog{n}$ bits per edge per round.

\section{Introduction}
\label{sec:introduction}
Distributed computation in the presence of faulty and malicious nodes (also called \emph{Byzantine nodes}) is a central problem in distributed network algorithms.
 The Byzantine agreement problem can be stated as follows.
 \begin{definition}[Byzantine Agreement (BA)]
Let $P$ be a protocol on a distributed network of $n$ nodes in which each node $v$ starts with an input bit value $b_v$.
A Byzantine adversary controls up to $t$ nodes, which are called Byzantine (or faulty), which can deviate arbitrarily from $P$.
Protocol $P$ solves Byzantine agreement  if each (honest) node $v$ running $P$ terminates and  outputs a value $o_v$ at the end
of $P$ such that:
\begin{compactdesc}
\item [Agreement:] For any two honest nodes $u$ and $v$, $o_u = o_v$.
\item [Validity:] If the input value for all nodes is $b$, the  output value for all honest nodes
should be $b$.
\end{compactdesc}
\end{definition}

Much of the work in distributed Byzantine protocols has focused on \emph{complete networks}, starting from the classical work of  Pease, Shostak and Lamport \cite{Pease_1980} on \emph{Byzantine agreement} in the early 1980s and followed by several others in the last four decades, see e.g., \cite{Rabin_1983, Ben-Or_1983,  Feldman_1997, Goldwasser_2006, Ben-Or_2006, King_2006_SODA, King_2011, Augustine_2020_DISC}. However, as pointed out over 35 years ago in a seminal paper by Dwork, Peleg, Pippenger, and Upfal \cite{Dwork_1988}, such protocols on complete networks {\em do not} directly apply to {\em real-world networks}, which are typically {\em sparse and of bounded degree}.\footnote{In this work, by ``bounded degree'', we mean that the maximum degree is bounded by some fixed {\em constant}. However, our protocol and results can be easily extended to apply
if the degree is bounded by a slow-growing function of $n$, say, $\polylog{(n)}$, where $n$ is the number of nodes in the network.}  For example, the Bitcoin Peer-to-Peer (P2P) network allows eight outgoing connections and up to 125 incoming connections \cite{Mao_2020}.

The Dwork et al. \cite{Dwork_1988} paper was the first to study the fundamental Byzantine agreement problem in \emph{sparse, bounded degree} networks and presented the first Byzantine agreement protocol that achieved {\em almost-everywhere agreement} among honest nodes, where agreement is guaranteed on \emph{most} (but not all) honest nodes.
More formally, in the \emph{Almost-Everywhere (binary) Byzantine Agreement (AEBA)} problem, each node starts with binary input values (as in the Byzantine agreement defined above) and must output binary output values satisfying: (1) \emph{almost-everywhere agreement}, that is, all honest nodes except $O(t)$ of them output the same value\footnote{The protocol
in this paper allows slightly more honest nodes to disagree: it achieves agreement among all honest nodes except $o(n)$ of them for $t = o(n/\log n)$ or $t = o(n/\log^2 n)$ Byzantine nodes.}, and (2) \emph{validity}, that is, if all honest nodes have the same input value, then all honest nodes except $O(t)$ of them output that value.
In bounded degree networks, since an adversary can always isolate some number of honest nodes, almost-everywhere agreement is the best one can hope for. Furthermore, it must be pointed out that there is a somewhat stringent condition that the network has to have $\Omega(t)$ connectivity to reach everywhere agreement under $t$ Byzantine nodes \cite{Dolev_1981, Hadzilacos_1984}. 
Thus, for sparse networks, one has to relax the notion of \emph{everywhere agreement} to \emph{almost-everywhere agreement}.

The work of Dwork et al. \cite{Dwork_1988} showed how one can achieve almost-everywhere Byzantine agreement in a $d$-regular \emph{expander} graph ($d$ is a constant).  
Their protocol can tolerate up to $O(n^{1 - \eps})$ Byzantine nodes (throughout $n$ denotes the total
number of nodes in the network), for some small constant $\eps$ (that depends on $d$) 
and achieves agreement among all but $O(n^{1-\eps})$ honest nodes. 
They also show that by superimposing a butterfly network with an expander graph, they can tolerate up to $O(\frac{n}{\log{n}})$ Byzantine nodes and achieve agreement among all but $O(\frac{n}{\log{n}})$ honest nodes. 
This work was improved in subsequent papers \cite{Berman_1993_DC, Berman_1993_MST, Upfal_1994}. In particular, Upfal \cite{Upfal_1994} presented an almost-everywhere Byzantine agreement protocol that can tolerate up to linear, i.e., $\epsilon n$ (for a sufficiently small $\epsilon > 0$), Byzantine nodes in a constant-degree expander network.

A major drawback of  the above protocols is that they  require \emph{initial knowledge of the global topology}\footnote{Also in some cases, specific network designs such as butterfly are assumed which are infeasible in large-scale decentralized networks.}, since at the start, nodes need to have this information ``hardcoded.''
Another drawback of the above results \cite{Dwork_1988, Upfal_1994, Berman_1993_MST, Berman_1993_DC} is that they require each node to use $\Omega(n^2)$ number of bits in communication. Thus, in a sparse network and in the CONGEST model of communication, where only $\polylog{(n)}$ bits of communication are allowed per edge per round \cite{Peleg_2000_Book}, these protocols take $\tilde{\Omega}(n^2)$ time.   
An additional drawback of Upfal's protocol \cite{Upfal_1994} is that the local computation required by each processor is exponential.

The work of  King, Saia, Sanwalani, and Vee \cite{King_2006_FOCS} was the first to study \emph{scalable} algorithms for Byzantine almost-everywhere agreement and leader election in sparse networks. By ``scalable'', it is meant that the total number of bits that any honest node communicates and processes throughout the protocol is at most $\polylog(n)$. This protocol tolerated
up to $ (\frac{1}{3} - \epsilon)n$ Byzantine nodes (where $\epsilon > 0$ is any constant) and achieved agreement among $1 - O(1/\ln n)$ fraction of honest nodes.
This work also requires that the nodes have {\em global topological knowledge}, i.e., hardcoded information on the network topology to begin with. Thus, the protocol of Dwork et al.\ and several subsequent protocols including that of King et al.\ {\em had the major drawback that they are  \emph{not fully-distributed} in the sense that nodes are required to have initial knowledge of the global network topology.} This drawback makes such protocols not applicable to real-world communication networks such as Peer-to-Peer (P2P) networks, which are typically sparse and dynamic, with nodes having only \emph{local} initial knowledge of only themselves and their neighbors. 
A fundamental question left open by the works of \cite{King_2006_FOCS, Upfal_1994, Dwork_1988} is whether one can design Byzantine protocols in sparse networks that work with only \emph{local}  knowledge, i.e., {\em fully-distributed} protocols.
In fact, Dwork et al.\ \cite{Dwork_1988} pose the following  open question in their paper (a similar question
is also posed in King et al. \cite{King_2006_FOCS}):
\begin{center}
    \fbox
    {
        \begin{minipage}{35em}
            ``The algorithms presented in this work (i.e., \cite{Dwork_1988}) assume that all the processors know the topology of the communication network and the communication schemes used by all other processors. Is this requirement essential for achieving almost-everywhere agreement?''
        \end{minipage}
    }
\end{center}

The work of Augustine, Pandurangan, and Robinson \cite{Augustine_2013_PODC} gave the first known fully-distributed Byzantine agreement protocol 
that worked in sparse networks.\footnote{In fact, their protocol can also handle churn and work in dynamic networks.} However, a drawback of this protocol is that it could tolerate only $\sqrt{n}/\polylog{n}$ Byzantine nodes.

In this paper, we  present  {\em fully-distributed} Byzantine agreement protocols for  sparse bounded-degree networks
that tolerate {\em significantly more Byzantine nodes},  up to $o(\frac{n}{\log n})$  Byzantine nodes.

\subsection{Model}
\label{sec:model}

\noindent {\bf Network Model.} We assume an arbitrary $d$-regular {expander}  graph $G$, where $d$ is a constant.\footnote{Our protocol can be easily extended to work with non-constant $d$ (say, $\polylog{n}$) as well. The regularity assumption can also be relaxed if the degrees are within a small (say, constant or $\polylog{n}$) factor of each other.} Expander graphs \cite{Hoory_2006} have conductance at least a constant (independent of $n$, the network size).
We note that this is the same graph model used in the work of Dwork et al.\ \cite{Dwork_1988} discussed earlier.\footnote{Dwork et al.\ \cite{Dwork_1988} actually assume a $d$-regular random graph which is an expander with high probability. Furthermore, since their result holds with high probability on a random $d$-regular graph, they claim that their result holds for ``almost all'' $d$-regular graphs.  We use the same model in this paper, and hence, the same remark applies to our result as well, although we tolerate a higher number of Byzantine nodes.}
Note that the above graph model is quite general in the sense that we only assume that the topology is an expander graph; no other special properties are assumed. Indeed, expander graphs have been used extensively as candidates to solve the Byzantine agreement and related problems in sparse graphs in prior works \cite{Dwork_1988, Kapron_2010, King_2011, King_2006_SODA, Upfal_1994}; the high expansion is crucial in these results, in particular, for tolerating a large number of Byzantine nodes. Expander graphs have also been used extensively to model P2P networks\footnote{In particular, the real-world  Bitcoin P2P network, constructed by allowing each node to choose eight random (outgoing) connections (\cite{Mao_2020}) is likely an expander network if the connections are chosen (reasonably) uniformly at random \cite{Palmer_1985_Book}.} (see e.g., \cite{Law_2003,  Pandurangan_2003, Jacobs_2013, Cooper_2007, Mahlmann_2006, Augustine_2012, Augustine_2013_PODC, Augustine_2015_DISC, Augustine_2013_SPAA,sigact16}).

\medskip

\noindent {\bf Local Knowledge.} An important assumption in sparse networks is that nodes at the beginning have only \emph{local} knowledge, i.e., they have knowledge of only themselves and their neighbors in $G$. In particular, they do \emph{not} know the global topology or the identities of other nodes (except those of their neighbors) in the network. Thus, we seek {\em fully-distributed} protocols where nodes start with only local knowledge. However, as is common in distributed computing literature (see e.g., \cite{Peleg_2000_Book,Kutten_2015_TCS}),  we  do assume that nodes have knowledge of  global parameters such as the network size $n$ (or a good, say constant-factor, approximation of it) as well as the conductance of the expander, which  
is a constant. (A constant lower bound of the conductance is enough --- this gives an upper bound on the mixing time
which is $O(\log n)$ that is used in the protocol.)

\medskip

\noindent {\bf Full Information Model.} We assume the powerful {\em full-information} model (e.g., see \cite{Ben-Or_2006, coin-flipping,linial-full-info}) that has been studied extensively.  In this model,  the Byzantine nodes (controlled by an adversary) can behave arbitrarily and maliciously and have knowledge about the entire state of the network at every round, including random choices made by all the nodes up to and including the current round (this is also called {\em rushing} adversary), have unlimited computational power, and may collude among themselves (hence, cryptographic techniques are not applicable in this setting).  We assume a {\em static} Byzantine adversary where the nodes
that are corrupted are chosen {\em before} the start of the protocol. 

\medskip

\noindent {\bf Communication Model.} Communication is {\em synchronous} and occurs via  message passing, i.e., communication proceeds in discrete rounds
by exchanging messages on {\em the edges} of $G$, i.e., each node (including Byzantine nodes) can exchange messages
 only with its neighbors in $G$. 
By our protocol design, honest nodes will only send $O(\polylog n)$ bits per edge per round. Note that Byzantine
nodes do not have any such limit and can send as many bits as they want. (Our protocol is designed in such a way
that it crucially handles this extra power given to Byzantine nodes without limiting in any way the bandwidth capacity
of the edges.)
 As is standard in Byzantine algorithms (see, e.g., Lamport et al. \cite{Pease_1980}), we assume that the receiver of a message across an edge in $G$ knows the identity of the sender, i.e., if $u$ sends a message to $v$ across edge $(u,v)$, then $v$ knows the identity of $u$; also the message sent across an edge is delivered correctly and in order.

\subsection{Our Contributions} \label{sec:result}

Our main contributions are   {\em fully-distributed} Byzantine agreement protocols for sparse, bounded-degree networks  that  tolerate a large number of Byzantine nodes in the full information model. We first present a protocol that tolerates 
up to $o(\frac{n}{\log n})$ Byzantine nodes (where  $n$ is the total number of nodes). 
The protocol, with high probability\footnote{Throughout, ``with high probability (whp)'' means with probability at least $1 - \frac{1}{n^c}$, where $c \geq 1$ is a fixed constant.}, solves  {\em almost-everywhere} agreement where all except $o(n)$ honest nodes reach agreement. The protocol requires honest nodes to send only $\polylog{n}$ bits per edge per round. The protocol takes  $\tilde{O}(n^2)$ rounds. 
We then present a faster (somewhat more complicated) protocol that runs in nearly linear (i.e., $\tilde{O}(n)$) rounds and tolerates slightly less Byzantine nodes --- up to $o(\frac{n}{\log^2 n})$ Byzantine nodes.

Our protocols are a significant improvement over previous Byzantine protocols in sparse (expander) networks \cite{Dwork_1988, Upfal_1994, King_2006_FOCS} in that it {\em requires only local (initial) knowledge} and answers the open question raised in Dwork et al. \cite{Dwork_1988} of whether such a protocol is possible. Our first protocol's runtime is comparable
 to the protocols of Dwork et al. and Upfal \cite{Upfal_1994} --- as these also take
 $\tilde{\Omega}(n^2)$ rounds under $\polylog n$ bandwidth constraint. Our second protocol's runtime of $\tilde{O}(n)$ rounds is significantly 
 faster while tolerating up to $o(\frac{n}{\log^2 n})$ Byzantine nodes. We note that the protocol of Dwork et al. 
 tolerates  $O(n^{1 - \eps})$ Byzantine nodes  and the protocol of Upfal  tolerates $\eps n$ nodes  (cf. Section \ref{sec:introduction}).
 
 The protocol of King et al. \cite{King_2006_FOCS} is much faster,
 taking $O(\polylog{n})$ rounds\footnote{Although the exact power of $\log n$ is not explicitly specified in the paper,
 it is a somewhat large constant.} and tolerates up to $ (\frac{1}{3} - \epsilon)n$ Byzantine nodes, but, as pointed out earlier, this protocol needs {\em initial knowledge of the global expander topology}  and, hence is
 {\em not fully-distributed}, unlike our protocol. As mentioned in Section \ref{sec:introduction}, being fully-distributed is crucial for implementation in real-world networks such as P2P networks, which are sparse and bounded-degree with nodes
 having only local initial knowledge of themselves and their neighbors.
 
 Our protocols  improve upon the fault-tolerance of the Byzantine agreement protocol
 of Augustine et al.~\cite{Augustine_2013_PODC} that, while being fully-distributed and fast (running in $O(\log^3 n)$ rounds),
 could tolerate only $\sqrt{n}/\polylog{n}$ Byzantine nodes. {\em Designing a fully-distributed agreement
 protocol that tolerates significantly more Byzantine nodes (say, up to $O(n/\polylog{n}$)) while keeping an extremely fast runtime (say, in $\polylog{n}$ rounds) is an important open problem} (cf. Section \ref{sec:conclusion}).

Our tools and techniques (discussed in the next section) are likely to be of independent interest and 
can be useful for designing fully-distributed Byzantine protocols for other important problems, such as leader and committee elections, in the more challenging setting of sparse networks. In particular, we define and present two primitives, namely {\em Almost-Everywhere Reliable Information Dissemination (AERID)} (cf. Section \ref{sec:AERID}) and {\em Eventual Almost-Everywhere  Common Coin (EAECC)} (cf. Section \ref{sec:randomCoin}). The former allows most honest nodes to correctly disseminate information to most other honest nodes, even under the presence of a large number of Byzantine nodes, in a fully-distributed fashion. Our AERID protocols both take $\tilde{O}(n)$ rounds, which is (essentially) optimal in a bounded degree network. We then show how to use AERID to design a protocol to implement an eventual almost-everywhere common coin (EAECC), allowing most honest nodes to agree on a random common coin value eventually. We give two protocols for EAECC, one for each AERID protocol. The first EAECC protocol takes $\tilde{O}(n^2)$ rounds under the presence of $o(n/\log{n})$ Byzantine nodes, and the second takes $\tilde{O}(n)$ rounds under the presence of $o(n/\log^2{n})$ Byzantine nodes.

\subsection{Prior Work and Challenges} \, \smallskip

\noindent{\bf Prior Work.}
At a high level, we take a different approach to designing Byzantine protocols compared to previous works, including Dwork et al.\ \cite{Dwork_1988}, Upfal \cite{Upfal_1994}, and King et al.\ \cite{King_2006_FOCS}.
Unlike these prior protocols,  our protocol is  \emph{fully-distributed} (i.e., works with only local initial knowledge). To see the contrast, we briefly discuss the protocol scheme of Dwork et al. for the Byzantine agreement. (A similar scheme is also used in \cite{Upfal_1994, Berman_1993_MST, Berman_1993_DC}). The main idea of this scheme is to simulate a Byzantine agreement protocol designed for a {\em complete network} in a sparse network $G$. This simulation is done by specifying a transmission scheme that ensures reliable communication between most honest processors; this communication is accomplished by sending a message over multiple paths in $G$ between two nodes (the paths are different for different pairs of nodes).
For the transmission scheme to work, each (or most) honest node(s) should know their respective communication paths to other nodes. This requires that all honest nodes know the topology of $G$. 

King et al.~\cite{King_2006_FOCS} design a (specific) expander network with a polylogarithmic degree whose topology needs to be known by all honest nodes for efficient and reliable communication. They raise the question of whether one can design a scalable and robust protocol that works correctly on {\em any} sparse network with sufficiently good expansion.
Our result gives such protocols that work in {\em any} sparse expander graph.

 Augustine et al.~\cite{Augustine_2013_PODC} presented the first known fully-distributed Byzantine agreement protocol
 for a sparse constant-degree (regular) expander network. Their protocol also worked under a dynamic churn setting.
 The protocol is based on implementing the following random sampling algorithm which is first stated
 for a complete network. In each round, each node samples two random nodes uniformly
 at random in the network and takes the {\em majority}  of its value and the two sampled values.
It can be shown \cite{Augustine_2013_PODC,two-choices} that this protocol converges to a common value in $O(\log n)$ rounds provided that
the number of Byzantine nodes is $O(\sqrt{n})$. The protocol's advantage is that the total number of random samples
requested by any honest node is $O(\log n)$.
The novelty of the protocol of Augustine et al. is to implement the above randomized sampling algorithm
efficiently in a sparse constant degree expander. The main technical tool used is {\em random walks}, which is
also used in the current paper (though there are significant differences in how and why they are used, as discussed
below). 
They present a result called the Byzantine Sampling Theorem that characterizes random walks on expander 
networks in the presence of a large number (up to $O(\sqrt{n}/\polylog{n})$ Byzantine nodes). It shows
that despite the presence of a large number of Byzantine nodes, most random walks mix rapidly and reach the uniform stationary
distribution. This allows most honest nodes to sample two honest nodes almost uniformly randomly.
Using this result, one can implement the majority agreement protocol described earlier in a sparse network
and show that it converges in $\polylog{n}$ rounds. It is crucial to note that the majority agreement fails
if the number of Byzantine nodes is somewhat larger, say,  $n^{1/2 +\epsilon}$ for any small constant $\epsilon$.
In this case, the agreement protocol can take {\em exponential} (in $n$) number of rounds.
Thus, tolerating a much larger number of Byzantine nodes via a fully-distributed protocol in reasonable (say, polynomial number of rounds) time remained an open problem.

\smallskip

\noindent{\bf The Advantage of Random Walks.}
Before we give a high-level idea of our approach, we give
an intuition as to why random walks work well in a sparse network  (unlike broadcast, for example).  Random walks
are lightweight (and local) and allow us to bound the number of messages sent by {\em Byzantine nodes}. Byzantine nodes need not follow
the random walk protocol and can send a lot of messages, but once these messages reach honest nodes, their influence
becomes limited. We give a protocol called the Byzantine Random Walk Protocol (cf. Algorithm \ref{alg:byzantineSamplingSparse}) and prove a key result called  ``The Byzantine Random Walk Theorem'' (cf. Theorem \ref{thm:ByzSamplingFinal}) that shows how the protocol can implement random walks in bounded-degree networks with
a large number of Byzantine nodes.  This theorem shows precisely how the Byzantine Random Walk Protocol controls the messages sent by Byzantine nodes
and how their influence can be limited for most random walks initiated by most honest nodes.
Though the Byzantine Random Walk Theorem (Theorem \ref{thm:ByzSamplingFinal}) of the current paper is similar in spirit to the Byzantine Sampling Theorem of \cite{Augustine_2022_SPAA}, it is stronger in the sense that it explicitly specifies the subgraph of $G$, called the {\em core} graph which is itself an expander and establishes the mixing of (most of) the random walks on this core graph. A big advantage of
the Byzantine Random Walk Protocol is that it can be used to achieve {\em Almost-Everywhere  Reliable Information Dissemination (AERID)} (see Section \ref{sec:AERID}) even
under a large number of Byzantine nodes. This guarantees that most honest nodes will be able to broadcast the data item they possess correctly to most other honest nodes in near-linear, i.e., $\tilde{O}(n)$, rounds, which is optimal in a constant degree network.\footnote{\label{ft:aerid} $\Omega(n)$ rounds are needed for AERID, since each node from a set of $n-o(n)$ nodes has to receive information from  $n-o(n)$ nodes, and since the degree is bounded, any honest node can send/receive only $\polylog{n}$ bits of information in a round.}

We note that random walk techniques inherently cannot tolerate $\Omega(n/\log n)$ Byzantine nodes, since the mixing time needed in a sparse expander network is at least $\Theta(\log n)$.  
If there is a linear number of Byzantine nodes, then most random walks will go through a Byzantine node, and the Byzantine Random Walk Theorem does not work. We conjecture that no {\em fully-distributed} algorithm may tolerate $\omega(n/\log n)$ Byzantine nodes in a sparse network.
Our first protocol reaches close to this limit, i.e., it can tolerate up to $o(n/\log n)$ Byzantine nodes.

\smallskip

\noindent{\bf Almost-Everywhere Relaxations.}
A basic difficulty that we have to overcome when the communication graph is a sparse ($d$-regular) {\em expander}, and there exists a Byzantine adversary controlling some $t \geq d$ nodes, is that the Byzantine adversary can disconnect up to $O(t/d)$ honest nodes. As a result, many basic distributed primitives (such as leader election, broadcast, spanning tree construction, etc.) are impossible to solve in their traditional formulations. Instead, we aim for \emph{almost-everywhere} relaxations of these problems, which build on the following fact. For the Byzantine adversary, controlling up to $O(n / \polylog n)$ nodes, the adversary can partition the honest nodes into a large {\em ``core''} subset of  $n-o(n)$ (honest) nodes and many possibly disconnected small subgraphs of up to $o(n)$ nodes each. Because
of the expander nature of the communication graph, it can be shown that there exists such a core subset of size $n-o(n)$ nodes, which is
a giant component, but also, most importantly, it is an expander.
As a result, when we adapt the traditional formulation to an almost-everywhere formulation, we mean solving the traditional problem only on the large core of honest nodes. The Byzantine Random Walk Theorem also exploits this to show
that the random walks mix well in a large core subset.

\subsection{A High-level Overview of our Approach}
We show that in sparse bounded-degree networks, Almost-Everywhere Byzantine Agreement (AEBA) can be solved by  
fully-distributed protocols in the full information model. We present two protocols, both of which follow the same approach.  The first tolerates up to $o(n/\log n)$ Byzantine nodes and runs in $\tilde{O}(n^2)$ rounds, while the second tolerates slightly less number of Byzantine nodes --- $o(n/\log^2 n)$ --- but runs faster in $\tilde{O}(n)$ rounds. 

\smallskip

\noindent{\bf Rabin's Scheme.} The high-level idea of our approach is conceptually simple and is based
on the well-known randomized agreement scheme of Rabin \cite{Rabin_1983} that has been used
to obtain efficient protocols (running in $\polylog{n}$ rounds) in {\em complete networks} (see e.g.,
\cite{Feldman_1997, Ben-Or_2006, Goldwasser_2006}). However, it is non-trivial to implement this scheme in a {\em sparse (bounded-degree)}
network, which is what this work accomplishes. The basic idea of this scheme, described for
a complete network and modified to flow better into the exposition below (in particular, the original scheme does not need a strong majority), is as follows. For convenience, we assume that the number of Byzantine nodes $t$ is $o(n)$ (although
in a complete network, it works for up to $n/3$ Byzantine nodes in the full information model). 
Each node $v$ holds a binary \emph{decision variable} $b_v \in \{0,1\}$, whose initial value is given as input. 
The protocol operates in phases.
In each phase, node $v$ checks whether a \emph{strong majority} (say, at least a $0.9$ fraction) exists, i.e., whether almost all nodes have the same value in $\{0,1\}$. (This is done by having all nodes broadcast their $b_v$ values.) The number of nodes having the majority is called the {\em tally} of the majority value; for a strong majority, the tally should be at least $0.9n$. If such a value exists, then $v$ changes its decision variable to that value. (Note that there can be only one such value due to the majority.) 
Otherwise, it sets its value to the value given by a {\em global (shared)} common coin. (A common coin takes both 1 and 0 with constant probability.) It can be shown that this protocol converges in $O(\log n)$ phases with high 
probability. The main challenge in Rabin's protocol is implementing the global common coin efficiently. Rabin's paper
assumed that the common coin is given to all nodes by a trusted third party. The works of
\cite{Ben-Or_2006, Goldwasser_2006} showed how a common coin can be generated efficiently (in $O(\log n)$ rounds) by the honest nodes themselves in a {\em complete network} in the full-information model.

\smallskip

\noindent{\bf Implementing Rabin's Scheme in a Sparse Network.}
There are two main challenges in efficiently implementing Rabin's scheme in a {\em sparse} network in a fully-distributed manner.
First, how do nodes compute whether a strong majority exists in a sparse network? Indeed, unlike in a complete network, a node's neighbors are only a small fraction of the network. Second, how does one implement a global common coin (without Byzantine agreement) in a sparse network in the full information model? (Note that cryptographic tools cannot be used in this setting.) 

We detect a strong majority (say, at least a $0.9$ fraction) by doing random sampling. Each honest node samples $\polylog{n}$ random nodes by performing random walks. The random walks are implemented by following the Byzantine random walk protocol. The protocol guarantees
that most honest random walks mix in the core subgraph $C$ and sample close to the stationary distribution of $C$.
Since the core is of size $n-o(n)$, there is (essentially) a strong majority among the core subset, which will be detected by most ($n-o(n)$) of the honest nodes in the core. More importantly,  most honest nodes in the core will never disagree on the strong majority value.
Some may detect that strong majority, but others may not. Then, the latter honest nodes will use the common coin flip. Rabin's protocol ensures that after $O(\log n)$ common coin flips, all honest nodes will converge to AEBA. 

It is important to note that in Rabin's protocol, the global random coins must be generated {\em after} the tally is computed in every iteration. Otherwise, the Byzantine adversary (even a static one) can ensure that Rabin's protocol does not terminate by choosing a majority and tally value based on the coin value.

Indeed, if the global random coins are generated up-front (before any iteration starts), then the Byzantine adversary (even a static one) can ensure that (almost-everywhere) agreement is not reached even if the adversary does not influence any of the global random coins. More concretely, consider the following scenario in a $n$-node clique. (Similar scenarios can be obtained also in sparse networks, assuming nodes sample to detect the majority value, but this would contribute an additional layer of complexity to the argument.) The adversary decides on an initial configuration such that one of the values is held by exactly a 0.9 fraction of the (honest) nodes (i.e., the tally threshold). The random coin value chosen (up-front) for the first iteration is the (initial) minority value with a probability of at least 1/2. Assuming that event happens, the adversary can ensure that no iteration reaches (almost-everywhere) agreement. (Recall that in the clique, all (honest) nodes broadcast their values.) To do so, in the first iteration, the adversary has the Byzantine nodes act as follows (where a Byzantine node may send different values to different neighbors). If the coin value does not change in the next iteration, then the Byzantine nodes send different values to the honest nodes such that exactly a 0.9 fraction of the honest nodes receive more than a 0.9 fraction of the current majority value (among all $n$ values), and thus exactly a 0.9 fraction of the honest nodes hold the current majority value in the next iteration. Otherwise, the byzantine nodes send different values to the honest nodes such that exactly a 0.9 fraction of the honest nodes see strictly less than a 0.9 fraction of the current majority value (among all $n$ values), and thus exactly a 0.9 fraction of the honest nodes hold the current iteration's minority value in the next iteration. The adversary repeats this process in the subsequent iterations. Moreover, the adversary can execute this process with only $O(1)$ Byzantine nodes.

\smallskip

\noindent{\bf  Almost-Everywhere Reliable Information Dissemination (AERID).} Implementing a global common coin is more involved. It uses a key primitive called Almost-Everywhere Reliable Information Dissemination (cf. Section \ref{sec:AERID}).
We show that AERID can be done in $\tilde{O}(n)$ rounds, which is essentially optimal (cf. Footnote \ref{ft:aerid}). As mentioned earlier, AERID is implemented by using the Byzantine Random Walk Protocol (cf. Algorithm \ref{alg:byzantineSamplingSparse}).
The two protocols differ in how the walks are implemented. In the first protocol, each node initiates $\tilde{O}(n)$ random walks (each token contains the source ID and other data from the source node, say, its bit value).
As per the Byzantine Random Walk Protocol, these $\tilde{O}(n)$  random walks are sent in batches (phases) of $\Theta(\log^3 n)$ tokens each; each token walks for $\Theta(\log n)$ steps in a phase. The Byzantine Random Walk Protocol ensures that
in each phase, almost all the tokens from almost all the nodes reach their destinations despite the actions of the Byzantine nodes. The following intuition helps us understand how the Byzantine nodes' actions are contained.
Consider the core subgraph $C$ of $G$ consisting of only honest nodes. The core is a giant component of size $n-o(n)$ and is an expander itself. Hence, a walk started in the core mixes fast (in $\Theta(\log n)$ rounds) if the walk stays in the core.
It is easy to show that most of the walks started in the core nodes walk only in $C$ and thus mix well in the core. 
What about the tokens sent by Byzantine nodes? After all, they may not follow the protocol and can send arbitrarily many tokens. However, the behavior of Byzantine nodes is limited, thanks to the good nodes at the core boundary (i.e., core nodes with Byzantine neighbors), which act as ``guards'' and effectively control the rate of tokens
sent by Byzantine nodes. Nevertheless, the Byzantine nodes can try to fake the tokens of some good nodes, but only
$o(n)$ of them are affected. Since each phase takes only $O(\log n)$ steps, it can be shown (cf. proof of Theorem
\ref{thm:ByzSamplingFinal}) that the protocol can tolerate up to $o(n/\log n)$ Byzantine nodes; the high-level intuition is that the bad tokens can be limited to $o(n/\log n) \cdot O(\log n) = o(n)$.

The second protocol is a bit more complicated and implements AERID differently. This protocol works in $\Theta(\log n)$ stages.
In the first stage, each node initiates $\Theta(\log^{3+2\varepsilon} n)$ tokens for some arbitrarily small constant $\varepsilon > 0$ (where the token contains the source ID and other data). These tokens do random walks for $\Theta(\log n)$ (or core mixing time) steps. 
In each successive stage, each node that was a destination for a token from a particular source initiates a new random walk token per received token. Thus, the number of tokens from each source doubles in each stage. By a careful load balancing argument, it can be shown that at the end of each stage the tokens from each source node are distributed almost uniformly at random among the (destination) nodes.
This is crucial to bind the effect of Byzantine nodes, which can send too many tokens.  We show that in $O(\log n)$ stages,
most of the honest nodes' data is disseminated to (almost) all nodes. And intuitively, since tokens walk $O(\log n)$ steps over $O(\log n)$
stages, the number of Byzantine nodes tolerated is $o(n/\log^2 n)$.
Both protocols implement AERID in $\tilde{O}(n)$ rounds, but the advantage of the second protocol (though it tolerates slightly fewer Byzantine nodes) is that the broadcast time from a single source node is $\tilde{O}(1)$ rounds in contrast to the $\tilde{O}(n)$ rounds for the first protocol. This advantage can be used to implement a common coin much faster than the first, as outlined below.

\smallskip

\noindent {\bf Eventual Almost-Everywhere Common Coin (EAECC).} A main novelty of the protocols lies in the implementation of a (weak variant of) a common coin in a sparse network with up to (essentially) $n/\log{n}$ Byzantine nodes.
Instead of a common coin, we define and use what is called an \emph{Eventual Almost-Everywhere Common Coin (EAECC)}. 
The high-level idea is as follows. Each node, in turn, chooses to generate a random coin value (0 or 1 with equal probability) and broadcasts it to the network. The turn is decided by the rank of the node: an integer chosen uniformly at random in $[1,n]$ by each honest node.\footnote{Throughout, when talking about integers, we use $i \in [1,n]$ to mean $i \in \{1,\ldots,n\}$.}
The main challenge in implementing this idea lies in ensuring an honest node's turn (or coin flip) is not corrupted by Byzantine nodes. Indeed, the Byzantine nodes may try to broadcast arbitrary values and confuse a large portion of the honest nodes about the turn's random coin value. We handle this in the following way by crucially using AERID as a preprocessing step.

In both versions of the AERID protocols (cf. Section \ref{sec:AERID}),  we use the Byzantine Random Walk Theorem to prove that when honest nodes start sufficiently many random walks, most of these walks are completely unaffected by the Byzantine nodes. In both versions, the honest nodes compute these random walks {\em simultaneously as a preprocessing step} where the paths taken by the tokens are recorded
by the nodes. More precisely, each node records the source ID
of the token it receives, the token ID (identifying that token among all others with the same source ID), the random walk step number (number of steps taken by the random walk till now) when it receives the token, the incoming edge through which the token arrived and the forwarding edge of the token (if the current node is not the destination). Thus, effectively, the paths taken by each token from the source to the destination are recorded by the nodes through which the token is traversed.

These recorded paths are crucially {\em reused} to broadcast a common coin value from a particular source (containing the source ID and the random bit). During the broadcast, in each round, only tokens that conform to the recorded paths from this particular source 
node for this round are allowed. Honest nodes at the boundary of the core will enforce the recorded paths and ignore all other messages that do not correspond to these recorded random walks, which is the key to strongly limiting the negative impact Byzantine nodes can have on most broadcasts. 

We note that we precompute the random walks so that disseminating the common coin can be done efficiently. 
Instead, one could compute random walks in each iteration to avoid doing it upfront. However, in each iteration, we must compute random walks starting from many nodes (almost all honest nodes) to deal with the high number of Byzantine nodes, leading to strong inefficiencies. Hence, we precompute all random walks and do so in a special way to reduce congestion.

The above setup allows us to implement an EAECC. 
Indeed, recall that Rabin's protocol succeeds with high probability after $O(\log n)$ random coin flips. But in sparse networks, even in the worst case, Byzantine nodes can target only the common coin flips of the first $n/\log{n}$ ranked honest nodes.
Eventually, there are enough uncorrupted common coin flips that the honest nodes (almost-everywhere) agree, or in other words, the protocol converges.

\subsection{Additional Related Work} \label{sec:related}

The literature on Byzantine agreement is vast (especially on complete networks), and we limit ourselves to those that are most relevant to this work, mainly
focusing on sparse networks.

Most prior works on Byzantine protocols on sparse networks {\em assume} an underlying {\em expander} graph, where the expansion properties prove crucial in solving fundamental problems such as agreement and leader election, see, e.g., \cite{Dwork_1988, Upfal_1994, King_2006_FOCS}.  The protocol of \cite{King_2006_FOCS} builds an underlying communication mechanism where messages can be relayed with only $\polylog{(n)}$ overhead. The issue with all the above protocols, as mentioned earlier, is that they assume that nodes have global knowledge of the network topology to begin with. Such an assumption does not work where nodes start with local knowledge of only themselves and their immediate neighbors, as is common in real-world P2P networks (including those that implement cryptocurrencies and blockchains), which are bounded degree and sparse. 

Berman and Garay \cite{Berman_1993_DC,Berman_1993_MST} improved on the efficiency of Dwork et al \cite{Dwork_1988}. Their main result is an algorithm that achieves consensus in the butterfly network using $O(t + \ln n \ln\ln n)$ one-bit parallel transmission steps while tolerating $t = O(n/ \ln n)$ corrupted processors and having $O(t \ln t)$ confused processors (i.e., uncorrupted processors that have decided on the incorrect bit). The number of rounds, corrupted processors that can be tolerated, and confused processors in this result are all asymptotically optimal for the butterfly network.  
Ben-Or and Ron designed a bounded degree network and an almost-everywhere agreement algorithm that is fully polynomial and tolerates a linear number of faults with high probability if the faulty processors are randomly located throughout the network \cite{Ben-OrR96}.
King et al. \cite{King_2006_SODA} describe protocols for Leader Election and Byzantine Agreement that take polylogarithmic rounds and require each processor to send and process a polylogarithmic number of bits. These protocols only run on {\em complete} networks and do not apply to sparse networks.  

The  work of \cite{Augustine_2013_PODC}   presented a fully-distributed algorithm for  Byzantine agreement 
in the presence of Byzantine nodes and high adversarial churn. The algorithm could tolerate (only) up to  
$\sqrt{n}/\polylog{(n)}$ Byzantine nodes  and up to $\sqrt{n}/\polylog{(n)}$ churn per round and took a $O(\polylog{(n)})$ number of rounds.  
The work of \cite{Augustine_2015_DISC} used the Byzantine agreement protocol of \cite{Augustine_2013_PODC} and designed a fully-distributed algorithm for {\em Byzantine leader election} that could tolerate up to $O(n^{\frac{1}{2} - \epsilon})$ Byzantine nodes (for any fixed positive constant $\epsilon$)  and up to $\sqrt{n}/\polylog{(n)}$ churn per round and took a $\polylog{(n)}$ number of rounds.

The work of Augustine et al. \cite{Augustine_2022_SPAA} constructs a Distributed Hash Table (DHT) in a Peer-to-Peer (P2P)
network in the presence of a large number (up to $n/\polylog{n}$) of Byzantine nodes. The model used in their paper
assumes a reconfigurable network (where a node can add or drop edges to other nodes whose identifier it knows).
Their paper assumes {\em private channels}, which is significantly weaker than the full information model,
since it assumes that communications between honest nodes are unknown to Byzantine nodes.

The work of Dani et al. \cite{hayes-varsha} studies random walks in a network under the presence of adversarial nodes and devises schemes to detect whether the cover time of random walks can be altered by the behavior of the adversarial nodes.

\section{Byzantine Random Walk Protocol and Theorem}
\label{sec:byzantineRandomWalk}
Generating random walks from all nodes, such that the walks mix yet only few ever visit a Byzantine node (and get corrupted), is a core primitive for fully-distributed Byzantine-tolerant algorithms in sparse (expander) networks. Such a primitive allows to reliably sample the graph, or to set up communication between nodes (e.g., for almost-everywhere reliable information dissemination, see Section \ref{sec:AERID}). We present a protocol for this primitive, the Byzantine random walk protocol, and its properties are captured by the ``Byzantine Random Walk Theorem'' (see Theorem \ref{thm:ByzSamplingFinal}). The protocol and its analysis could be of independent interest. 

\subsection{Definitions and Core}
\label{subsec:defCore}

We consider a network graph $G = (V,E)$ with $|V|=n$ and $|E|=m$, such that at most $|B| = o( n/\log n)$ nodes are Byzantine.
The graph $G$ is assumed to be (1) a regular graph with fixed degree $d$, and (2) an expander graph with constant conductance $\phi_G$ and mixing time $\tau = O(\log n)$. Here, the mixing time is defined as $\tau = \arg \min_t (||A^t \pi - \mathbf{u}||_\infty \le 1/n^3)$, where $A$ is the adjacency matrix of $G$, $\pi$ is any arbitrary probability distribution over $V$, and $\mathbf{u}$ is the stationary distribution over $V$. Note that since $G$ is regular, the stationary distribution $\mathbf{u}$ is uniform, otherwise the stationary probability of a node $u$ would be $deg_G(u)/2m$ instead, where $deg_G(u)$ is the degree of $u$ in $G$.  As mentioned in Section \ref{sec:model}, we assume that nodes
have knowledge of $n$ and the conductance (and hence, the mixing time) of $G$.

Now, if one considers only the honest nodes of $G$ then it is known that a subset of them induces an expander subgraph. More concretely, for a  $d$-regular expander $G = (V,E)$ with $d$ a sufficiently large constant, and at most $|B| = o(n)$ Byzantine nodes, Lemma 3 in \cite{Augustine_2015_FOCS} states that for any chosen constant $c < 1$, there exists a subgraph $C$ in $G \setminus B$ that is of size $n-|B| (1+ \frac{1}{\phi_G (1-c)})= n - O(|B|)$ and that has constant conductance $\phi_C = c\phi_G$.  This expander subgraph of $G$ is called the \emph{core} of $G$, and denoted by $C=(V_C,E_C)$. 
Note that the core $C$ consists of only good nodes, and it need not be a regular graph: i.e., for any $u \in V_C$, $1 \leq deg_C(u) \leq d$, where $deg_C(u)$ is the degree of $u$ restricted to $C$. Moreover, since $C$ is an expander, a random walk restricted to $C$ will have mixing time $\tau_C = b\log n$ for some suitably large constant $b$ (depending on $\phi_C$).
Note that $\tau_C = \arg \min_t (||A_C^t \pi - \mathbf{u}||_\infty \le 1/n^3)$, where $A_C$ is the adjacency matrix of core $C$, $\pi$ is any arbitrary probability distribution over $V_C$, and $\mathbf{u}$ is the stationary distribution over $C$,
defined in Lemma \ref{lem:randwalkcore}. To distinguish $\tau$ and $\tau_C$, we refer to $\tau_C$ as the \emph{core mixing time}.

\begin{lemma}
\label{lem:boundCoreEdges}
    Let $\mu = |B|/|C|$. Then, $|C|/2 \leq (1-O(\mu)) d|C|/2 \leq |E_C| \leq d |C|/2$.
\end{lemma}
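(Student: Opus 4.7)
The plan is to prove the three inequalities by counting edges incident to $V_C$, exploiting the fact that $G$ is $d$-regular and that the ``non-core'' vertex set $V \setminus V_C$ has size only $O(|B|)$ by the construction recalled from Lemma 3 of \cite{Augustine_2015_FOCS}.

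First, for the upper bound $|E_C| \le d|C|/2$, I would simply observe that every vertex $u \in V_C$ has $deg_C(u) \le deg_G(u) = d$ since $C$ is a subgraph of $G$ and $G$ is $d$-regular. Summing over $V_C$ and dividing by $2$ yields $|E_C| \le d|V_C|/2$, which is the right-hand inequality.

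For the main inequality $|E_C| \ge (1 - O(\mu)) d|C|/2$, I would use the handshake identity restricted to $V_C$. Writing $\bar V_C := V \setminus V_C$, I have
\[
d|V_C| \;=\; \sum_{u \in V_C} deg_G(u) \;=\; 2|E_C| \;+\; e(V_C, \bar V_C),
\]
where $e(V_C, \bar V_C)$ is the number of edges of $G$ with exactly one endpoint in $V_C$. Each such cross-edge contributes to the degree (in $G$) of some vertex in $\bar V_C$, so $e(V_C, \bar V_C) \le d|\bar V_C|$. By the construction of the core, $|\bar V_C| = n - |V_C| = O(|B|)$, and therefore
\[
2|E_C| \;\ge\; d|V_C| - d \cdot O(|B|) \;=\; d|V_C|\!\left(1 - O\!\left(\tfrac{|B|}{|V_C|}\right)\right) \;=\; (1 - O(\mu))\, d|V_C|,
\]
which gives the middle inequality after dividing by $2$.

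Finally, for the left-most inequality $|C|/2 \le (1-O(\mu)) d|C|/2$, I would note that $\mu = |B|/|V_C| = o(n/\log n) / (n - o(n)) = o(1)$, so the factor $(1 - O(\mu))$ is at least, say, $1/2$ for all sufficiently large $n$. Since the lemma statement presumes $d$ is the sufficiently large constant already fixed in Section \ref{subsec:defCore} (so that the core of \cite{Augustine_2015_FOCS} exists), the product $(1-O(\mu))d$ is at least $1$, yielding the claim. I do not anticipate any real obstacle here; the only mild subtlety is keeping straight that $\bar V_C$ may contain honest as well as Byzantine nodes, but the bound $|\bar V_C| = O(|B|)$ from the core construction is all that matters.
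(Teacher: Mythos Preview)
Your proof is correct and follows essentially the same approach as the paper: bound the cross-edges $e(V_C,\bar V_C)$ by $d|\bar V_C| = d\cdot O(|B|)$ using that only $O(|B|)$ vertices lie outside the core, and plug this into the degree-sum identity for $V_C$. The paper's proof is terser (and slightly sloppy in its wording), but the underlying counting argument is identical to yours.
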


\begin{proof}
    The core consists of at least $|C| - O(|B|)$ nodes. Since each node is incident to at most $d$ edges in $G$, $G$ starts with $d |C|/2$ edges and at most $O(d |B|)$ edges are removed to get the core.
\end{proof}

The core subgraph $C$ plays a key role in this section. Indeed, we are particularly interested in the random walks that walk only in this core during our primitive. After all, such walks do not visit any Byzantine nodes, and are likely to mix (rapidly) within the core. 
Additionally, the core also exhibits interesting properties in the subsequent Byzantine agreement protocol. Indeed, this protocol will guarantee (with high probability) that {\em almost all} nodes in the core reach agreement. It is important to note that (honest) nodes themselves do not know whether they belong to $C$.

\subsection{Byzantine Random Walk Protocol}

We next present a distributed protocol to do random walks in a sparse network under the presence of a large number of Byzantine nodes. Our Byzantine Random Walk protocol is presented in Algorithm~\ref{alg:byzantineSamplingSparse}. The protocol addresses the situation where each (honest) node in the network seeks to initiate a number of independent random walks, up to a maximum of $\tot$ random walks per node. (Note that we allow different nodes to initiate different amounts of random walks to allow for a wider range of applications, in particular for those in Subsection \ref{subsec:congestionAERID}.) The protocol operates in phases of $2f = O(\log n)$ rounds each. Each node generates (up to) $\capa = a \log^{3} n$ tokens per phase (for a large enough constant $a \geq 12 \, c \cdot b^2$, where $c$ is the exponent of the whp guarantee and $b$ characterizes the mixing time $\tau_C$ of the core) and these tokens perform independent random walks on $G$. 
Thus, the protocol will require $O(\tot/\capa)$ phases (or more precisely, $\lceil \tot/\capa \rceil$) to complete the process.  

Importantly, each honest node locally regulates the rate at which the tokens flow in and out of it. Specifically,  at most $\capa$ tokens are allowed to enter/exit the node through each of its incident edges per round. We employ a {\em FIFO buffer} at each incident edge to hold tokens that could not be sent in the current round. As a result, a token may be held back at multiple buffers during the phase. Nevertheless, we show in Theorem~\ref{thm:congestion} that all the random walks that {\em only walk on the core $C$} (called ``good'' random walks) will make \emph{at least $f$ random steps} (or in other words, can only be held back during $f$ rounds) whp. 
Then, it follows that if we choose $f$ to be the mixing time of the core $\tau_C = b\log n$ then this will ensure the mixing of those walks in $C$. We then show that most random walks initiated by nodes in $C$ will walk only in $C$ (see Lemma \ref{lem:manyConditionedWalks}). This implies our Byzantine Random Walk Theorem (see Theorem \ref{thm:ByzSamplingFinal}), which says that most random walks initiated in the core $C$ walk only in $C$ and mix rapidly, at which point they reach the stationary distribution over $C$.

\begin{algorithm}[ht]
\caption{Byzantine Random Walk Protocol for node $v$}
\label{alg:byzantineSamplingSparse}
\begin{algorithmic}[1]

\Require 
\Statex $\tot$ \Comment{Maximum (total) number of tokens to be initiated at $v$.}
\Statex $\capa =  a \log^{3} n$ (for large enough constant $a >0$) \Comment{Number of tokens allowed through an edge in one round.}
\Statex $\rwLength = 2f$ \Comment{Length of each phase to ensure good random walks make $f$ steps.}
\Statex $\outbox_u$ for each neighbor $u$ \Comment{FIFO token buffers stored at $v$, one for each neighbor $u$.}

\State $\numPhases = \lceil \tot/\capa \rceil$ 
\For{$\phaseNum \gets 1$ to $\numPhases$}
\State Create $d \cdot \capa$ tokens. Record ID of $v$ as starting vertex.
\ForAll{tokens $\tok$ that were created}
	\State Pick a neighbor $u$ uniformly and independently at random. 
 	\State Push  $\tok$ into $\outbox_u$.
\EndFor
\For{$\stepNum \gets 1$ to $\rwLength$}
\For{each neighbor $u$}
\State Dequeue up to $\capa$ tokens from $\outbox_u$ (which is a FIFO queue).
\State Record $u$ as the next vertex in the walk taken by each of those tokens.
\State Send each dequeued token to $u$.
\EndFor
\State Receive up to $\capa$ tokens sent by each neighbor and store them in a set $M$. 
\Statex \Comment{Any neighbor that sends  $> \capa$ tokens is blacklisted and heretofore ignored.} 
\For{each $\tok$ in $M$}
	\State Pick a neighbor $u$ uniformly and independently at random.
 	\State Enqueue  $\tok$ into $\outbox_u$.
\EndFor
\EndFor
\EndFor
\end{algorithmic}
\end{algorithm}

Let us begin by establishing the round complexity of Algorithm~\ref{alg:byzantineSamplingSparse}.
\begin{lemma}
\label{lem:byzantineRandomWalkRuntime}
    The overall running time of Algorithm~\ref{alg:byzantineSamplingSparse} is $O( f \cdot  \tot/ \capa)$ rounds.
\end{lemma}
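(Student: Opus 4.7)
The claim is essentially a direct consequence of the structure of Algorithm~\ref{alg:byzantineSamplingSparse}, so my plan is a straightforward round-counting argument, inspecting the two nested loops.

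The plan is to observe that the outer loop runs exactly $\numPhases = \lceil \tot/\capa \rceil$ iterations, and that each iteration (phase) performs a constant amount of local work (creating $d \cdot \capa$ tokens and pushing each into a neighbor's outbox) together with an inner loop of $\rwLength = 2f$ iterations. I would then argue that each iteration of the inner loop takes exactly one communication round: in a single round, a node dequeues up to $\capa$ tokens per neighbor from its outboxes, sends them across, receives up to $\capa$ tokens from each neighbor, and re-enqueues them into (randomly chosen) outboxes. The per-round local bookkeeping (picking a random neighbor for each token, recording IDs/steps, blacklisting neighbors that exceed the cap) is computation internal to the round and does not add to the round count.

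Putting this together, each phase costs $\rwLength = 2f$ rounds, and the total number of rounds is
\[
\numPhases \cdot \rwLength \;=\; \lceil \tot/\capa \rceil \cdot 2f \;=\; O\!\left(f \cdot \tot/\capa\right),
\]
which is the claimed bound. The only small subtlety to flag is that the token creation and initial push at the start of each phase happen in the same round as (or immediately before) the first inner step, so they do not inflate the round count by more than a constant factor; this is absorbed in the $O(\cdot)$. There is no real obstacle here — the lemma is purely a bookkeeping statement about the nested loop structure, and the heavier combinatorial content (that $f$ steps suffice for good walks to mix, congestion control, etc.) is deferred to Theorem~\ref{thm:congestion} and the Byzantine Random Walk Theorem.
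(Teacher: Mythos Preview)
Your proposal is correct and follows essentially the same approach as the paper: counting $\numPhases = \lceil \tot/\capa \rceil$ phases of $\rwLength = 2f$ rounds each to obtain $O(f \cdot \tot/\capa)$. The paper's proof is just a terser two-sentence version of what you wrote.
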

\begin{proof}
The algorithm runs for $\tot / \capa$ phases. As each phase takes $2f$ rounds, the algorithm takes $O( f\cdot  \tot/ \capa)$ rounds.
\end{proof}

Next, we focus on the random walks initiated by all nodes in the core $C$ (defined in Subsection \ref{subsec:defCore}) and that walk \emph{only on this subgraph}. We show that if we start \emph{many} independent random walks from each node in $G$ as per Algorithm \ref{alg:byzantineSamplingSparse}, 
then the random walks that \emph{walk only on $C$} will walk at least $f = \tau_C$ steps (the core mixing time).

\begin{theorem} \label{thm:congestion}
Let $C$ be the core of $G$ (consisting of honest nodes only). All random walks initiated at the start of each phase, and that {\em walk only on $C$}, will walk at least $f = \tau_C$ steps (whp) and hence will mix in $C$. 
\end{theorem}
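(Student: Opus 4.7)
The plan is to establish a congestion bound showing that every FIFO outbox at every honest node holds at most $(1 + o(1)) \capa$ tokens throughout the phase with high probability. Given such a bound, each token enqueued into an outbox is served within one round of the natural pipeline (with at most $o(1)$ additional rounds of delay per buffer), so a token whose walk stays entirely in the core $C$ completes at least $f$ of the $2f$ scheduled steps.

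First I would control the inflow at every honest node. By the blacklist rule in Algorithm~\ref{alg:byzantineSamplingSparse}, each neighbor (honest or Byzantine) is permitted to send at most $\capa$ tokens per round, so any honest node $u$ receives at most $d \capa$ tokens in round $t$. These received tokens, together with $u$'s own initial $d \capa$ tokens at the start of the phase, are each assigned to one of $u$'s $d$ outboxes uniformly and independently at random. Hence, conditional on the total number of received tokens, the per-round inflow to each outbox $\outbox_v$ is a Binomial random variable with mean at most $\capa$, and by a Chernoff bound the single-round fluctuation is $O(\sqrt{\capa \log n}) = O(\log^2 n)$ whp.

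Second, I would prove the queue bound inductively over rounds. Let $S_t(u,v)$ denote the size of $\outbox_v$ at $u$ at the end of round $t$. Since the outflow per round is exactly $\capa$ (whenever $S_t \ge \capa$) and the expected inflow is at most $\capa$, the queue behaves like a reflected random walk with non-positive drift. Summing fluctuations across the $2f = O(\log n)$ rounds gives a cumulative deviation of $O(\sqrt{2f \cdot \capa \log n}) = O(\log^{5/2} n) = o(\capa)$, using $\capa = a \log^3 n$ with $a$ chosen large enough. A union bound over the $O(dn)$ outboxes and $2f$ rounds then yields $S_t(u,v) \le (1+o(1))\capa$ for every outbox at every time whp. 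Consequently any token enqueued into an outbox in round $t$ sits at FIFO position at most $(1+o(1))\capa$ and is dequeued in round $t+1$, save for $o(1)$ expected additional rounds per buffer visit. Since a walk visits at most $2f$ buffers, its total delay is $o(f)$, so it performs at least $2f - o(f) \ge f$ actual hops, which by definition of $\tau_C$ suffices for the walk to mix in $C$.

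The main obstacle is the dependency structure of the random variables across rounds: the tokens received by $u$ at round $t$ depend on routing choices made at other honest nodes in earlier rounds, and on arbitrary adversarial behavior. I would overcome this by a stepwise induction, conditioning on the queue-bound event holding in rounds $1, \ldots, t-1$ and then applying a Chernoff bound to the fresh uniform routing decisions at round $t$ (which, by the protocol, are independent of everything else given the current multiset of tokens at each node). Crucially, the blacklist mechanism decouples the honest queue dynamics from adversarial strategy, since Byzantine nodes cannot inflate per-edge inflow beyond $\capa$; and the random next-hop choices taken by a walk are independent of the congestion process, so the events ``walk stays in $C$'' and ``all queues remain bounded'' are handled in parallel without circular conditioning.
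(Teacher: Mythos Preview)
Your approach is essentially the same as the paper's: both use the blacklist rule to cap per-round inflow at $d\cdot\capa$, then apply a Chernoff bound to the uniform assignment into the $d$ outboxes to control queue sizes, and conclude that a walk staying in $C$ spends at most two rounds per hop and hence completes $f$ hops in $2f$ rounds. The paper tracks a single walk and bounds the cumulative excess arriving at each outbox it visits by $\capa$ over the phase; you instead bound every queue globally at $(1+o(1))\capa$. Either bookkeeping works.

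One small wrinkle: your claim of ``$o(1)$ expected additional rounds per buffer visit'' and hence ``total delay $o(f)$'' does not follow from the queue bound $S_t \le (1+o(1))\capa$. A token enqueued at position strictly greater than $\capa$ waits a full extra round, and nothing in your argument shows this happens only an $o(1)$ fraction of the time (a reflected walk centered at $\capa$ will exceed $\capa$ a constant fraction of rounds). The fix is immediate, however: your bound already gives $S_t \le 2\capa$, so each hop costs at most two rounds, and in $2f$ rounds the walk completes at least $f$ hops. That is exactly the paper's final step, so the proof goes through with this correction.
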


\begin{proof} 
Consider a random walk $w$ that walk only in $C$, or more precisely, that walks on nodes $u_1, u_2, \ldots,u_i, \allowbreak \ldots, u_t$ during the phase, with $u_1$ being the node that initiated $w$ and $u_t$ being the node where it terminated. When the walk enters each $u_i$, there are at most $d\cdot \capa$ random walks that are allowed to enter (from all incident edges) because $u_i$ will discard any excess walks and blacklist any neighbor having sent more than $\capa$ tokens. 

Then, $w$ is placed on $\outbox_{u_{i+1}}$ that was chosen randomly by $u_i$. 
Of course, every other walk (regardless of whether it was initiated by an honest node or a Byzantine node) is also placed randomly in one of the $d$ outboxes. 
Therefore, even assuming the full set of $d \cdot \capa$ walks arrived at $u_i$ along with (and including) $w$, the number of walks placed into $\outbox_{u_{i+1}}$ is a binomial random variable with parameters $d \cdot \capa$ and $1/d$, thus having a mean of $\capa$.
Importantly, when $\capa$ is $a \log^{3} n$, the probability that the number of walks placed into $\outbox_{u_{i+1}}$ will exceed $a \log^{3} n + (a/2b) \log^{2} n = (1+1/(2 b \log n)) \cdot a \log^{3} n$ is at most 
\begin{equation*}
  e^{-(a \log^{3} n)\cdot (1/(4 b^2 \log^2 n))/3} = n^{- a/(12 b^2)}
\end{equation*}
by Chernoff bounds (see Theorem 4.4 in \cite{MitzenmacherUpfalBook}). (Note that since $b \geq 1$, $1/(2 b \log n) \leq 1$ for any $n \geq 2$.) As a result, with probability at most $n^{- a/(12 b^2)}$, 
the excess number of random walks placed into $\outbox_{u_{i+1}}$ (i.e., in addition to the mean $\capa$) is at most $(a/2b) \log^2 n$ per round. With $\rwLength = 2 f = 2 b \log n$, the number of such excess walks placed in $\outbox_{u_{i+1}}$ in the whole phase is at most $2 b \log n \cdot (a/2b) \log^2 n = \capa$ with probability at most $n^{- a/(12 b^2)}$, or in other words, whp for a constant $a$ chosen large enough compared to $b$.

With the excess smaller or equal to $\capa$ (whp), the walk $w$ will buffer at $\outbox_{u_{i+1}}$ for at most one round before moving on to $u_{i+1}$. Thus, even if $w$ were unlucky and took two rounds at each node, in $2f$ steps, it would have taken the requisite $f$ random walk steps to ensure mixing. Thus, we can ensure that all random walks that only walked in $C$ for $2f$ rounds will take at least $f$ random walk steps. 
\end{proof}

Moreover, the random walks that walk only in the core $C$ satisfy the following property: their stationary probability at any node $v \in C$ is within a constant factor of the uniform distribution on $C$.

\begin{lemma}[Random walk conditioned on walking in $C$]
\label{lem:randwalkcore}
Consider, for any core node $v \in C$, any random walk that starts at $v$ and walks \emph{only} through the nodes of $C$ for (core mixing time) $\tau_C$ steps. Then, the probability that the walk is at node $u \in C$ after $\tau_C$ steps (or more) is $p^{\geq \tau_C}(u) = \frac{deg_C(u)}{2|E_C|} \pm \frac{1}{n^3} = \Theta(1/|C|) = \Theta(1/n)$.
\end{lemma}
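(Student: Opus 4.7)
The plan is to derive the lemma as a direct application of the core mixing time property, together with elementary bounds from Lemma \ref{lem:boundCoreEdges}. First, I would argue that a random walk started at $v\in C$ which traverses only nodes of $C$ can be analyzed as the standard random walk on $C$ viewed as a standalone graph: from each node $u\in C$, the walker effectively picks a uniformly random $C$-neighbor. The transition matrix is $P_C = D_C^{-1} A_C$ (with $A_C$ the adjacency matrix of $C$ and $D_C$ its degree diagonal matrix), and by standard reversible-chain theory on graphs the unique stationary distribution is $\pi_C(u) = deg_C(u)/(2|E_C|)$.

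Next, I would invoke the core mixing time directly. By the definition of $\tau_C$ given in Subsection~\ref{subsec:defCore}, after $\tau_C$ steps the distribution of the walk is within $1/n^3$ in $\ell_\infty$-norm of the stationary $\pi_C$, regardless of the starting distribution. Specializing the starting distribution to the point mass $\delta_v$ yields, for every $u\in C$,
\[
p^{\tau_C}(u) \;=\; \frac{deg_C(u)}{2|E_C|} \;\pm\; \frac{1}{n^3}.
\]
Since the walk remains within the same $1/n^3$ neighborhood of stationarity at every subsequent step, the ``$\geq \tau_C$ steps'' version in the lemma follows immediately.

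To conclude $p^{\geq\tau_C}(u) = \Theta(1/|C|) = \Theta(1/n)$, I would use Lemma~\ref{lem:boundCoreEdges}, which gives $|E_C| = \Theta(d|C|)$. Because $G$ is $d$-regular with $d$ constant and $C$ is connected, $1 \le deg_C(u) \le d$, so $deg_C(u)/(2|E_C|) = \Theta(1/|C|)$. Finally, $|C| = n - O(|B|) = n - o(n) = \Theta(n)$ since $|B| = o(n/\log n)$, completing the chain of equalities.

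The one conceptual point that has to be handled carefully is justifying that the walk in the lemma (a trajectory that lies entirely inside $C$) is equivalent, for the purpose of its endpoint distribution, to the standard random walk on the subgraph $C$. This is the main place where some care is needed: the equivalence follows by noting that the only way for the walk to take a step and remain in $C$ is to select a $C$-neighbor, and by symmetry of uniform neighbor selection the restricted process is the standard walk on $C$ with transition kernel $P_C$. Once this identification is made, the rest of the argument is a clean application of the mixing-time definition and the degree/edge bounds in Lemma~\ref{lem:boundCoreEdges}.
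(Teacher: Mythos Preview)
Your proposal is correct and follows essentially the same route as the paper's own proof: identify the walk restricted to $C$ with the simple random walk on the subgraph $C$, invoke the definition of $\tau_C$ to get the $\pm 1/n^3$ deviation from the stationary distribution $deg_C(u)/(2|E_C|)$, and then use Lemma~\ref{lem:boundCoreEdges} together with $1\le deg_C(u)\le d$ and $|C|=n-O(|B|)=\Theta(n)$ to conclude $\Theta(1/|C|)=\Theta(1/n)$. The only cosmetic difference is that the paper also cites Theorem~\ref{thm:congestion} (to tie back to the protocol ensuring $\tau_C$ actual steps), which you omit; this is not needed for the lemma as stated, so your write-up is if anything slightly cleaner.
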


\begin{proof}
Consider a walk that starts at a node $v \in C$ and walks only on nodes in $C$. Conditioning that the walk only uses edges in $C$, then it holds that at any node in $C$, the walk chose a {\em uniform random outgoing edge among edges in $C$}. Hence, the conditioned random walk is a standard random walk on $C$. Since $C$ is an expander, or more precisely $C$ has constant conductance $\phi_C$, the (conditioned) random walk on $C$ mixes in $\tau_C$ steps and reaches close to the {\em stationary distribution} in $C$ (up to $1/n^3$ error, as we have defined in Subsection \ref{subsec:defCore}). In particular, the stationary probability of node $u \in C$ is $deg_C(u)/(2|E_C|)$. Now, we can apply Theorem \ref{thm:congestion} where $C$ is the honest subset of nodes and $f = \tau_C = b\log n$ is the mixing time of $C$. As a result, the probability that the walk ends up at a node $u \in C$ is proportional to its degree $deg_C(u)$ (where $1 \leq deg_C(u) \leq d$) divided by the number of edges in $C$ (and up to $1/n^3$ error) which is $\Theta(d|C|/2) = \Theta(n)$ (by Lemma \ref{lem:boundCoreEdges}, and as $|B| = o(n)$ and $|C| = n - O(|B|) = \Theta(n)$). Hence, the probability  that the walk ends at $u$  is $\Theta(1/n)$.
\end{proof}

We next show that with high probability, most of the random walks initiated by nodes in $C$ satisfy the conditioning of the above lemma: i.e., with high probability, most walks will walk within $C$.

\begin{lemma}
\label{lem:manyConditionedWalks}
Let $\kappa = (|B| \log n)/|C|$, and let $R(C)$ denote the total number of tokens initiated by the nodes in the core $C$. Recall that each (honest) node initiates a maximum of $\tot$ (good) tokens in Algorithm \ref{alg:byzantineSamplingSparse}. Then, at most $O(\kappa |C| \tot)$ tokens enter or leave $C$, and at least $R(C) - O(\kappa |C| \tot)$ tokens walk only in $C$ (i.e., are good). Moreover, these good tokens walk at least $\tau_C$ steps whp.
\end{lemma}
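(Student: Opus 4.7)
The plan is to bound the total number of tokens that can cross the boundary between $C$ and $V \setminus C$ over the entire execution, and then to use Theorem~\ref{thm:congestion} to argue about the step count of the surviving (good) tokens. The key observation is that every honest node enforces a per-edge per-round cap of $\capa$ via blacklisting, so any edge incident to a node in $C$ carries at most $\capa$ tokens per round in each direction, \emph{regardless} of how many tokens a Byzantine or non-core neighbor tries to inject.

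First, I would bound the number of boundary edges, i.e., edges with one endpoint in $C$ and the other in $V\setminus C$. Since $|V \setminus C| = |B| + O(|B|) = O(|B|)$ (the core construction from \cite{Augustine_2015_FOCS} removes only $O(|B|)$ honest nodes in addition to $B$) and $G$ is $d$-regular with $d = O(1)$, the number of boundary edges is at most $d \cdot |V \setminus C| = O(|B|)$.

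Second, I would bound the total number of rounds during which tokens can cross these boundary edges. By Lemma~\ref{lem:byzantineRandomWalkRuntime}, Algorithm~\ref{alg:byzantineSamplingSparse} takes $O(f \cdot \tot/\capa) = O(\tau_C \cdot \tot/\capa)$ rounds. For each boundary edge, the honest endpoint in $C$ sends out at most $\capa$ tokens per round (by the protocol) and accepts at most $\capa$ tokens per round from its non-core neighbor (by blacklisting any neighbor that exceeds this bound). Multiplying boundary edges, per-edge per-round capacity, and number of rounds yields a total of $O(|B|) \cdot \capa \cdot O(\tau_C \cdot \tot/\capa) = O(|B| \cdot \tau_C \cdot \tot) = O(|B| \log n \cdot \tot) = O(\kappa |C| \cdot \tot)$ tokens that can ever enter or leave $C$, proving the first claim.

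Third, since a token initiated in a core node is \emph{good} (i.e., walks only in $C$) unless it crosses the boundary into $V \setminus C$ at some point, the number of bad tokens originating in $C$ is upper bounded by the total cross-boundary traffic of $O(\kappa|C| \cdot \tot)$. Therefore at least $R(C) - O(\kappa|C| \cdot \tot)$ of the tokens initiated in $C$ stay within $C$ for their entire walk. Finally, since these good tokens satisfy the hypothesis of Theorem~\ref{thm:congestion} (their trajectory stays in $C$ for the full $2f$ rounds of their phase), that theorem immediately gives that they take at least $\tau_C$ random-walk steps whp, completing the proof. The main subtle point, and the only place care is needed, is to verify that the per-edge blacklisting mechanism truly caps the cross-boundary token flow at $\capa$ per round from \emph{both} directions on boundary edges; once that is secured, the rest is a straightforward counting argument combined with Theorem~\ref{thm:congestion}.
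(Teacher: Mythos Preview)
Your proposal is correct and follows essentially the same argument as the paper: bound the cut between $C$ and $V\setminus C$ by $O(|B|)$ edges, use the per-edge per-round cap of $\capa$ enforced by honest core nodes (in both directions via blacklisting), multiply by the total number of rounds $O(\tau_C \cdot \tot/\capa)$, and then invoke Theorem~\ref{thm:congestion} for the step count of the surviving good tokens. The only cosmetic difference is that the paper first computes the crossing bound per phase and then multiplies by the number of phases, whereas you go directly via the total round count; the arithmetic is identical.
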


\begin{proof}
First, we upper bound the number of tokens that enter or leave $C$ in a phasei.e., during the course of the $b\log n$ random walk steps of Algorithm \ref{alg:byzantineSamplingSparse}. To do so, we examine the cut between $C$ and $V-C$. The number of edges crossing this cut is $O(|B|)$. Hence, the number of tokens entering or leaving $C$ during any one round is $O(|B|\log^{3} n)$, since the maximum number of tokens that can go through an edge (from an honest sender or to an honest sender) in any round is $\capa = a \log^{3} n$. Thus, over a phase consisting of $2b\log n$ rounds, the number of tokens entering or leaving $C$ is $O(|B|b\log n \cdot \capa) = O( \kappa |C| \, \capa)$ for $\kappa = (|B| \log n)/|C|$. Over all $\tot/\capa$ phases, the number of tokens entering or leaving $C$ is $O(\kappa |C| \tot)$.

Recall that $R(C)$ denotes the total number of tokens initiated by the nodes in the core $C$. Thus, from the above, $R(C) - O(\kappa |C| \tot)$ tokens walk only in $C$. By the definition of Algorithm \ref{alg:byzantineSamplingSparse} and Theorem \ref{thm:congestion}, all of these good tokens (walking only in $C$) complete $\tau_C = b \log n$ steps in $2\tau_C$ rounds, whp. 
\end{proof}

Theorem \ref{thm:congestion} and Lemmas \ref{lem:byzantineRandomWalkRuntime}, \ref{lem:randwalkcore} and \ref{lem:manyConditionedWalks} together imply the following theorem.

\begin{theorem}[Byzantine Random Walk Theorem] \label{thm:ByzSamplingFinal}
Let $G$ be an expander graph with $n$ nodes, out of which some $o(n/\log n)$-sized subset $B$ of nodes are Byzantine.
Let $C$ be the core of $G$, defined in Subsection \ref{subsec:defCore}, with a mixing time of $\tau_C = b\log n$. Let each (good) node in $G$ (and hence $C$) initiate at most $\tot$ tokens and send them in batches of $\capa = \Theta(\log^{3} n)$ via (independent) random walks for $\rwLength = 2\tau_C$ rounds using Algorithm \ref{alg:byzantineSamplingSparse}. Let $R(C)$ denote the total number of tokens initiated by the nodes in the core $C$. Then the following statements hold whp:
\begin{enumerate}
\item At most $O(\kappa \cdot  \tot \, |C|)$ tokens enter or leave the core.
\item At least $R(C) - O(\kappa \cdot  \tot \, |C|)$ tokens walk only in the core $C$, where $\kappa = (|B| \log n)/|C|$. Moreover, all of these tokens walk at least $\tau_C$ steps whp, and they finish their walks in at most $O(\frac{ \tot}{\capa}\tau_C) = O( \tot/\log^{2}n)$ rounds. 
\item Additionally, the probability that each such token ends at any given node $u \in C$ is $\frac{deg_C(u)}{2|E_C|} \pm \frac{1}{n^3} = \Theta(1/|C|) = \Theta(1/n)$, where $deg_C(u)$ is the degree of node $u$ restricted to the core subgraph $C$ and $E_C$ is the edge set of $C$. 
\end{enumerate}
\end{theorem}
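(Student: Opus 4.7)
The proof proposal is essentially an assembly of the three preceding results into a single unified statement, so the plan is to stitch them together carefully and verify the parameter accounting.

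For Part 1, I would invoke Lemma \ref{lem:manyConditionedWalks} directly. The key observation is that the cut between $C$ and $V \setminus B \setminus C$ (together with the boundary with $B$) has only $O(|B|)$ edges, and each edge carries at most $\capa$ tokens per round because honest boundary nodes enforce the blacklisting rule. Over a phase of $2\tau_C = O(\log n)$ rounds this gives $O(|B| \log n \cdot \capa) = O(\kappa \, |C| \, \capa)$ tokens crossing the core boundary, and multiplying by the $\lceil \tot/\capa \rceil$ phases yields the claimed $O(\kappa \, \tot \, |C|)$ bound.

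For Part 2, the first assertion (the lower bound on the number of tokens walking only in $C$) is the complementary statement from Lemma \ref{lem:manyConditionedWalks}: $R(C)$ tokens are initiated inside $C$, and any that never crosses the boundary remains in $C$. The second assertion (each such good token makes at least $\tau_C$ steps per phase) is exactly Theorem \ref{thm:congestion}; I would recall that its proof relies on a Chernoff bound showing that the excess over $\capa$ accumulated in any outbox over the $2\tau_C$ rounds of a phase is itself at most $\capa$ whp, so each good walk is delayed at most one round per step and therefore completes $\tau_C$ true random-walk steps within the phase. The round complexity bound follows from Lemma \ref{lem:byzantineRandomWalkRuntime} by plugging in $f = \tau_C = b \log n$ and $\capa = \Theta(\log^3 n)$, which gives $O(f \cdot \tot / \capa) = O(\tot / \log^2 n)$.

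For Part 3, I would apply Lemma \ref{lem:randwalkcore} to each good token identified in Part 2: conditioning on a walk staying in $C$ turns it into a standard random walk on $C$, which by the definition of the core mixing time $\tau_C$ reaches the stationary distribution on $C$ to within $1/n^3$ in $\ell_\infty$-norm after $\tau_C$ steps. The stationary probability at $u \in C$ is $\deg_C(u) / (2|E_C|)$, and combining with Lemma \ref{lem:boundCoreEdges} (which gives $|E_C| = \Theta(d |C|/2) = \Theta(n)$ since $|B| = o(n)$) yields the $\Theta(1/n)$ bound.

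The only subtle point, and what I expect to be the main obstacle, is making the whp qualifier work uniformly. The congestion bound of Theorem \ref{thm:congestion} holds whp at each node transition, and we are effectively union-bounding over all good tokens across all phases; this is at most a polynomial number of events because the total number of tokens handled in any phase is $O(n \log^3 n)$ and there are $\tot/\capa = \poly(n)$ phases. Choosing the constant $a$ in $\capa = a \log^3 n$ sufficiently large (relative to $b$ and the desired whp exponent $c$) drives the per-event failure probability to $n^{-c'}$ for arbitrarily large $c'$, so the union bound preserves the whp guarantee for all three parts simultaneously. Once this accounting is in place, the theorem follows by combining the three lemmas in one statement.
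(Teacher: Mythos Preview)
Your proposal is correct and follows essentially the same approach as the paper, which simply states that the theorem is implied by Theorem~\ref{thm:congestion} together with Lemmas~\ref{lem:byzantineRandomWalkRuntime}, \ref{lem:randwalkcore}, and~\ref{lem:manyConditionedWalks}. Your write-up is in fact more detailed than the paper's own proof, including the explicit union-bound accounting for the whp qualifier, which the paper leaves implicit.
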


Note that in the Byzantine Random Walk Theorem, $\kappa = (|B| \log n)/|C|$ stands as an upper bound on the fraction of tokens walking out of, or coming into, the core among the (maximum number of) tokens generated by the core. For some applications, it suffices to have $\kappa = o(1)$, or in other words, $|B| = o(n/\log n)$. For others, $\kappa$ should be even smaller. Indeed, in Subsection \ref{subsec:congestionAERID}, we use $\kappa = o(1/\log n)$.

\section{Almost-Everywhere Reliable Information Dissemination}
\label{sec:AERID}
Reliable communication between honest nodes is key to the design of Byzantine-tolerant algorithms, and generally amounts to some variant of reliable broadcast. However, as we have mentioned previously, it is impossible to solve broadcast in sparse networks since Byzantine nodes may isolate a certain number of honest nodes. This motivates us to consider to relax the idea of reliable communication. The relaxed reliable communication is the \emph{almost-everywhere broadcast} primitive, which ensures a node successfully communicates its message to at least $n-o(n)$ honest nodes. A formal definition is given below.

\begin{definition}[Almost-everywhere broadcast]
Let $G=(V,E)$ be a graph on $|V| =n$ nodes out of which up to $o(n)$ nodes can be Byzantine. Let some (honest) node $v \in V$ have a piece of data (say a bit) that it wants to reliably disseminate to all other nodes. A protocol solves {\em Almost-Everywhere Broadcast} if there exists a large enough (receiving) subset $R \subseteq V$ with $|R| = n - o(n)$, such that for any node $u \in R$, $u$ is able to reliably receive the message that originated at $v$ (and associates it with the ID of $v$).
\end{definition}

Rather than a single honest node, it may be the case that many (or all) honest nodes attempt to almost-everywhere broadcast (simultaneously or not). In which case, reliable communication amounts to \emph{Almost-Everywhere to Almost-Everywhere Reliable Information Dissemination (AERID)}: that is, most honest nodes (called broadcasting nodes) can reliably transmit their messages to most honest nodes (called receiving nodes). Interestingly, we do not require that nodes know whether they are part of the broadcasting or receiving subsets. We formally define the AERID primitive below.

\begin{definition}[Almost-Everywhere to Almost-Everywhere Reliable Information Dissemination (AERID)]
Let $G=(V,E)$ be a graph on $|V| =n$ nodes  out of which up to $o(n)$ nodes can be Byzantine. Each (honest) node in $G$ has a piece of data (say a bit) that it wants to reliably disseminate to all other nodes. A protocol solves {\em Almost-Everywhere to Almost-Everywhere Reliable Information Dissemination (AERID)} if there exists a large enough (broadcasting) subset $V' \subseteq V$ with $|V'| = n-o(n)$, and a large enough (receiving) subset $R \subseteq V$ with $|R| = n - o(n)$, such that for any two nodes $u \in R$ and $v \in V'$, $u$ is able to reliably receive the message that originated at $v$ (and associates it with the ID of $v$).
\end{definition}

Note that in the above definition, multiple pieces of data from different nodes can transit through the network simultaneously. In some applications, it is important that a single node's piece of data transits around the network at a time. One such example is our eventual almost-everywhere common coin primitive in Section \ref{sec:randomCoin}. In fact, in Section \ref{sec:randomCoin}, we show that our AERID primitives can serve as a preprocessing step, after which nodes can, one at a time, (attempt to) almost-everywhere broadcast. Crucially, they do so by reusing the communication done during the preprocessing step. And as a result, although not all honest nodes are guaranteed to almost-everywhere broadcast successfully, at least $n-o(n)$ succeed in almost-everywhere broadcasting. 

Subsection \ref{subsec:informationDissemination} presents an AERID protocol with $\tilde{O}(n)$ runtime and tolerating up to $|B| = o(n/ \log n)$ Byzantine nodes. This AERID protocol leads to a costly $\tilde{O}(n)$ runtime for each almost-everywhere broadcast, mainly due to congestion issues, resulting in $\tilde{O}(n)$ runtime for the coin flips in Section \ref{sec:randomCoin} and thus to an $\tilde{O}(n^2)$ runtime for the corresponding almost-everywhere Byzantine agreement (AEBA) protocol (see Section \ref{sec:BASparse}). On the other hand, Subsection \ref{subsec:congestionAERID} gives an AERID protocol with $\tilde{O}(n)$ runtime and tolerating only up to $|B| = o(n/ \log^2 n)$ Byzantine nodes, but improves on the first protocol because it leads to $\tilde{O}(1)$ runtime for the coin flips in Section \ref{sec:randomCoin} and thus to an $\tilde{O}(n)$ runtime for the corresponding AEBA protocol.

\subsection{Fully-Distributed Reliable Information Dissemination}
\label{subsec:informationDissemination}

We start with a relatively simple AERID protocol. It is, in essence, the Byzantine Random Walk protocol (see Algorithm~\ref{alg:byzantineSamplingSparse}) with some slight modifications. In more detail, each node initiates $\Theta(n \log n)$ tokens, which contain (unlike Algorithm~\ref{alg:byzantineSamplingSparse}) its ID and the information to broadcast. Each token executes some $\Theta(\log n)$ length random walks (as indicated by Algorithm~\ref{alg:byzantineSamplingSparse}). Once the algorithm is done, each (honest) node $u$ takes the message broadcast by some node $v$ to be the majority information over all tokens with $v$'s ID received by $u$.

The correctness of this AERID protocol builds upon the following intuition. Say all (honest) nodes initiates $\tot = \Omega(n \log n)$ tokens. and these tokens are distributed according to Algorithm~\ref{alg:byzantineSamplingSparse}. Recall that there exists a core expander subgraph $C$ of $G$ (see Subsection \ref{subsec:defCore}) containing only honest nodes, and that walks (or tokens) in Algorithm~\ref{alg:byzantineSamplingSparse} are said to be good if they start and walk only in $C$. Then the Byzantine Random Walk Theorem (see Theorem \ref{thm:ByzSamplingFinal}) ensures that most of the tokens (executing random walks of length $\Theta(\tau_C) = \Theta(\log n)$) initiated by good nodes in $C$ remain and mix in the core $C$. As a result of never visiting any Byzantine node, their content (i.e., ID and information to broadcast) is never corrupted.
Additionally, the mixing and the choice of $\tot$ ensures that these good tokens (initiated by nodes in $C$ and walking only in $C$) will evenly spread (and the uncorrupted information they contain) out through $C$ in good enough quantities. On the other hand, because there are far fewer bad tokens (i.e., tokens that exit or enter $C$) than good tokens, Byzantine nodes cannot spread out (bad) information in the same quantities to as many nodes in $C$. Next, we give a formal statement (Theorem \ref{thm:AEtoAEBroadcast}) capturing the above intuitions.

\begin{theorem}
\label{thm:AEtoAEBroadcast}
    With high probability there exists two large enough subsets, a broadcasting subset $C_b \subseteq C$ with $|C_b|  = |C| - o(|C|)$, and a receiving subset $C_r \subseteq C$ with $|C_r|  = |C| - o(|C|)$, such that for any $v \in C_b$ and $u \in C_r$, $u$ receives $\Theta(\log n)$ good tokens initiated by $v$ and $o(\log n)$ bad tokens originating supposedly from $v$.
\end{theorem}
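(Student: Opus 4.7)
My plan is to split the theorem into a lower bound on good tokens and an upper bound on bad tokens, using the Byzantine Random Walk Theorem (Theorem \ref{thm:ByzSamplingFinal}) and Lemma \ref{lem:randwalkcore} as the main tools, combined with Chernoff concentration and Markov-type pigeonhole arguments. For the broadcasting set $C_b$, I start from Theorem \ref{thm:ByzSamplingFinal}, which says that the total number of tokens initiated by nodes in $C$ ever leaving $C$ is $O(\kappa \cdot \tot \cdot |C|)$, where $\kappa = |B|\log n /|C| = o(1)$. A Markov-type pigeonhole then shows that all but $O(\sqrt{\kappa}\, |C|) = o(|C|)$ nodes $v \in C$ have at least $M_v \geq (1 - \sqrt{\kappa})\tot = \Theta(n \log n)$ good tokens (tokens walking only in $C$); I take $C_b$ to be this set. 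By Lemma \ref{lem:randwalkcore}, each good token from such a $v$ ends at any fixed $u \in C$ with probability $\Theta(1/n)$, so $u$ expects $\Theta(\log n)$ good tokens from $v$. Independence of the good random walks lets a Chernoff bound drive the failure probability per pair below $1/n^{c'}$ for any desired constant $c'$ (by enlarging the hidden constant in $\tot$), and a union bound over the $O(n^2)$ pairs $(v,u) \in C_b \times C$ certifies the good-token lower bound simultaneously whp.

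The bad-token upper bound is the heart of the proof. A flow-counting argument across the $C$--$B$ boundary (using the per-edge capacity $\capa$, phase length $\rwLength$, and total number of phases $\numPhases$ enforced by Algorithm \ref{alg:byzantineSamplingSparse}) gives a global bad-token count $T = O(|B|\, n \log^2 n) = o(n^2 \log n)$. Let $W_v$ denote the total number of received tokens that are bad and bear label $v$; then $\sum_v W_v \leq T$. A second Markov pruning of $C_b$ discards at most $o(|C|)$ further sources so that every remaining $v \in C_b$ has $W_v = O(\sqrt{\kappa}\, n \log n)$, yielding an average of $\sqrt{\kappa}\log n = o(\log n)$ bad tokens per $(v,u)$ pair. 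To turn this average into a statement uniform in $(v,u)$ with a single $C_r$, I plan to decompose the bad tokens labeled $v$ into two categories. Category (i): tokens that, after their last visit to $B$, execute at least $\tau_C$ further random-walk steps inside $C$; by Lemma \ref{lem:randwalkcore} their endpoints are near-uniform on $C$, so each $u$ receives $W_v/|C| \pm o(\log n) = o(\log n)$ such tokens by a Bernstein-type bound (leveraging the $\kappa^{\Omega(1)}$ slack), and a union bound over $(v,u)$ pairs absorbs the failure probability. Category (ii): tokens with fewer than $\tau_C$ remaining random-walk steps after their last exit from $B$; these are injected through the at most $O(|B|)$ boundary edges, and standard random-walk tail bounds confine their endpoint distribution close to the injection point, so that for any $u$ outside an $o(|C|)$-sized neighborhood of the boundary the number of Category (ii) tokens labeled $v$ arriving at $u$ is $o(\log n)$. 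Excluding this neighborhood from $C_r$ still leaves $|C_r| = |C| - o(|C|)$.

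The main obstacle I expect is making Category (ii) precise: I need to track, for each unmixed bad token, how many rounds remain in its current phase after its last entry into $C$, then bound the spectral/random-walk spread of the resulting endpoint distribution, summing over all injection events and all boundary nodes. This will likely require either a spectral-norm argument on truncated random walks in the core (leveraging the $o(1)$ slack that $\kappa$ provides) or a direct combinatorial counting that splits tokens by remaining walk length while using the per-edge capacity $\capa$ to cap the rate of injection. Once Category (ii) is in hand, the theorem follows by setting $C_b$ to be the doubly-pruned broadcasting set and $C_r$ to be $C$ minus the near-boundary region and the Chernoff outliers, and assembling the good-token and bad-token bounds; all failure probabilities are absorbed by a single union bound over at most $O(n^2)$ pairs.
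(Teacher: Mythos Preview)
Your treatment of the good-token lower bound (Markov pruning to extract a broadcasting set whose walks overwhelmingly stay in $C$, then Chernoff plus a union bound over the $O(n^2)$ pairs) is essentially the paper's Claim~\ref{claim:goodTokensSpreading}. The divergence is entirely in the bad-token part.

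The paper's argument for bad tokens (Claim~\ref{claim:badTokens}) is purely combinatorial: it uses only that the \emph{total} number of bad tokens entering $C$ is $o((n\log n)\,|C|)$ (Item~(1) of Theorem~\ref{thm:ByzSamplingFinal}) and then a short pigeonhole/contradiction to exhibit $C_b$ and $C_r$. There is no analysis of where individual bad tokens land, no mixed-versus-unmixed decomposition, no Bernstein bound, and no second Markov pruning on $W_v$. Your route, by contrast, tries to control the \emph{endpoint distribution} of bad tokens via a Category~(i)/Category~(ii) split. This is substantially more machinery than the paper invokes, and it is also where your proposal has a genuine gap.

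The problem is Category~(ii). You assert that tokens with fewer than $\tau_C$ remaining steps after their last exit from $B$ have endpoints ``confined close to the injection point,'' so that excising an $o(|C|)$-sized near-boundary region from $C_r$ handles them. In a constant-degree expander this fails: the ball of radius $k$ around any vertex has up to $d^k$ vertices, and since $\tau_C = b\log n$ with $b$ at least as large as the diameter constant, a walk of length $k$ just below $\tau_C$ can already reach $\Theta(n)$ vertices. There is no small region to remove. Your fallback (a spectral-norm bound on truncated walks) does not close this either: for $k \approx \tau_C/2$ the $L^\infty$ distance to stationarity is only polynomially small, so a single injection puts roughly $n^{-c}$ mass on each of $n^{c}$ nodes for some $c\in(0,1)$, and the adversary may sum such distributions over $O(|B|)$ boundary edges, $\tau_C$ injection rounds, and $\tot/\capa$ phases while choosing the label $v$ adaptively. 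Bounding that sum by $o(\log n)$ uniformly in $u$ (outside an $o(|C|)$ exceptional set) requires an idea not present in your outline. The paper sidesteps the entire issue by never attempting to localize bad tokens and working only with the global count.
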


\begin{proof} We build up to the theorem's statement by showing the following two claims.
    \begin{claim}
    \label{claim:goodTokensSpreading}
        With high probability there exists a large enough (broadcasting) subset $C' \subseteq C$, with $|C'| = |C| - o(|C|)$, such that for any node $v \in C'$, every node $u \in C$ receives $\Theta(\log n)$ good tokens initiated by $v$.
    \end{claim}

    \begin{proof}
        By a simple counting argument and Item (2) of Theorem \ref{thm:ByzSamplingFinal} (with $\kappa = o(1)$), there exist a set $C'$, with $|C'| = |C| - o(|C|)$, such that for each node $v \in C'$, at least $n \log n - o(n \log n)$ of the tokens originating at $v$ walk only in $C$.
        Now, by Item (3) of Theorem \ref{thm:ByzSamplingFinal}, each such walk will end at any given node $u \in C$ with probability $\Theta(1/|C|) = \Theta(1/n)$. Hence, any given node $u$ receives $\Theta((n \log n -o(n \log n)) / n) = \Theta(\log n)$ tokens from each node $v \in C'$ in expectation. Since the random walks are independent, standard Chernoff bounds (see Theorem 4.4 in \cite{MitzenmacherUpfalBook}) imply that with high probability, for any node $v \in C'$ and any node $u \in C$, $u$ receives $\Theta(\log n)$ tokens initiated by $v$.
    \end{proof}

    \begin{claim}
    \label{claim:badTokens}
        With high probability there exists two large enough subsets, a broadcasting subset $C_b \subseteq C$ with $|C_b|  = |C| - o(|C|)$, and a receiving subset $C_r \subseteq C$ with $|C_r|  = |C| - o(|C|)$, such that for any $v \in C_b$ and $u \in C_r$, $u$ receives $o(\log n)$ bad tokens originating supposedly from $v$. 
    \end{claim}

\begin{proof}
    By Item (1) of Theorem \ref{thm:ByzSamplingFinal} at most $O(\kappa (n \log n) |C|)$ tokens enter (or leave) $C$ during the algorithm, where $\kappa = (|B| \log n)/|C|$. Next, recall that $|B|=o(n/\log n)$. Thus, $o((n \log n) |C|)$ bad tokens enter $C$ during the algorithm.    
    
    Finally, a counting argument suffices to obtain the claim statement. We now describe this argument in detail.
    Let us fix a (broadcasting) subset $C_b \subseteq C$ with $|C_b|  = |C| - o(|C|)$. We will show that there exists a (receiving) subset $C_r \subseteq C$ with $|C_r|  = |C| - o(|C|)$, such that for any $v \in C_b$ and $u \in C_r$, $u$ receives $o(\log n)$ bad tokens originating (supposedly) from $v$. Consider a subset $\hat{C}_r$ of $C$ where each node $u \in \hat{C}_r$ receives at least $\Omega(\log n)$ bad tokens (supposedly) from some node $v \in C_b$. We show by contradiction that $\hat{C}_r = o(|C|)$ and hence the subset $C_r$ can be taken to be $C-\hat{C}_r$. Suppose not, let $\hat{C}_r = \Theta(|C|)$. Then, by the above assumption, in total at least $\sum_{u \in \hat{C}_r, v \in C_b} \Omega(\log n) = \Omega((\log n) ((|C|-o(|C|))|C|)) = \Omega((n \log n) |C|)$ bad tokens are received by nodes in $\hat{C}_r$. (Recall that $|C| \geq n - O(|B|) = n - o(n)$.) However, only $o((n \log n) |C|)$ bad tokens enter $C$, leading to a contradiction.
    \end{proof}

    By Claim \ref{claim:badTokens}, with high probability there exists two large enough subsets, a broadcasting subset $C_b \subseteq C$ with $|C_b  = |C| - o(|C|)$, and a receiving subset $C_r \subseteq C$ with $|C_r  = |C| - o(|C|)$, such that for any $v \in C_b$ and $u \in C_r$, $u$ receives $o(\log n)$ bad tokens originating (supposedly) from $v$. 
    Moreover, by Claim \ref{claim:goodTokensSpreading}, there exists with high probability a large enough (broadcasting) subset $C'' \subseteq C$, with $|C''| = |C| - o(|C|)$, such that for any node $v \in C''$, every node $u \in C$ receives $\Theta(\log n)$ good tokens originating in $v$.
    Then, there exists a large enough (broadcasting) subset $C^* = C_b \cap C''$, with $|C^*| = |C| - o(|C|)$, and a large enough (receiving) subset $C_r$, such that for any $v \in C^*$ and $u \in C_r$, $u$ receives $\Theta(\log n)$ good tokens and $o(\log n)$ bad tokens originating (supposedly) from $v$.
\end{proof}

The correctness and runtime of this simple AERID protocol is captured by the below theorem; the correctness follows from Theorem \ref{thm:AEtoAEBroadcast} and the $\tilde{\Theta}(n)$ runtime from Theorem \ref{thm:ByzSamplingFinal}.

\begin{theorem}
\label{thm:informationDissemination}
    AERID can be solved with high probability in an $n$-node $d$-regular expander network with up to $|B| = o(n/\log n)$ Byzantine nodes in $\tilde{O}(n)$ rounds.
\end{theorem}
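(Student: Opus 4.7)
The plan is to instantiate the Byzantine Random Walk Protocol (Algorithm \ref{alg:byzantineSamplingSparse}) as the AERID protocol, augmented with a simple majority-vote decoding rule. Specifically, each (honest) node sets $\tot = \Theta(n \log n)$, and each token it initiates carries, in addition to its source ID, the piece of data the source wishes to disseminate. After Algorithm \ref{alg:byzantineSamplingSparse} terminates, each node $u$ decodes the message supposedly originating at $v$ by taking the majority value over all tokens received by $u$ whose source-ID field equals $v$ (breaking ties arbitrarily; if no such token is received, $u$ outputs $\bot$ for $v$).

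For correctness, I would invoke Theorem \ref{thm:AEtoAEBroadcast} directly with the above choice of $\tot$: it yields a broadcasting subset $C_b \subseteq C$ and a receiving subset $C_r \subseteq C$, each of size $|C| - o(|C|) = n - o(n)$, such that for every $v \in C_b$ and every $u \in C_r$, $u$ receives $\Theta(\log n)$ good tokens originating at $v$ and only $o(\log n)$ bad tokens purporting to come from $v$. Since good tokens walk only inside $C$, their source ID and data payload are never altered; hence all $\Theta(\log n)$ good tokens from $v$ carry $v$'s true message, while the $o(\log n)$ bad tokens are the only ones that could carry corrupted payloads. The majority vote at $u$ therefore recovers $v$'s message correctly, which satisfies the AERID definition with $V' = C_b$ and $R = C_r$.

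For the runtime, I would apply Item (2) of Theorem \ref{thm:ByzSamplingFinal}: the protocol terminates in $O(\tot / \log^2 n) = O(n \log n / \log^2 n) = O(n / \log n)$ rounds, which is certainly $\tilde{O}(n)$. Honest nodes send at most $\capa = \Theta(\log^3 n)$ tokens per edge per round, and each token carries only $O(\log n)$ bits (source ID plus a constant-size payload), so the per-edge-per-round bandwidth is $\polylog(n)$ as required.

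I do not expect any significant obstacle here: Theorem \ref{thm:ByzSamplingFinal} and Theorem \ref{thm:AEtoAEBroadcast} already do all the heavy lifting in controlling both the mixing of good tokens and the scarcity of bad tokens crossing the cut $\partial C$. The only point that warrants a brief sanity check is that the $\Theta(\log n)$-vs-$o(\log n)$ gap between good and bad tokens at each $u \in C_r$ is strict enough to guarantee that majority decoding is correct with high probability simultaneously for all pairs $(v,u) \in C_b \times C_r$, which follows from a union bound over at most $n^2$ pairs against the whp guarantees embedded in Theorems \ref{thm:ByzSamplingFinal} and \ref{thm:AEtoAEBroadcast}.
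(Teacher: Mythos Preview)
Your proposal is correct and follows essentially the same approach as the paper: instantiate Algorithm~\ref{alg:byzantineSamplingSparse} with $\tot = \Theta(n\log n)$ and majority decoding, derive correctness from Theorem~\ref{thm:AEtoAEBroadcast} and runtime from Theorem~\ref{thm:ByzSamplingFinal}. One small remark: the union bound you mention at the end is unnecessary, since Theorem~\ref{thm:AEtoAEBroadcast} already guarantees the $\Theta(\log n)$-vs-$o(\log n)$ gap simultaneously for all pairs $(v,u)\in C_b\times C_r$.
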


Once again, we point out that nodes do not need to know if they are part of the broadcasting or receiving subsets to solve the AERID primitive. And in fact, in the presented AERID protocol, all nodes ''attempt'' to be in both subsets, and those who fail are not aware of it.

A major disadvantage of the presented AERID protocol lies in its congestion issues, even when we consider only the $\Theta(n \log n)$ good tokens that contain any given node's ID. Indeed, all honest nodes generate $\Theta(n \log n)$ random walks (or communication paths) but are only incident to $O(1)$ edges. Therefore, the congestion of the good tokens on these edges is $\Theta(n \log n)$. This leads to a significant  $\Theta(n \log n)$ slowdown caused by congestion when this protocol is used in our common coin primitive, in Section \ref{sec:randomCoin}.

\subsection{Low-Congestion Reliable Information Dissemination}
\label{subsec:congestionAERID}

We now present a more sophisticated AERID protocol, which tolerates less Byzantine nodes but takes care of the congestion issues encountered in the previous AERID protocol.
Let us give the intuition underlying this protocol. In it, each honest node will be the ``origin'' of $\Theta(n \log n)$ tokens, but only $\tilde{\Theta}(1)$ are initially generated by the node. These initially generated tokens execute random walks and reach some randomly chosen (honest) nodes (for most tokens), at which point they are duplicated, say twice. All tokens then once again execute random walks, and get duplicated again, and so on. After roughly $\log n$ stages, the tokens ``originating'' at any one node will have grown to $\Theta(n \log n)$ and spread around the network, similarly to the previous protocol. However, unlike the case where all $\Theta(n \log n)$ ``originating'' from a node are generated by the same node, here these tokens are generated by all of the nodes, in an almost uniform fashion, and this ensures that no edge will ensure the transit of too many tokens with the same ``origin'' in any one stage.

More formally, the protocol runs in $O(\log n)$ stages. Each stage $i \geq 1$ consists of running the random walk protocol, such that each node initiates at most $\tot(i) = \lambda (2 (1+\delta))^{i}$ for any arbitrarily chosen small $0 \leq \varepsilon \leq 1$ and some well-chosen constants $\lambda = \Theta(\log^{3+2\varepsilon} n)$ and $\delta = \Theta(1/\log^{1+\varepsilon} n)$. Note that this means that stages will run for increasingly longer times, and importantly, the constants are set such that for the last stage $i^*$, $\tot(i^*) = \Theta(n \log n)$ (and thus the stage will take $\tilde{O}(1)$ rounds). 
Initially, each node $v$ holds $\tot(0)= \lambda$ tokens, containing the node's ID, the message $m_v$ that node $v$ is seeking to disseminate, and some auxiliary stage-tracking and step-tracking counters set to 0. Note that no honest node will ever modify the ID or message contained in a token. We call that ID the token's \emph{source ID}.
Next, consider a given stage $i \geq 1$. In it, each node $v$ first considers the tokens it held at the end of the previous stage (or initially if $i = 1$). If $v$ holds more than $(1+\delta) \tot(i-1)$ tokens (not necessarily with the same source ID), then $v$ keeps only $(1+\delta) \tot(i-1)$ arbitrarily chosen ones and deletes the rest. 
The stage counter of each kept token is changed to $i$, and its step counter is reset to $0$, after which the token is duplicated twice and the original token is then discarded (this simply makes some of the definitions below easier). In other words, the tokens held by nodes are now doubled. Finally, all of these duplicate tokens are sent out via the Byzantine Random Walk protocol; whenever a token moves, its step counter is updated accordingly. 

For the analysis, we start by proving the correctness and runtime of the AERID protocol. First, we extend the definition of a good token, given in the previous subsection, in a natural way for this analysis. Indeed, we say here that a token is \emph{good} if it walks only in $C$ and its parent token (from which it was duplicated) was also good. Moreover, a good token is said to be \emph{low-congestion} in stage $i \geq 1$ if its parent token (from which it was duplicated) was low-congestion and if any edge it traverses during the $j$th step of its random walk of stage $i$ (for any step $j \geq 1$, $j \leq 2 \tau_C = O(\log n)$) is traversed by at most $O(\lambda / \delta) = O(\log^{4+3\varepsilon} n)$ other tokens (whether good or bad) with the same source ID during that same step within that same stage.

Next, note that the duplication of tokens done at the start of every stage in this protocol (unlike the previous one) increases the number of good tokens in every stage. We start by upper bounding the growth of the number of good tokens with a given source ID over the multiple stages. Note that by definition of a good token, the source ID must be from $C$. 

\begin{lemma}
\label{lem:tokenUpperBound}
    For any stage $i \geq 1$, there are $O(\lambda 2^{i})$ good tokens with a given source ID both when the stage starts (after duplication) and when the stage ends.
\end{lemma}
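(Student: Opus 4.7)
The plan is to proceed by induction on the stage $i$, using two simple structural observations about the protocol: (a) tokens are created only via the duplication step at the start of each stage, which exactly doubles the total count of any sub-population of tokens, and (b) the drop step at the start of each stage can only decrease the count at each node. Crucially, by the definition of a good token, a token produced in stage $i$ is good only if its parent (the token from which it was duplicated at the very start of stage $i$) was itself good. Hence the set of good tokens with source ID $v$ at the end of stage $i$ is in bijection with some subset of the children of the good tokens with source ID $v$ at the end of stage $i-1$.

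For the base case ($i=1$), the source-ID holder $v \in C$ begins with exactly $\tot(0)=\lambda$ initial tokens bearing source ID $v$. These are treated as good by convention (they sit at a node of $C$ and have no walking history to corrupt). Since $\lambda \le (1+\delta)\tot(0)$, nothing is dropped at the start of stage $1$, and duplication yields exactly $2\lambda = \lambda \cdot 2^1$ tokens with source ID $v$ entering stage $1$. Only the subset whose stage-$1$ random walk stays entirely in $C$ remains good at the end of the stage, so the count at end of stage $1$ is also at most $\lambda \cdot 2^1$.

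For the inductive step, suppose the bound $O(\lambda \cdot 2^{i-1})$ holds at the end of stage $i-1$. At the start of stage $i$ each node first drops excess tokens (which can only decrease the count of tokens with source ID $v$), and then duplicates each remaining token once. Restricting to tokens with source ID $v$ whose parent was good, the count after duplication is therefore at most $2 \cdot O(\lambda \cdot 2^{i-1}) = O(\lambda \cdot 2^i)$. At the end of stage $i$, the good tokens are exactly those among these that also walked only in $C$ during stage $i$, so the end-of-stage count is again at most $O(\lambda \cdot 2^i)$. This gives the lemma at both the ``start after duplication'' and ``end of stage'' reference points.

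The argument is purely a counting upper bound: no probabilistic analysis, no appeal to the Byzantine Random Walk Theorem, and no tracking of \emph{which} tokens stay in $C$ or where they land is needed. I do not anticipate a real obstacle; the only thing that requires care is the handling of the initial stage and the (benign) fact that the drop step is adversarial in choice but cannot increase counts, so it is harmless for the upper bound.
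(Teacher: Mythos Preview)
Your proposal is correct and follows essentially the same approach as the paper: a straightforward induction on $i$, using that duplication at most doubles the count of good tokens with a given source ID while the drop step and the possibility of walking out of $C$ can only decrease it. The paper's proof is simply a terser version of the same argument.
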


\begin{proof}
    A simple induction on $i$ suffices. Indeed, after the first stage's duplication there are $2 \lambda$ tokens for any given source ID from $C$. After which, each stage leads to tokens being discarded or becoming bad, and the tokens that remain good are duplicated twice. 
\end{proof}

We can also upper bound the number of good tokens with a given source ID that start or end some stage at any one given node by $\tilde{O}(1)$. Note that the same upper bound holds regardless of the stage counter.

\begin{lemma}
\label{lem:goodTokenDispersion}
    For any stage $i \geq 1$ and any node $v \in C$, with high probability, any node $u \in C$ starts the stage with $O(\lambda)$ good tokens with the source ID of $v$ and ends the stage with $O(\log n)$ tokens with the source ID of $v$.
\end{lemma}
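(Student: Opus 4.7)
The plan is to proceed by induction on the stage number $i \geq 1$, combined with a union bound over all ordered pairs $(u,v) \in C \times C$ and all $O(\log n)$ stages. Each Chernoff invocation below will fail only with inverse-polynomial probability, so the overall union bound will still yield a whp statement of the lemma.

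For the base case $i=1$, the start-of-stage bound is immediate: each honest node initially holds exactly $\lambda$ tokens, all carrying its own source ID and all good; hence after the initial duplication, every node $u$ holds at most $2\lambda = O(\lambda)$ good tokens of source ID $v$ (with a nonzero count only when $u = v$). For the end-of-stage bound, node $v$ contributes at most $2\lambda$ good tokens of source ID $v$ to the system, and by Item (3) of Theorem~\ref{thm:ByzSamplingFinal} each such token ends stage $1$ at a specified $u \in C$ with independent probability $\Theta(1/n)$, so Chernoff yields $O(\log n)$ whp.

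For the inductive step, assuming the end-of-stage-$(i-1)$ claim of $O(\log n)$ good tokens of source $v$ at every $u$, note that the trimming step at the start of stage $i$ only removes tokens and cannot promote bad tokens to good ones, and the subsequent duplication merely doubles whatever good tokens survive. Thus $u$ starts stage $i$ with at most $2 \cdot O(\log n) = O(\log n) = O(\lambda)$ good tokens of source $v$, which establishes the start-of-stage bound. For the end-of-stage bound, Lemma~\ref{lem:tokenUpperBound} bounds the total number of good tokens of source $v$ across the whole network at the beginning of stage $i$ by $O(\lambda 2^i)$; Theorem~\ref{thm:ByzSamplingFinal} then gives each such token an independent $\Theta(1/n)$ probability of residing at $u$ when the stage ends, so the expected count at $u$ is $O(\lambda 2^i / n)$. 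Since $(1+\delta)^i = 1 + o(1)$ for all $i = O(\log n)$ and the protocol terminates once $\tot(i) = \Theta(n \log n)$, we have $\lambda 2^i = O(\tot(i)) = O(n \log n)$; hence the expectation is $O(\log n)$ and Chernoff delivers the desired $O(\log n)$ bound whp.

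The delicate point the plan needs to justify is that the trimming step does not discard good tokens: if the total token count (good plus bad) at some $u$ at the end of stage $i-1$ exceeds the trimming threshold $(1+\delta) \tot(i-1)$, the arbitrary deletion rule could in principle remove precisely the good tokens we are tracking. Overcoming this obstacle will require separately upper-bounding the bad-token population at $u$ using the per-neighbor per-round rate cap $\capa$ of Algorithm~\ref{alg:byzantineSamplingSparse} together with $|B| = o(n/\log^2 n)$, and showing that the good-token population at $u$ is tightly concentrated around $\tot(i-1)/(1+\delta)^{i-1}$ via Chernoff. The parameters $\lambda = \Theta(\log^{3+2\varepsilon} n)$ and $\delta = \Theta(1/\log^{1+\varepsilon} n)$ are calibrated so that the slack $\delta \cdot \tot(i-1)$ dominates the bad-token contribution at $u$ whp, ensuring that trimming almost surely leaves every good token untouched and the induction continues cleanly.
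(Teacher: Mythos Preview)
Your core argument in the first three paragraphs is correct and essentially identical to the paper's: both use Lemma~\ref{lem:tokenUpperBound} to bound the total pool of good tokens of source $v$ by $O(\lambda 2^i) = O(n\log n)$, invoke Item~(3) of Theorem~\ref{thm:ByzSamplingFinal} for an $O(\log n)$ expectation at each $u$, apply Chernoff, and then obtain the start-of-stage bound from the previous stage's end plus duplication. The paper does not frame this as an induction (it proves the end-of-stage bound for all $i$ at once and then reads off the start-of-stage bound), but the substance is the same.

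Your final paragraph, however, manufactures a non-issue. This lemma is purely an \emph{upper} bound on the number of good tokens of source $v$ at $u$. Trimming can only delete tokens; if the arbitrary deletion rule removes precisely the good tokens you are tracking, that makes the upper bound \emph{easier}, not harder. You already note this correctly in your inductive step (``the trimming step \ldots\ only removes tokens and cannot promote bad tokens to good ones''), so the worry you raise afterward contradicts your own earlier reasoning. The delicate interaction between trimming and good-token survival that you sketch --- bounding the bad-token population at $u$ and showing the slack $\delta\cdot\tot(i-1)$ absorbs it --- is exactly what is needed for the \emph{lower} bound in Lemma~\ref{lem:tokenUniformDispersal}, and the paper's proof of that lemma does carry out precisely this argument; but it plays no role in the present lemma. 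You can simply drop the last paragraph.
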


\begin{proof}
    We first show the later half of the statement. Note that by Lemma \ref{lem:tokenUpperBound}, for any stage $i \geq 1$, there are $O(\lambda 2^{i})$ good tokens with a given source ID both when the stage starts (after duplication) and when the stage ends. Those tokens that start and remain good throughout stage $i$ by definition walk only in $C$. By Theorem \ref{thm:ByzSamplingFinal}, each such random walk mixes in $C$, or more precisely, ends up at some node $u \in C$ with probability $\Theta(1/n)$. Now, let $X_{u,v}$ denote the number of tokens received by some node $u \in C$ and that hold the ID of some $v \in C$ as a source ID. From how the total number of stages is defined, $O(\lambda 2^{i}) = O(n \log n)$ for any stage $i \geq 1$. Then, the expectation $\E[X_{u,v}]$ can be upper bounded by some $k = \Theta(\log n)$, and since these tokens execute independent random walks, we can use Chernoff bounds modified to use an upper bound on the expectation (see \cite{DubhashiPanconesi}): 
    \begin{equation*}
        \Pr[X_{u,v} \geq (1+1/2) k] \leq e^{- k \cdot (1/2)^2 /3}
    \end{equation*}
    This implies that any node $u \in C$ receives $O(\log n)$ tokens with a given source ID whp when any stage $i \geq 1$ ends.

    As for the first half of the statement, it is now straightforwardly obtained. For the first stage, note that every node $u \in C$ starts the first stage (after duplication) with at most $O(\lambda)$ good tokens with a given source ID (from $C$). (In fact, only the node with that ID starts with that many tokens, all others start with none.) As for the later stages $i > 1$, the second half of the lemma statement implies that every node $u \in C$ starts any stage $i > 1$ (after duplication) with at most $2 k = \Theta(\log n)$ good tokens with a given source ID whp.
\end{proof}

As a result, for any given stage, few good tokens traverse the same edge during the $j$th step of the random walk they execute in that stage. In other words, if all tokens in the network were good, then these tokens would all also be low-congestion.

\begin{lemma}
\label{lem:goodTokenEdgeCongestion}
    For any stage $i \geq 1$, any step $j$ and for any node $v \in C$, with high probability any given edge in $G$ is traversed at most $O(\lambda)$ times by good tokens with the source ID of $v$ (or more precisely, good at the time of traversal).
\end{lemma}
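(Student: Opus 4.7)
Fix a stage $i \geq 1$, a random-walk step $j \in [1, 2\tau_C]$ within stage $i$, a source ID $v \in C$, and an edge $(u,w) \in E$; let $N_{uw}^{(j)}$ denote the number of good tokens with source ID $v$ that traverse $(u,w)$ at step $j$ of stage $i$. By Lemma~\ref{lem:goodTokenDispersion}, with high probability every node $x \in C$ holds at most $T_x = O(\lambda)$ good tokens with source ID $v$ at the start of stage $i$ (after duplication); by Lemma~\ref{lem:tokenUpperBound} the total is $T := \sum_x T_x = O(\lambda 2^i)$. I condition on this event, bound $N_{uw}^{(j)}$ in expectation, and apply concentration.

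The key estimate is that for any node $u'$ and any intermediate step $j' \leq j-1$, the expected number $\mu_{u'}^{(j')}$ of tokens with source ID $v$ (good or otherwise) located at $u'$ at step $j'$ of stage $i$ is $O(\lambda)$. Since each such token performs an independent simple random walk on $G$ from its initial position, this expectation equals $T \cdot q_{j'}(u')$, where $q_0(x) := T_x/T$ and $q_{j'} = q_0 P^{j'}$ with $P$ the simple random-walk transition matrix of $G$. Because $G$ is $d$-regular, $(qP)(u') = \frac{1}{d}\sum_{x \sim u'} q(x) \leq \|q\|_\infty$, so the max-norm is non-increasing along walk steps; hence $\|q_{j'}\|_\infty \leq \|q_0\|_\infty \leq O(\lambda)/T$, giving $\mu_{u'}^{(j')} \leq O(\lambda)$. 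Since good tokens form a subset of all tokens with source ID $v$, the same bound applies to good tokens.

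Each token present at $u$ at step $j-1$ independently chooses a uniform neighbor at step $j$, so the expected count of tokens traversing $(u,w)$ at step $j$ is at most $(\mu_u^{(j-1)} + \mu_w^{(j-1)})/d = O(\lambda)/d = O(\lambda)$. Because $N_{uw}^{(j)}$ is a sum of independent Bernoulli indicators (one per token, the walks being independent), a Chernoff bound yields $\Pr[N_{uw}^{(j)} > c\lambda] \leq \exp(-\Omega(\lambda)) = n^{-\omega(1)}$ for a suitable constant $c$, using $\lambda = \Theta(\log^{3+2\varepsilon} n) = \omega(\log n)$. A union bound over the $O(n)$ edges of $G$, the $O(\log n)$ walk steps within a stage, and (if needed) the $|C| = O(n)$ source IDs preserves the high-probability guarantee.

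The main obstacle is verifying the independence required for Chernoff in the presence of the buffering and blacklisting of Algorithm~\ref{alg:byzantineSamplingSparse}: along a good-token walk each move must be an independent uniform neighbor choice by an honest node. This holds because honest nodes always choose a uniform neighbor when enqueuing a token into an outbox, good tokens only ever reside at honest nodes (so no honest node blacklists them), and Theorem~\ref{thm:congestion} ensures at most one round of buffering per step for good tokens, so each step really does correspond to a single independent uniform move. The $d$-regularity of $G$ is essential for the max-norm non-increasing step, which is the crux of bounding congestion at \emph{intermediate} steps $j' < \tau_C$; after mixing one could instead appeal to the $\Theta(1/n)$ stationary bound, but that would not suffice for small $j'$.
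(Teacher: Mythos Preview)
Your proof is correct and follows the same skeleton as the paper's argument: invoke Lemma~\ref{lem:goodTokenDispersion} to bound the number of good tokens with source ID $v$ at each core node at the start of the stage by $O(\lambda)$, use this to bound the expected number of such tokens crossing any fixed edge at any fixed step by $O(\lambda)$, and then apply a Chernoff bound with a union bound over edges, steps, and sources. The paper's proof simply asserts the intermediate-step expectation bound (``one can see that $\E[X_{u,e}^j]$ and $\E[Y_{u,e}^j]$ are upper bounded by $k'/d$''), whereas you supply a clean justification via the observation that on a $d$-regular graph the $\ell_\infty$-norm of a distribution is non-increasing under the random-walk operator, so the per-node expected load stays $O(\lambda)$ at every step $j'$, not just at the start and after mixing. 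Your explicit coupling of good tokens to unrestricted simple random walks on $G$ (and the remark that good tokens form a subset of those walks) is also a useful clarification that the paper leaves implicit; overall your version is a more detailed rendering of the same proof rather than a genuinely different route.
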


\begin{proof}
    By Lemma \ref{lem:goodTokenDispersion}, every node $u \in C$ starts the stage (after duplication) with at most $k' = \Theta(\lambda)$ good tokens with a given source ID (from $C$). Let $X_{u,e}^j$ (respectively, $Y_{u,e}^j$) be the number of tokens node $u$ sends (resp., receives) on any edge $e$ among its $d$ incident edges (including those leading outside the core) for the token's $j$th step, that are still good (i.e, have walked only in the core up to, and including, step $j$ of stage $i$) and contain the source ID of some node $v \in C$. Then, one can see that $\E[X_{u,e}^j]$ and $\E[Y_{u,e}^j]$ are upper bounded by $k'/d = \Theta(\lambda)$ for any node $u \in C$, incident edge $e$ and step $j$. The number of good tokens (i.e., good during traversal) traversing edge $e$ in their $j$th step (with the source ID of $v$) is $X_{u,e}^j + Y_{u,e}^j$ and its expectation is upper bounded by $2 k' / d = \Theta(\lambda) = \Omega(\log^3 n)$. Using Chernoff bounds modified to use an upper bound on the expectation (see \cite{DubhashiPanconesi}), we can show that:
    \begin{equation*}
        \Pr[X_{u,e}^j + Y_{u,e}^j \geq (1+1/2) (2 k' / d)] \leq e^{- (2 k' / d) \cdot (1/2)^2 /3}
    \end{equation*}    
    Therefore, we get that with high probability, at most $O(\lambda)$ good tokens (with the source ID of a given node in $C$) traverse any given edge in their $j$th step during the $i$th stage's random walk (or at least tokens that are still good when traversing that edge for the $j$ step).
\end{proof}

Now, if all tokens remained good (i.e., did not walk out of the core), and no token was discarded due to overcapacity (i.e., when there are more than $(1+\delta) \tot(i-1)$ tokens at a given node at the start of stage $i$), the upper bound of Lemma \ref{lem:tokenUpperBound} would be tight for all source IDs. And once again, if all tokens remained good, then all these good tokens would be low-congestion. This may not hold but we can show a slightly weaker statement (see Lemma \ref{lem:tokenTightBound}). To do so, we must first show that throughout the protocol, there always are sufficiently many good tokens whp (see Lemma \ref{lem:tokenUniformDispersal} below). 

\begin{lemma}
\label{lem:tokenUniformDispersal}
    Recall that $\delta = \Theta(1/\log^{1+\varepsilon} n)$, and let $\kappa = (|B| \log n)/|C|$. Let $\eta = \kappa + \delta$.
    Then, for any stage $i \geq 1$, the nodes of the core start the stage (after duplication) with at least $(1-O(\eta) \cdot i) |C| \tot(i)$ good and low-congestion tokens whp, and end the stage with at least $(1-O(\eta) \cdot (i+1)) |C| \tot(i)$ good and low-congestion tokens whp (where the $O$ notation hides constants that are independent of $i$).
\end{lemma}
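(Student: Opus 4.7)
The plan is to proceed by induction on the stage index $i \geq 1$, jointly tracking the number of good tokens and verifying that (with high probability) essentially every good token is also low-congestion. The key ingredients are Items~(1) and~(3) of the Byzantine Random Walk Theorem (Theorem~\ref{thm:ByzSamplingFinal}) for the random-walk concentration and for bounding tokens that leave the core, Lemma~\ref{lem:goodTokenEdgeCongestion} for good-token congestion on edges, and a standard Chernoff bound (whose whp regime is guaranteed by $\delta^2 \lambda = \Theta(\log n)$).

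For the base case $i=1$, each node in $C$ holds $\tot(1) = 2\lambda$ tokens after duplication, all with its own source ID, all good by construction, and low-congestion vacuously (no prior parent). During stage $1$, by Item~(1) of Theorem~\ref{thm:ByzSamplingFinal} at most $O(\kappa)|C|\tot(1)$ tokens leave $C$; for the low-congestion property, the good-token contribution per edge per step per source ID is $O(\lambda)$ whp by Lemma~\ref{lem:goodTokenEdgeCongestion}, while Algorithm~\ref{alg:byzantineSamplingSparse}'s per-round per-edge cap $\capa = O(\log^3 n)$ bounds the total (good and bad) crossings at any fixed step over the $2\tau_C$ rounds of the stage by $O(\tau_C \capa) = O(\log^4 n)$. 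Both quantities are $o(\lambda/\delta)$, so every good token is also low-congestion whp, and the end-of-stage-$1$ bound $(1-O(\eta)\cdot 2)|C|\tot(1)$ follows.

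For the inductive step I assume the bound holds through the end of stage $i$, giving $\geq (1 - c\eta(i+1))|C|\tot(i)$ good and low-congestion tokens for some constant $c$ to be fixed. Three transitions must then be accounted for. First, by Item~(3) of Theorem~\ref{thm:ByzSamplingFinal} and a Chernoff-plus-union bound over $|C|$ nodes, each node ends stage $i$ with at most $(1+\delta)\tot(i)$ good tokens, so the per-node overcapacity discarding at the start of stage $i+1$ is never triggered by good tokens alone. Second, bad tokens entering $C$ during stage $i$ total $O(\kappa)|C|\tot(i)$; at any overloaded node each such bad token can displace at most one good token, so the total good tokens discarded across $C$ is at most $O(\kappa)|C|\tot(i)$. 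Third, after the stage-$(i+1)$ duplication the good count doubles, and rewriting via $\tot(i+1) = 2(1+\delta)\tot(i)$ costs an additional $1/(1+\delta) \geq 1-\delta$ factor. Combining these with the same within-stage-$(i+1)$ analysis as in the base case and choosing the constant $c$ hidden in $O(\eta)$ large enough to dominate the additive $O(\kappa)$ and $O(\delta)$ contributions (concretely $c \geq c_1 + c_2 + 1$, where $c_1, c_2$ are the constants in the two $O(\kappa)$ losses) yields the claimed bound $(1 - c\eta(i+1))|C|\tot(i+1)$ at the start of stage $i+1$ and $(1 - c\eta(i+2))|C|\tot(i+1)$ at the end.

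The main obstacle is closing the recursion on the low-congestion property without leaking additional losses in each stage. This works because (i) the hypothesis of Lemma~\ref{lem:goodTokenEdgeCongestion}—that each node begins a stage with $O(\lambda)$ good tokens per source ID—is maintained inductively by Lemma~\ref{lem:goodTokenDispersion}, whose $O(\log n)$ per-source-ID bound doubles under duplication to $O(\log n) = O(\lambda)$, and (ii) the product $\capa \cdot 2\tau_C = O(\log^4 n)$ remains strictly below the low-congestion threshold $\lambda/\delta = \Theta(\log^{4+3\varepsilon} n)$ for any $\varepsilon > 0$, so no Byzantine strategy can force good tokens out of the low-congestion class via edge flooding. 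All whp guarantees are union-bounded over the $O(\log n)$ stages and the $|C| \leq n$ nodes, which is absorbed into the constant hidden in $\lambda$.
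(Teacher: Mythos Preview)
Your overall inductive structure is correct and matches the paper: you induct on the stage, you account for tokens leaving the core via Item~(1) of Theorem~\ref{thm:ByzSamplingFinal}, for good tokens displaced at overcapacity by incoming bad tokens, and for the $(1+\delta)$ slack via a Chernoff bound whose tail is controlled by $\delta^2\lambda = \Theta(\log n)$. That part is fine.

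The gap is in your treatment of the low-congestion property. You argue that \emph{every} good token is low-congestion because the per-round per-edge cap $\capa$ bounds the total (good and bad) crossings of any edge at a fixed step by $\capa\cdot 2\tau_C = O(\log^4 n) < \lambda/\delta$. This does not hold: a stage of the AERID protocol invokes Algorithm~\ref{alg:byzantineSamplingSparse} with $\tot(i)$ tokens per node and therefore comprises $\lceil \tot(i)/\capa\rceil$ phases of $2\tau_C$ rounds each, not a single $2\tau_C$-round block. Hence the number of tokens that can cross a fixed edge with step counter $j$ over the whole stage is $O(\tau_C\,\tot(i))$, which for late stages is $\tilde\Theta(n)$, far above $\lambda/\delta$. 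In particular, the adversary can send all its $O(\kappa|C|\tot(i))$ bad tokens into the core carrying a single targeted source ID and concentrate them on chosen edge--step pairs, pushing those pairs above the $\lambda/\delta$ threshold and knocking the good tokens there out of the low-congestion class. The edge cap $\capa$ does not prevent this.

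The paper does \emph{not} claim that all good tokens remain low-congestion; instead it bounds how many good tokens can be over-congested by a counting argument. By Lemma~\ref{lem:goodTokenEdgeCongestion} each edge--step pair carries $O(\lambda)$ good tokens with a given source ID, so over-congesting that pair requires $\Omega(\lambda/\delta)$ bad tokens there and victimizes only $O(\lambda)$ good tokens. Since the total supply of bad tokens in the core during a stage is $O(\kappa|C|\tot(i))$, the number of good tokens that can be over-congested (summed over all steps) is at most $O(\kappa|C|\tot(i))\cdot O(\lambda)/\Omega(\lambda/\delta) = O(\kappa\delta|C|\tot(i)) = o(\delta|C|\tot(i))$, which is another $O(\eta)$-fraction loss that folds into the induction. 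You need to replace your ``all good tokens are low-congestion'' claim with this counting argument.
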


\begin{proof}
    We prove the first half of the statement by induction on $i \geq 1$. (The second half can be obtained in a similar fashion.) The base case for stage $i=1$ is straightforward: initially, each node in the core holds $\tot(0)$ good and low-congestion tokens, by definition. Each of these tokens is duplicated twice. The result is that $2 \; \tot(0) = \tot(1) /(1+\delta) = (1-O(\delta)) \tot(1)$ tokens are held by the core nodes after duplication in the first stage, and these tokens remained good and low-congestion.
    
    As for the induction step, suppose $i > 1$ and consider stage $i-1$ after the duplication. Then, by the induction hypothesis for $i-1$, it holds that at least $(1-O(\kappa + \delta) (i-1)) |C| \tot(i-1)$ good and low-congestion tokens are held by the nodes in the core. We aim to upper bound, among these tokens, how many (a) exit the core (and thus become bad), (b) are discarded because of overcapacity at a given node, or (c) are no longer low-congestion.
    Part (a) is easy to bound, since by Item (1) of Theorem \ref{thm:ByzSamplingFinal}, at most $O(\kappa |C| \tot(i-1))$ of all tokens originating in $C$ exit $C$. 
    
    On the other hand, part (b) is slightly more involved. Indeed, a node may hold more than $(1+\delta) \tot(i-1)$ tokens because it holds many bad tokens (i.e., bad tokens entering $C$ during stage $i-1$) but also due to purely probabilistic considerations (i.e., too many random walks end at the same node). We first upper bound the good tokens being discarded in favor of bad tokens. For this, it suffices to notice that any one bad token leads to at most one good token being discarded. By Item (1) of Theorem \ref{thm:ByzSamplingFinal}, at most $O(\kappa |C| \tot(i-1))$ bad tokens enter $C$ during stage $i-1$. Then, these bad tokens force at most $O(\kappa |C| \tot(i-1))$ good tokens to be discarded in the core. 
    
    To prove (b), it remains to upper bound the number of good tokens being discarded in favor of good tokens. Note that there are at most $|C| \tot(i-1)$ good tokens in stage $i-1$, and those that remain good throughout stage $i-1$ execute independent random walks within $C$ only, by definition. By Theorem \ref{thm:ByzSamplingFinal}, each such random walk mixes in $C$, or more precisely, ends up at some node $u \in C$ with at most probability $deg_C(u)/(2 |E_C|) + 1/n^3$. By Lemma \ref{lem:boundCoreEdges}, this probability is at most $1/((1-O(\mu)) |C|) + 1/n^3 = (1+O(\mu))/|C| + 1/n^3$ for $\mu = |B| / |C| = o(1/\log n)$. 
    Hence, the expected number of good tokens received by any node in $C$ is at most $(1+O(\mu) + 1/n^2) \tot(i-1) = (1+O(\mu)) \tot(i-1)$. 
    Moreover, the random walks are independent so we can use Chernoff bounds (modified to use an upper bound on the expectation, see \cite{DubhashiPanconesi}). We show that the number of tokens $X$ received by any node in $C$ satisfies
    \begin{equation*}
        \Pr[X \geq (1+\delta) (1+O(\mu)) \tot(i-1)] \leq e^{- (1+O(\mu)) \tot(i-1) \cdot \delta^2 /3}
    \end{equation*}
    
    Given that $\tot(i-1) \geq \lambda = \Theta(\log^{3+2\varepsilon} n)$ for any $i > 1$ and $\delta^2 = \Theta(1 / \log^{2+2\varepsilon} n)$, it holds that $(1+O(\mu)) \tot(i-1) \cdot \delta^2 /3 = \Omega(\log n)$. To sum up, any node in $C$ receives at most $(1+O(\mu)+O(\delta)) \tot(i-1)$ good tokens whp. As a result, $(O(\mu+\delta)) \tot(i-1) = (o(\kappa)+O(\delta)) \tot(i-1)$ good tokens are discarded (due to good tokens) whp.  

    Finally, we need to upper bound how many of the good and low-congestion tokens do not remain low-congestion throughout the stage. By Lemma \ref{lem:goodTokenEdgeCongestion}, for any step $j$ and for any node $v \in C$, any given edge in $G$ is traversed at most $O(\lambda)$ times by good tokens with the source ID of $v$ (or more precisely, good at the time of traversal) during stage $i-1$.
    Consider, for some edge, all $O(\lambda)$ good tokens that take that same edge in step $j$. Then, these good tokens may fail to be low-congestion in step $j$ only if at least $\Omega(\lambda/\delta)$ tokens (and thus bad tokens) take that edge in step $j$. Now, Item (1) of Theorem \ref{thm:ByzSamplingFinal} implies that at most $O(\kappa |C| \tot(i-1)) = o(|C| \tot(i-1))$ bad tokens walk within $C$ during stage $i-1$. Altogether, they can over-congest at most $o( \delta |C| \tot(i-1) )$ good tokens for any step $j$ (or in other words, make it so that the good token is not low-congestion in step $j$), and thus $o(\delta |C| \tot(i-1))$ good tokens over all steps $j$ of stage $i-1$. 
    
    In total, at least $(1-O(\eta) (i-1)) |C| \tot(i-1) - O(\eta) |C| \tot(i-1) \geq (1- O(\eta) \cdot i) |C| \tot(i-1)$ good tokens are held by nodes in the core at the end of stage $i-1$. (Note here that the constants hidden by the big O notation are crucially independent of $i$.) 
    Finally, during the duplication in stage $i$, each (good) token is duplicated twice and thus afterwards at least $(1- O(\eta) \cdot i) |C| 2 \; \tot(i-1) = (1- O(\eta) \cdot i) |C| \tot(i) / (1+\delta) = (1- O(\eta) \cdot i) |C| \tot(i)$ tokens are good whp, or in other words, the induction step follows.     
\end{proof}

Choosing $\kappa$ low enough (and since $\delta = o(1/\log n)$), we get the following corollary from the above lemma. In essence, if there are $o(n/\log^2 n)$ Byzantine nodes, most of the tokens (starting and ending) in the core are good and low-congestion in all stages. 

\begin{corollary}
\label{cor:tokenUniformDispersal}
    Let $\kappa = (|B| \log n)/|C| = o(1/\log n)$. Then, for any stage $i \geq 1$, the nodes of the core start the stage (after duplication) with at least $(1-o(1)) |C| \tot(i)$ good and low-congestion tokens whp, and end the stage also with at least $(1-o(1)) |C| \tot(i)$ good and low-congestion tokens whp (where the $o$ notation hides constants that are independent of $i$).
\end{corollary}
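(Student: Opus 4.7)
The plan is to derive the corollary as a direct quantitative refinement of Lemma~\ref{lem:tokenUniformDispersal}. That lemma already gives the right structural statement: at the start of stage $i$ there are at least $(1-O(\eta)\cdot i)|C|\tot(i)$ good and low-congestion tokens, and at the end at least $(1-O(\eta)\cdot(i+1))|C|\tot(i)$, with $\eta = \kappa + \delta$ and the $O(\cdot)$ constants independent of $i$. So the only remaining work is to verify that, under the sharper hypothesis $\kappa = o(1/\log n)$, the multiplicative loss $\eta\cdot i$ is uniformly $o(1)$ across all stages the protocol actually runs.

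First I would bound $\eta$. Since $\delta = \Theta(1/\log^{1+\varepsilon} n)$ by construction, and now $\kappa = o(1/\log n)$ by hypothesis, we have $\eta = \kappa + \delta = o(1/\log n) + \Theta(1/\log^{1+\varepsilon} n) = o(1/\log n)$. Next I would bound the stage index. The protocol terminates at the stage $i^*$ for which $\tot(i^*) = \Theta(n\log n)$; unwinding $\tot(i) = \lambda(2(1+\delta))^i$ with $\lambda = \Theta(\log^{3+2\varepsilon} n)$ gives $i^* = \Theta(\log n)$. Hence every stage index satisfies $i \le i^* = O(\log n)$, and therefore $O(\eta)\cdot i = o(1/\log n)\cdot O(\log n) = o(1)$, where the hidden constant remains independent of $i$ because each factor independently is.

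Substituting this bound into both conclusions of Lemma~\ref{lem:tokenUniformDispersal} (using also $i+1 = O(\log n)$ for the end-of-stage statement) yields the desired $(1-o(1))|C|\tot(i)$ lower bounds both before and after the random walks of each stage. I do not anticipate any real obstacle: the proof is essentially the observation that strengthening the Byzantine bound by one logarithmic factor precisely compensates for the $O(\log n)$ stages over which the per-stage loss accumulates in Lemma~\ref{lem:tokenUniformDispersal}.
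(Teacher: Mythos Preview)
Your proposal is correct and matches the paper's approach exactly: the corollary is stated immediately after Lemma~\ref{lem:tokenUniformDispersal} with only the remark that choosing $\kappa$ low enough (together with $\delta = o(1/\log n)$) yields the claim, and your argument fills in precisely those details. The computation $\eta\cdot i = o(1/\log n)\cdot O(\log n) = o(1)$ is the whole content.
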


Now that we have a lower bound on the number of good and low-congestion tokens over all possible source IDs in every stage, we can prove that there exists many source IDs for which there are many good and low-congestion tokens in every stage.

\begin{lemma}
\label{lem:tokenTightBound}
    Let $\kappa = (|B| \log n)/|C| = o(1/\log n)$. Then, with high probability, there exists a large enough subset $C' \subseteq C$, with $|C'| = |C| - o(|C|)$, such that for any stage $i \geq 1$, there are $\Theta(\lambda 2^{i})$ good and low-congestion tokens with a given source ID (from a node in $C'$) both when the stage starts (after duplication) and when the stage ends.
\end{lemma}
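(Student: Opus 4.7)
My plan is to avoid a naive union bound over the $O(\log n)$ stages (which would lose a factor of $\log n$ relative to Corollary \ref{cor:tokenUniformDispersal} and thus break under the given hypothesis $\kappa = o(1/\log n)$), by introducing a stage-normalized potential that is monotone non-increasing. For each source $v \in C$ and stage $i \geq 0$, let $g_v(i)$ denote the number of good, low-congestion tokens with source ID $v$ at the end of stage $i$ (with $g_v(0) = \lambda$), and define
\[ h_v(i) = \frac{g_v(i)}{\lambda \, 2^{i}} \in [0,1]. \]
The first step is to verify that $h_v(i) \leq h_v(i-1)$. After the over-capacity filter and duplication at the start of stage $i$, the number of good low-congestion tokens with source ID $v$ is at most $2 g_v(i-1)$ (the duplication step merely doubles what was good and low-congestion at the end of stage $i-1$, and the stage-$i$ low-congestion condition holds vacuously before any step is taken), and during stage $i$ this count may only decrease as tokens leave $C$, are discarded at over-capacity nodes, or violate the low-congestion condition on some traversed edge. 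Hence $g_v(i) \leq 2 g_v(i-1)$, i.e., $h_v(i) \leq h_v(i-1)$.

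Second, I would apply Corollary \ref{cor:tokenUniformDispersal} at the last stage $i^* = \Theta(\log n)$ to get $\sum_{v \in C} g_v(i^*) \geq (1-o(1))\,|C|\,\tot(i^*)$ whp. Dividing by $\lambda \, 2^{i^*}$ and using $\tot(i^*)/(\lambda \, 2^{i^*}) = (1+\delta)^{i^*} \geq 1$ gives
\[ \sum_{v \in C} h_v(i^*) \geq (1 - o(1))\,|C|, \]
so the aggregate deficit $\sum_{v \in C}(1 - h_v(i^*))$ is $o(|C|)$. A Markov-type argument then shows that the set $C' = \{v \in C : h_v(i^*) \geq 1/2\}$ has $|C'| \geq |C| - o(|C|)$, since every excluded source contributes at least $1/2$ to the deficit.

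Finally, the monotonicity established in step one propagates this lower bound back to every stage: for $v \in C'$ and every $i \leq i^*$, $h_v(i) \geq 1/2$, i.e., $g_v(i) \geq (\lambda/2)\,2^{i} = \Omega(\lambda \, 2^{i})$ good low-congestion tokens at the end of stage $i$. The corresponding count at the start of stage $i$ (after duplication) is $2 g_v(i-1) \geq \lambda \, 2^{i-1} = \Omega(\lambda \, 2^{i})$, while the matching upper bound $O(\lambda \, 2^{i})$ is already supplied by Lemma \ref{lem:tokenUpperBound}. The hard part is precisely the potential-function step: spotting that a stage-normalized count of good low-congestion tokens is monotone converts what looks like a per-stage problem (with an unaffordable $\log n$-factor loss in the union bound) into a single global counting argument at the final stage.
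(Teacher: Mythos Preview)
Your proof is correct and follows essentially the same approach as the paper: both arguments hinge on the monotonicity of the stage-normalized count $g_v(i)/(\lambda\,2^i)$ (the paper phrases this as ``if $o(\lambda 2^i)$ at stage $i$ then $o(\lambda 2^{i'})$ at all later stages $i'$''), apply Corollary~\ref{cor:tokenUniformDispersal} only at the final stage, extract the large subset $C'$ there by a counting/Markov argument, and then propagate the lower bound back to earlier stages via monotonicity. Your explicit potential $h_v$ makes the argument cleaner than the paper's somewhat roundabout exposition (which first constructs per-stage sets $C'(i)$ before observing they nest and taking $C' = C'(i^*)$), but the substantive ideas are the same; one minor imprecision: the count at the start of stage $i$ after duplication is at most $2 g_v(i-1)$ (the over-capacity filter may discard some), but your desired lower bound still follows since that count is $\ge g_v(i) \ge (\lambda/2)\,2^i$.
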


\begin{proof}
    By Lemma \ref{lem:tokenUpperBound}, we know that the number of good tokens with a given source ID (by definition from a node in $C$) is $O(\lambda 2^{i})$ for any stage $i \geq 1$, and this number can only decrease throughout stage $i$. Of these, there can only be less that are also low-congestion.

    Next, note that by Corollary \ref{cor:tokenUniformDispersal}, for any stage $i \geq 1$, the nodes of the core start the stage (after duplication) with at least $(1-o(1)) |C| \tot(i)$ good and low-congestion tokens whp, and end the stage also with at least $(1-o(1)) |C| \tot(i)$ good and low-congestion tokens tokens whp. As $\tot(i) = \lambda 2^{i} (1+\delta)^{i}$ and $(1+\delta)^{i} = e^{i \ln(1+\delta)} = e^{i (\delta + o(\delta))} = 1+o(1)$ for $\delta=o(1/\log n)$ and $i = O(\log n)$, we get that $(1-o(1)) |C| \tot(i) = (1-o(1)) |C| \lambda 2^{i}$.
    However, this contradicts the existence of some $C_{bad} \subseteq C$ with $|C_{bad}| = \Omega(|C|)$ such that the number of good and low-congestion tokens with a source ID in $C_{bad}$ is $o(\lambda 2^{i})$. Hence, for any stage $i$, there exists a large enough subset $C'(i) \subseteq C$, with $|C'(i)| = |C| - o(|C|)$, such that there are $\Theta(\lambda 2^{i})$ good and low-congestion tokens with a given source ID (from a node in $C'(i)$) both when the stage starts (after duplication) and when the stage ends.
    
    Finally, first note that if the number of good and low-congestion tokens with a given source ID (by definition from a node in $C$) is $o(\lambda 2^{i})$ for some stage $i \geq 1$, then there are (at most) $o(\lambda 2^{i'})$ good and low-congestion tokens with the same source ID for any later stages $i'> i'$ since the number of good (and low-congestion) tokens with the same source ID increases by at most $\lambda$ per stage. As a result, we can consider the claim shown at the end of the above paragraph, and take the large enough subset $C'$ satisfying the claim for the last stage. It follows that for any stage $i \geq 1$, there are $\Theta(\lambda 2^{i})$ good and low-congestion tokens with a given source ID (from a node in $C'$) both when the stage starts (after duplication) and when the stage ends.
\end{proof}

Having set up the previous lemmas on the amount of good and low-congestion tokens moving around the core in every stage, both overall as well as from any given source ID, we can now prove the correctness of the presented AERID protocol (see Theorem \ref{thm:correctnessCongestionAERID} below). In particular, this proves correctness even when we consider only the good and low-congestion tokens, rather than all good tokens.

\begin{theorem}
\label{thm:correctnessCongestionAERID}
    Let $\kappa = (|B| \log n)/|C| = o(1/\log n)$. Then with high probability there exists two large enough subsets, a broadcasting subset $C_b \subseteq C$ with $|C_b| = |C| - o(|C|)$, and a receiving subset $C_r \subseteq C$ with $|C_r| = |C| - o(|C|)$, such that in the last stage, node $v \in C_b$ and $u \in C_r$, $u$ receives $\Theta(\log n)$ good and low-congestion tokens with the source ID of $v$, and $o(\log n)$ also with that source ID.
\end{theorem}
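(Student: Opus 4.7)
My plan is to mirror the two-part template used in the proof of Theorem~\ref{thm:AEtoAEBroadcast}: a lower-bound claim on good and low-congestion tokens, followed by an upper-bound claim on bad tokens. Throughout, let $i^*$ denote the last stage, chosen so that $\lambda 2^{i^*} = \Theta(n \log n)$ (and hence $\tot(i^*) = \Theta(n \log n)$ after the $(1+\delta)^{i^*} = 1 + o(1)$ correction).

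For the lower bound, I would apply Lemma~\ref{lem:tokenTightBound} to obtain a subset $C' \subseteq C$ with $|C'| = |C| - o(|C|)$ such that for every $v \in C'$ there are $\Theta(\lambda 2^{i^*}) = \Theta(n \log n)$ good and low-congestion tokens with source ID $v$, both at the start of stage $i^*$ (after duplication) and at its end. Since by definition each such token walks only inside $C$ during the stage, Theorem~\ref{thm:ByzSamplingFinal}(3) places it at any prescribed $u \in C$ with probability $\Theta(1/n)$; the within-stage random walks are independent by construction of Algorithm~\ref{alg:byzantineSamplingSparse}, so a Chernoff bound shows that the number received at $u$ concentrates at $\Theta(\log n)$ with probability $1 - 1/n^{\Omega(1)}$ per pair. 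A union bound over the $O(n^2)$ pairs $(v, u) \in C' \times C$ then yields the claim whp.

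For the bad-token upper bound, I would invoke Theorem~\ref{thm:ByzSamplingFinal}(1) in the last stage, with $\kappa = (|B| \log n)/|C| = o(1/\log n)$ (from $|B| = o(n/\log^2 n)$) and $\tot(i^*) = \Theta(n \log n)$: at most $O(\kappa \, \tot(i^*) \, |C|) = o(n^2)$ bad tokens enter $C$ during stage $i^*$. Writing $B(u, v)$ for the number of bad tokens received by $u$ with source ID $v$ in that stage, fix a candidate broadcasting subset $C_b \subseteq C'$ with $|C_b| = |C| - o(|C|)$ and let $\hat{C}_r = \{u \in C : B(u, v) \geq \log n \text{ for a constant fraction of } v \in C_b\}$. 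If $|\hat{C}_r| = \Theta(|C|)$, the total bad-token count in $C$ would be $\Omega(|\hat{C}_r| \cdot |C_b| \cdot \log n) = \Omega(n^2 \log n)$, contradicting the $o(n^2)$ upper bound; hence $|\hat{C}_r| = o(|C|)$, and $C_r = C \setminus \hat{C}_r$ has size $|C| - o(|C|)$. A symmetric trim of $C_b$ (removing the $o(|C|)$ source IDs $v$ for which many $u \in C_r$ still see $\geq \log n$ bad tokens with source ID $v$) gives the final broadcasting subset.

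The main obstacle is the bad-token counting step. The Byzantine adversary is free to forge the source-ID field of its injected tokens and, a priori, could concentrate the entire $o(n^2)$ bad-token budget onto a single $v \in C$, so the total count alone does not immediately produce a single rectangle $C_b \times C_r$ that is free of pairs with $B(u, v) \geq \log n$. Defining $\hat{C}_r$ in terms of a constant fraction of offending $v$'s (rather than just the existence of one) is what legitimizes the double-sum lower bound used to bound $|\hat{C}_r|$, and the later symmetric trim of $C_b$ absorbs the forgeries concentrated on few source IDs; verifying that both trims cost only $o(|C|)$ vertices, and that the Chernoff-based lower bound from the good-token step still holds on the trimmed pairs, is the technical heart of the argument.
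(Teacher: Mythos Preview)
Your overall scheme matches the paper's: prove a good-token lower bound via Lemma~\ref{lem:tokenTightBound} plus Chernoff (this is exactly Claim~\ref{claim:broadcastingEfficient}), prove a bad-token upper bound via a global count plus an averaging argument (Claim~\ref{claim:notTooManyBadReceptionsEfficient}), then intersect.

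There is one genuine gap in your bad-token count. You invoke only Theorem~\ref{thm:ByzSamplingFinal}(1) to bound the bad tokens that \emph{enter} $C$ during stage $i^*$. But there are also tokens held by core nodes at the \emph{start} of stage $i^*$ that are already bad---their parent (or some earlier ancestor) left $C$ in a previous stage---and after duplication these walk entirely within $C$ during stage $i^*$, so the boundary-crossing bound never sees them. The paper handles this second population by invoking Corollary~\ref{cor:tokenUniformDispersal}: since at least $(1-o(1))|C|\tot(i^*)$ of the tokens initiated by core nodes in stage $i^*$ are good and low-congestion, at most $o(|C|\tot(i^*)) = o(n^2\log n)$ of them are bad. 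Adding the $O(\kappa|C|\tot(i^*))$ entering tokens gives a total of $o((n\log n)|C|)$ bad tokens ending the stage inside the core, which is the bound the paper uses. Your counting argument still yields a contradiction with this corrected bound, so the fix is just to add the Corollary~\ref{cor:tokenUniformDispersal} invocation alongside Theorem~\ref{thm:ByzSamplingFinal}(1).

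On the counting step itself: the paper simply fixes an arbitrary $C_b$ of size $|C|-o(|C|)$, defines $\hat{C}_r$ as those $u$ receiving $\Omega(\log n)$ bad tokens from \emph{some} $v\in C_b$, and derives $|\hat{C}_r|=o(|C|)$ by the double sum you rightly worry about. Your two-sided trim is a more careful attempt at this, but note that with a fixed-constant threshold the two trims still do not give the per-pair $o(\log n)$ conclusion; you would need the threshold to be an $o(1)$ function chosen so that the double sum still beats the $o(n^2\log n)$ budget. You correctly identify this as the place where the work lies.
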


\begin{proof}
    This proof follows along the lines of that of Theorem \ref{thm:AEtoAEBroadcast}. We start by showing the following claims. 

    \begin{claim}
    \label{claim:broadcastingEfficient}
        Let $\kappa = (|B| \log n)/|C| = o(1/\log n)$. Then with high probability there exists a large enough subset $C' \subseteq C$, with $|C'| = |C| - o(|C|)$, such that every node $u \in C$ ends the last stage with $\Theta(\log n)$ good and low-congestion tokens with the source ID of $v$.
    \end{claim}
    
    \begin{proof}
    By Lemma \ref{lem:tokenTightBound}, there exists a large enough subset $C' \subseteq C$, with $|C'| = |C| - o(|C|)$, such that for the last stage, there are $\Theta(n \log n)$ good and low-congestion tokens with a given source ID (from some node $v \in C'$) both when the stage starts (after duplication) and when it ends. 
    Those tokens that start and remain good throughout the last stage by definition walk only in $C$. By Item (3) of Theorem \ref{thm:ByzSamplingFinal}, each such random walk mixes in $C$, or more precisely, ends up at some node $u \in C$ with probability $\Theta(1/n)$. Now, let $X_{u,v}$ denote the number of (good and low-congestion) tokens received by some node $u \in C$ that hold the ID of $v \in C'$ as a source ID. Then, the expectation $E[X_{u,v}] = \Theta(\log n)$ and since these tokens execute independent random walks, we can use standard Chernoff bounds (see Theorem 4.4 in \cite{MitzenmacherUpfalBook}) to prove that for any node $v \in C'$, every node $u \in C$ receives $\Theta(\log n)$ good and low-congestion tokens with the source ID of $v$ when the last stage ends.
    \end{proof}
    
    \begin{claim}
    \label{claim:notTooManyBadReceptionsEfficient}
        Let $\kappa = (|B| \log n)/|C| = o(1/\log n)$. Then with high probability there exists two large enough subsets, a broadcasting subset $C_b \subseteq C$ with $|C_b| = |C| - o(|C|)$, and a receiving subset $C_r \subseteq C$ with $|C_r| = |C| - o(|C|)$, such that in the last stage, node $v \in C_b$ and $u \in C_r$, $u$ receives $o(\log n)$ bad tokens with the source ID of $v$. 
    \end{claim}
    
    \begin{proof}
    In the last stage $i^*$, $\tot(i^*) = \Theta(n \log n)$ and thus nodes in the core initiate (at most) $|C| \tot(i^*) = \Theta(|C| (n \log n))$. Of these, by Corollary \ref{cor:tokenUniformDispersal}, at most $o(|C| \tot(i^*))$ are bad tokens. Additionally, by Item (1) of Theorem \ref{thm:ByzSamplingFinal}, at most $O(\kappa |C| \tot(i^*)) = o(|C| \tot(i^*))$ tokens enter $C$ during stage $i^*$. Adding the two together, at most $o((n \log n) |C|)$ bad tokens end stage $i^*$ at a node in the core. 

    Finally, a counting argument (similarly to Claim \ref{claim:badTokens}) suffices to obtain the lemma statement. More concretely, let us fix a (broadcasting) subset $C_b \subseteq C$ with $|C_b|  = |C| - o(|C|)$. We will show that there exists a (receiving) subset $C_r \subseteq C$ with $|C_r|  = |C| - o(|C|)$, such that for any $v \in C_b$ and $u \in C_r$, $u$ receives $o(\log n)$ bad tokens originating (supposedly) from $v$. Consider a subset $\hat{C}_r$ of $C$ where each node $u \in \hat{C}_r$ receives at least $\Omega(\log n)$ bad tokens with the source ID of some node $v \in C_b$. We show by contradiction that $\hat{C}_r = o(|C|)$ and hence the subset $C_r$ can be taken to be $C-\hat{C}_r$. Suppose not, let $\hat{C}_r = \Theta(|C|)$. Then, by the above assumption, in total at least $\sum_{u \in \hat{C}_r, v \in C_b} \Omega(\log n) = \Omega((\log n) ((|C|-o(|C|))|C|)) = \Omega((n \log n) |C|)$ bad tokens are received by nodes in $\hat{C}_r$. (Recall that $|C| \geq n - O(|B|) = n - o(n)$.) However, only $o((n \log n) |C|)$ bad tokens enter $C$, leading to a contradiction.
    \end{proof}

    By Claim \ref{claim:notTooManyBadReceptionsEfficient}, with high probability there exists two large enough subsets, a broadcasting subset $C_b \subseteq C$ with $|C_b  = |C| - o(|C|)$, and a receiving subset $C_r \subseteq C$ with $|C_r  = |C| - o(|C|)$, such that in the last stage for any $v \in C_b$ and $u \in C_r$, $u$ receives $o(\log n)$ bad tokens with the source ID of $v$. 
    Moreover, by Claim \ref{claim:broadcastingEfficient}, there exists with high probability a large enough (broadcasting) subset $C'' \subseteq C$, with $|C''| = |C| - o(|C|)$, such that for any node $v \in C''$, every node $u \in C$ receives $\Theta(\log n)$ good and low-congestion tokens with the source ID of $v$.
    Then, there exists a large enough (broadcasting) subset $C^* = C_b \cap C''$, with $|C^*| = |C| - o(|C|)$, and a large enough (receiving) subset $C_r$, such that for any $v \in C^*$ and $u \in C_r$, $u$ receives $\Theta(\log n)$ good and low-congestion tokens with the source ID of $v$, and $o(\log n)$ bad tokens that also have the source ID of $v$.
\end{proof}

Now, we can show that the presented protocol solves AERID. Moreover, it does so even if we discard all good tokens that are not low-congestion (and their contained information). This additional property is crucial for our faster eventual almost-everywhere common coin primitive.

\begin{theorem}
\label{thm:informationDisseminationCongestionFree}
    AERID can be solved with high probability in an $n$-node $d$-regular expander network with up to $|B| = o(n/\log^2 n)$ Byzantine nodes in $\tilde{O}(n)$ rounds. Moreover this holds even if  all good tokens (and thus the information they carry) except the low-congestion good tokens are discarded.
\end{theorem}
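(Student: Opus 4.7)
The plan is to derive this theorem as essentially a wrap-up of Theorem \ref{thm:correctnessCongestionAERID}, combined with a runtime bound obtained by summing the per-stage cost given by Lemma \ref{lem:byzantineRandomWalkRuntime}. First, I would verify the hypothesis transfer: the assumption $|B| = o(n/\log^2 n)$ gives $\kappa = (|B| \log n)/|C| = o(1/\log n)$, which is exactly the regime required by Theorem \ref{thm:correctnessCongestionAERID} (and by Corollary \ref{cor:tokenUniformDispersal} and Lemma \ref{lem:tokenTightBound}, on which it depends).

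For correctness, I would specify the output rule: at the end of the last stage, each honest node $u$ discards all non-low-congestion tokens (this is allowed by the ``moreover'' clause and is in fact the whole point), groups the remaining tokens by source ID, and decodes the message for each source ID as the majority over the retained tokens with that ID. Taking $C_b$ and $C_r$ as the broadcasting and receiving subsets provided by Theorem \ref{thm:correctnessCongestionAERID}, for any $v \in C_b$ and $u \in C_r$ the node $u$ retains $\Theta(\log n)$ good and low-congestion tokens with source ID $v$ and at most $o(\log n)$ bad tokens also labeled with source ID $v$. Since a good token by definition never visits a Byzantine node, its source ID and message content are exactly those set by the originator $v$ and are never altered; hence the majority is $v$'s true message. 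This delivers AERID on the subsets $C_b, C_r$, each of size $|C| - o(|C|) = n - o(n)$, and makes the ``moreover'' clause immediate since the entire argument uses only good and low-congestion tokens.

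For the round complexity, Lemma \ref{lem:byzantineRandomWalkRuntime} gives stage $i$ a cost of $O(\tau_C \cdot \tot(i)/\capa) = O(\tot(i)/\log^2 n)$ rounds, and the sequence $\tot(i) = \lambda (2(1+\delta))^i$ is geometric in $i$, so the total time is dominated by the final stage. Since $\tot(i^*) = \Theta(n \log n)$ by construction, the sum telescopes to $O(\tot(i^*)/\log^2 n) = O(n/\log n)$, which is $\tilde{O}(n)$ as claimed. All high-probability events invoked (Theorem \ref{thm:correctnessCongestionAERID} together with its supporting lemmas) can be union-bounded over the $O(\log n)$ stages at a polylogarithmic cost in the probability exponent, preserving the whp guarantee.

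The main obstacle in the overall development has already been dealt with before reaching this theorem, namely controlling both good-token survival and edge congestion simultaneously across $\Theta(\log n)$ doubling stages; Theorem \ref{thm:correctnessCongestionAERID} packages that work. Here the only residual subtlety is making sure that the ``discard non-low-congestion good tokens'' rule does not break the majority, but since Theorem \ref{thm:correctnessCongestionAERID} already phrases its count purely in terms of good-and-low-congestion tokens versus bad tokens, this is automatic and the proof is essentially a one-line synthesis plus the geometric-sum runtime bound.
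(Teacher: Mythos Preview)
Your proposal is correct and follows essentially the same approach as the paper: correctness (including the ``moreover'' clause) is inherited directly from Theorem~\ref{thm:correctnessCongestionAERID}, and the runtime is obtained by summing the per-stage Byzantine Random Walk cost, which is $\tilde{O}(\tot(i))$ per stage and hence $\tilde{O}(n)$ overall. You supply more detail than the paper does (the explicit majority output rule, the $\kappa$ computation, and the geometric-sum argument), but the structure of the argument is identical.
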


\begin{proof}
    The correctness (and the later half of the statement) follows from Theorem \ref{thm:correctnessCongestionAERID}. On the other hand, the runtime on the other hand follows from applying Item (2) of Theorem \ref{thm:ByzSamplingFinal} to each stage. More concretely, each node runs the Byzantine Random Walk protocol in every phase $i \geq 1$, initiating up to $\tot(i) = \Theta(\lambda 2^{i} n)$ tokens and thus taking $\tilde{O}(\tot(i))$ rounds. Since $\tot(i) = \tilde{O}(n)$ for all phases, the runtime is $\tilde{O}(n)$.
\end{proof}

\section{Eventual Almost-Everywhere Common Coin}
\label{sec:randomCoin}

Common coin primitives lie at the heart of many Byzantine agreement primitives. They allow all nodes to generate a \emph{common} random bit. In this section, we design an eventual almost-everywhere common coin (EAECC) primitive. By almost-everywhere (common), we mean to say that almost all honest nodes (i.e., at least $n - o(n)$ honest nodes) agree on the random bit. Whereas by eventual, we mean to say that within $n$ coin flips (i.e., calls to this primitive), at least one flip will be random (i.e., it is 0 or 1 each with equal probability).

We leverage the two AERID primitives from Section \ref{sec:AERID} to design two eventual, almost-everywhere random coin primitives The first coin primitive, described in Subsection \ref{subsec:coinWithBetterTolerance}, builds upon the first AERID primitive from Subsection \ref{subsec:informationDissemination}. It tolerates up to $o(n/ \log n)$ Byzantine nodes, but has a slow runtime. Whereas the second coin primitive, described in Subsection \ref{subsec:fasterCoin} and building upon the second AERID primitive from Subsection \ref{subsec:congestionAERID}, tolerates only up to $o(n/ \log^2 n)$ Byzantine nodes but is significantly faster (by a linear in $n$ factor in fact).

\subsection{Eventual Almost-Everywhere Common Coin with \texorpdfstring{$o(n/\log n)$}{o(n/ log n)} Tolerance}
\label{subsec:coinWithBetterTolerance}

First, we give a brief high-level description of this common coin primitive. 
Nodes first initialize the random coin by running the AERID primitive from Subsection \ref{subsec:informationDissemination}. More precisely, each node initiates $\Theta(n \log n)$ random walks (or tokens) and each walk is stored separately in a distributed fashion. The remainder of the random coin primitive --- the coin flipping component --- is split into phases. In each phase, some designated senders transmit a random bit using the previously-computed random walks only. This desired behavior is enforced by the honest nodes; the honest nodes ensure that only messages travelling along the random walk paths are retransmitted, whereas other (incorrect) messages are ignored. At the end of the phase, nodes take the majority bit among all received messages' random bits as their local random coin output.

Next, we give some intuition for the common coin primitive's correctness. To start with, reusing the random walks for the coin flipping component ensures that for significantly many phases out of any $n$ successive phases, almost all of the honest nodes receive more good messages, having visited only honest nodes and containing a random bit that originates from a designated sender, than bad messages, originating at or tampered by Byzantine nodes. Moreover, for a constant fraction of these phases, it holds that there is a unique designated sender and thus that almost all honest nodes agree on a random bit.

\paragraph{Detailed Primitive Description.} First, the (eventual almost-everywhere) common coin must be initialized. Once that is done, the primitive can be invoked to produce a coin flip. Nodes keep a counter $i$ of how many (coin flip) calls have been executed until now. From a more technical perspective, the initialization is done by calling an $InitCoin()$ function and the $i$th coin flip by calling a $CoinFlip(i)$ function, for any integer $i \geq 1$.

When initializing, all nodes randomly select an integer (or rank) in $[1,n]$. Then, nodes run the AERID primitive described in Subsection \ref{subsec:informationDissemination} (with the slight modification that tokens contain the initiating node's rank in addition to all the other information). In more detail, recall that in the AERID primitive, each node initializes $\Theta(n \log n)$ (random walk) tokens, each taking $O(\log n)$ steps. Moreover, each token is uniquely identified by the combination of the originator node's ID and rank as well as a token counter (distinguishing different tokens originating at the same node). Each node that is visited by a token stores the incoming and outgoing edge, the token's originator ID, rank and counter, as well as the number of steps taken by the token. Storing this information allows us, after the initialization and in particular during the coin flipping phases, to send messages along specifically chosen random walks --- in particular, along walks taken by messages with a given rank --- and ensure that all other messages (i.e., not following such a pattern) are ignored.

When the $i$th coin flip is invoked (i.e., the $i$th coin flipping phase), the nodes with rank $i \;\mathrm{mod}\ n$ (and only they) are the \emph{designated senders}. They each flip a coin (i.e., pick 0 or 1 with probability $1/2$ each) and then almost-everywhere broadcast these bits, using the previously computed random walks. More precisely, let $r_v$ be the random bit of designated sender node $v$. Then $v$ almost-everywhere broadcasts $r_v$ by generating $\Theta(n \log n)$ different messages --- containing the random bit $r_v$, the rank of $v$ and a unique token counter corresponding to one used by $v$ during the initialization --- and sending these messages along the $\Theta(n \log n)$ random walks starting at $v$. Other (honest) nodes ensure that only messages corresponding to tokens with rank $i$ are transmitted (and know over which edges to forward the message using the information they stored during the initialization). All tokens reach their destination after $\tilde{\Theta}(n)$ rounds --- even though the walks have length $O(\log n)$ --- which is due to congestion (from the $\Theta(n \log n)$ different messages with the same rank). After which, all nodes take among all messages received via random walks, the majority bit as the bit transmitted by this phase's designated sender.

\paragraph{Analysis.} To start with, we show that there exist sufficiently many phases for which there exist a single honest designated sender --- see Lemma \ref{lem:noCollision} below. 

\begin{lemma}
\label{lem:noCollision}
Consider only the honest nodes. For large enough $n$, it holds with high probability that at least $\frac{n}{8}$ ranks are chosen by exactly one honest node.
\end{lemma}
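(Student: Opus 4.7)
The plan is to model the rank-selection process as a balls-into-bins experiment and apply a bounded-differences concentration inequality. Let $h$ denote the number of honest nodes; since $|B| = o(n/\log n)$, we have $h = n - o(n)$, and in particular $h \geq (1-\varepsilon) n$ for any fixed $\varepsilon > 0$ once $n$ is large enough. Each honest node picks its rank independently and uniformly at random in $[1,n]$, which is precisely the distribution of $h$ balls thrown independently and uniformly into $n$ bins.

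For each $j \in [1,n]$, let $X_j$ be the indicator that bin $j$ receives exactly one ball, and let $Y = \sum_{j=1}^n X_j$. A direct computation gives
\[
\E[X_j] = \frac{h}{n}\left(1 - \frac{1}{n}\right)^{h-1}.
\]
Since $h/n \to 1$ and $(1 - 1/n)^{h-1} \to e^{-1}$ as $n \to \infty$, for large enough $n$ we obtain $\E[Y] = n \cdot \E[X_j] \geq 0.35 \, n$, which comfortably exceeds the target threshold of $n/8$.

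The main step is to show concentration of $Y$ around its mean. Since the $X_j$'s are not independent, one cannot directly Chernoff over the index $j$; instead, I would apply McDiarmid's bounded differences inequality with the $h$ underlying random variables being the ranks chosen by the individual honest nodes. Modifying the rank of a single node changes the count of singleton bins by at most $2$ (one bin may gain or lose its singleton status, and so may the other). With bounded-difference constants $c_i = 2$, McDiarmid yields
\[
\Pr[Y \leq \E[Y] - t] \leq \exp\left(-\frac{2 t^2}{4h}\right) \leq \exp\left(-\frac{t^2}{2n}\right).
\]
Choosing $t = \E[Y] - n/8 \geq 0.2\, n$ gives a bound of $\exp(-\Omega(n))$, so $Y \geq n/8$ holds with high probability.

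The step I expect to require the most care is the verification of the bounded-differences constant: one must check that changing a single honest node's rank affects $Y$ by at most a small constant (here $2$), regardless of the configuration of the other balls. Everything else is a routine balls-in-bins calculation. An alternative to McDiarmid would be to exploit the well-known negative correlation of the $X_j$ indicators in the uniform balls-and-bins model and apply a Chernoff-style bound directly, but the McDiarmid route is cleaner and gives the same exponentially small failure probability that the lemma requires.
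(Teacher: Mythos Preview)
Your argument is correct. The expectation computation is right ($\E[Y]\to n/e$), the bounded-differences constant $c_i=2$ is justified exactly as you describe, and McDiarmid then gives $\Pr[Y<n/8]\le \exp(-\Omega(n))$, which is far stronger than the required ``whp''.

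The paper takes a different route: it uses the Poisson approximation for balls-and-bins. It replaces the bin loads $X_1,\dots,X_n$ by independent Poisson$(\mu_P)$ variables $Y_1,\dots,Y_n$ with $\mu_P=H/n$, applies a standard Chernoff bound to the now-independent indicators $\I[Y_j=1]$, and then transfers the tail bound back to the exact model via the inequality $\Pr[f(X_1,\dots,X_n)] \le e\sqrt{H}\,\Pr[f(Y_1,\dots,Y_n)]$. Your McDiarmid approach is arguably cleaner here: it works directly on the exact distribution, avoids invoking the Poisson transfer lemma, and does not lose the $e\sqrt{H}$ factor in the probability bound. The paper's approach, on the other hand, is the textbook balls-and-bins technique and would generalize more readily to other occupancy events where a small Lipschitz constant is less obvious. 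Either way the conclusion is the same and the margin over $n/8$ is large.
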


\begin{proof} 
Let us analyze the number of ranks chosen by exactly one honest node. We model this as a balls-and-bins scenario, in which the $H = n-o(n)$ honest nodes each throw one ball into the $n$ slots (i.e., bins), and use the Poisson approximation approach.

Let $X_1^{(H)},\ldots,X_n^{(H)}$ be the number of balls thrown in the first to $n$th bins under this balls-and-bins scenario. Let $Y_1^{(H)},\ldots,Y_n^{(H)}$ be independent Poisson random variables with mean $\mu_P = H/n = 1-o(1)$, where a Poisson random variable $X$ with parameter $\mu$ is a discrete random variable taking values in $\mathbb{N}$ with distribution $\Pr[X = k] = \frac{\mu^k e^{-\mu}}{k!}$. Let $f(Y_1^{(H)},\ldots,Y_n^{(H)})$ be the number of bins (under the Poisson distribution) with exactly one ball. Then, it is well-known that for any indicator function $f(x_1,\ldots,x_n)$, \[ \Pr[f(X_1^{(m)}, \ldots, X_n^{(m)})] \leq e \sqrt{m} \Pr[f(Y_1^{(m)}, \ldots, Y_n^{(m)})] \]

Now, let $f(Y_1^{(H)},\ldots,Y_n^{(H)}) = \sum_{i=1}^n \I[Y_i^{(H)}=1]$ where $\I[E]$ is the indicator function for event $E$. Note that for any $1 \leq i \leq n$, $\I[Y_i^{(H)}=1]$ is a Bernoulli random variable with parameter $p = \mu_P e^{-\mu_P}$. Hence, $\E[\sum_{i=1}^n \I[Y_i^{(H)}=1]] = n \mu_P \; e^{-\mu_P} = H e^{-\mu_P}$. 
Moreover, using standard Chernoff bounds (see Theorem 4.4 in \cite{MitzenmacherUpfalBook}),
we get $\Pr[f(Y_1^{(H)},\ldots,Y_n^{(H)}) - H e^{-\mu_P}| \geq \frac{1}{2} H e^{-\mu_P}] \leq  2e^{-\frac{H e^{-\mu_P}}{12}}$. Since $H e^{-\mu_P} = (n-o(n)) e^{1-o(1)}$, $H e^{-\mu_P} \geq \frac{n}{2\sqrt{e}} \geq \frac{n}{4}$ for large enough $n$. 
Thus, $f(Y_1^{(H)},\ldots,Y_n^{(H)})$ is greater than $\frac{n}{8}$ with high probability (for large enough $n$).
And as a result, by the above (Poisson approximation approach) inequality, $f(X_1^{(H)},\ldots,X_n^{(H)})$ is greater than $\frac{n}{8}$ with high probability for large enough $n$, and the lemma statement follows.
\end{proof}

Next, we say a phase $i \in [1,n]$ (or coin flip) is \emph{good} if there is exactly one honest designated sender for phase $i$ (i.e., a single honest node chose $i$), and that sender successfully (almost-everywhere) broadcasted during initialization. Then, we show next that there are sufficiently many good phases within any $n$ successive phases, and that any good phase terminates with a random bit (0 or 1 with probability 1/2 each) being shared by almost all nodes.

\begin{lemma}
\label{lem:existsGoodRank}
    At least $n/8 - o(n)$ phases are good with high probability.
\end{lemma}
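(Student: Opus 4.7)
The plan is to combine two high-probability events: (i) many ranks are chosen by exactly one honest node, and (ii) almost all honest nodes lie in the broadcasting subset $C_b$ guaranteed by the AERID initialization. First, I would invoke Lemma \ref{lem:noCollision} to obtain, whp, a set $S \subseteq [1,n]$ of at least $n/8$ ranks each of which was chosen by exactly one honest node. Second, I would invoke Theorem \ref{thm:AEtoAEBroadcast} (equivalently, Theorem \ref{thm:informationDissemination}) applied to the initialization phase to obtain, whp, a broadcasting subset $C_b \subseteq C$ with $|C_b| = |C| - o(|C|) = n - o(n)$ such that every node in $C_b$ almost-everywhere broadcasts successfully during $InitCoin()$.

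Next, I would union-bound the two events above. This is legitimate because the per-honest-node rank choices are drawn independently of the random walk choices used by AERID, and the static Byzantine set is fixed before any randomness is revealed; hence conditioning on either event does not affect the whp guarantee of the other. Let $H_{\text{bad}} = V \setminus (C_b \cup B)$ denote the set of honest nodes that are not in $C_b$. By construction $|H_{\text{bad}}| = o(n)$, and likewise the Byzantine set has $|B| = o(n)$.

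Finally, a simple counting argument concludes. For a rank $i \in S$, let $v_i$ denote the unique honest node that chose $i$. The rank $i$ fails to be good only if $v_i \notin C_b$, i.e., only if $v_i \in H_{\text{bad}}$. Since distinct ranks in $S$ have distinct unique honest choosers, the number of such failing ranks is at most $|H_{\text{bad}}| = o(n)$. Therefore at least $|S| - |H_{\text{bad}}| \geq n/8 - o(n)$ ranks in $S$ correspond to good phases (exactly one honest designated sender, who broadcasts successfully during initialization), proving the lemma.

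The only mild subtlety is making explicit that rank-choice randomness and walk-randomness are independent so that the two whp events can be intersected; beyond that, the argument is a direct combination of Lemma \ref{lem:noCollision} and the AERID guarantee, with no new probabilistic estimates required.
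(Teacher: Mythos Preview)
Your proposal is correct and follows essentially the same approach as the paper: combine Lemma~\ref{lem:noCollision} with the AERID guarantee (Theorem~\ref{thm:informationDissemination}) and conclude by counting. Your write-up is in fact more explicit than the paper's, which simply asserts that ``the lemma statement follows from these two points'' without spelling out the injectivity observation (distinct ranks in $S$ have distinct honest choosers) or the independence of rank and random-walk randomness; both of your added details are sound and make the argument cleaner.
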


\begin{proof}
     To start with, Lemma \ref{lem:noCollision} states that for large enough $n$, it holds with high probability that at least $n/8$ ranks are chosen by exactly one honest node. Next, by Theorem \ref{thm:informationDissemination} (and the definition of AERID), it holds with high probability that $n-o(n)$ honest node succeed in almost-everywhere broadcasting during the initialization, or in other words, at most $o(n)$ honest nodes fail in almost-everywhere broadcasting. The lemma statement follows from these two points.
\end{proof}

\begin{lemma}
\label{lem:goodCoinBroadcasts}
  When any good phase $i \geq 1$ terminates, at least $n - o(n)$ honest nodes agree on a common binary value. Moreover, this value is 0 with probability 1/2, and 1 with probability 1/2.
\end{lemma}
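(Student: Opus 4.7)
The plan is to combine the ``good phase'' assumption with Theorem \ref{thm:AEtoAEBroadcast} (the correctness guarantee of the underlying AERID protocol from Subsection \ref{subsec:informationDissemination}), together with the path-enforcement rule used during the coin flip. Let $v^*$ denote the unique honest node whose rank equals $i \bmod n$ (guaranteed to exist and be unique by the goodness of phase $i$). The goodness assumption also states that $v^*$ succeeded in almost-everywhere broadcasting during initialization; so I may take $v^* \in C_b$, where $C_b$ and $C_r$ are the broadcasting and receiving subsets from Theorem \ref{thm:AEtoAEBroadcast}, each of size $|C|-o(|C|) = n - o(n)$. Note that $v^*$ picks $r_{v^*} \in \{0,1\}$ uniformly at random at the start of the phase.

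Next I would pin down the correspondence between the coin flip messages and the random walk tokens laid down during initialization. By the protocol description, during phase $i$ each honest intermediate node only forwards messages that exactly match a previously recorded walk of rank $i$ (incoming edge, outgoing edge, step count, token counter); all other messages are dropped. Therefore every message that arrives at a node $u$ and is counted as ``from $v^*$'' can be identified with exactly one of the $\Theta(n \log n)$ tokens that $v^*$ initiated during initialization. I would then split these tokens into \emph{good} ones (walks that stayed entirely in the core $C$ during initialization) and \emph{bad} ones (walks that traversed a Byzantine node at least once). The good tokens are retransmitted only by honest nodes, who copy the bit unchanged, so every good message reaching $u$ carries $r_{v^*}$; the bad tokens may carry adversarially chosen bits.

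Applying Theorem \ref{thm:AEtoAEBroadcast} with the source $v^* \in C_b$, I get that every $u \in C_r$ receives $\Theta(\log n)$ good messages and only $o(\log n)$ bad messages supposedly from $v^*$. Since $\Theta(\log n)$ dominates $o(\log n)$, the strict majority of the messages (supposedly) from $v^*$ received at $u$ is $r_{v^*}$, so $u$'s output for phase $i$ is exactly $r_{v^*}$. Since $|C_r| = n - o(n)$, this gives the first half of the lemma. For the second half, $v^*$ is honest and flips its bit uniformly at random from $\{0,1\}$ at the start of the phase, so the common value is $0$ with probability $1/2$ and $1$ with probability $1/2$.

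The main obstacle is the conceptual one in the second paragraph: one must be precise that the path-enforcement rule during the coin flip actually reduces the behavior of Byzantine nodes to exactly what Theorem \ref{thm:AEtoAEBroadcast} already accounts for. In particular, Byzantine nodes cannot inject a fresh ``message from $v^*$'' along an edge of the core boundary without matching a recorded token, and they cannot alter the bit of a good token mid-route because it is only handled by honest nodes. This reduction is what allows the AERID bound of $\Theta(\log n)$ good vs.\ $o(\log n)$ bad to transfer directly from the initialization phase to the coin-flip phase, despite the adversary knowing $r_{v^*}$ under the rushing, full-information model.
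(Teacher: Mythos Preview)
Your proposal is correct and follows essentially the same route as the paper's own proof: both identify the unique honest designated sender $v^*$, invoke the path-enforcement rule to argue that the coin-flip messages inherit exactly the good/bad token counts guaranteed by Theorem~\ref{thm:AEtoAEBroadcast}, and then take the majority to recover $r_{v^*}$ at every $u\in C_r$. Your write-up is in fact somewhat more explicit than the paper's about \emph{why} the path-enforcement reduces the adversary's power during the coin flip to what was already accounted for during initialization, which is a helpful clarification rather than a different argument.
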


\begin{proof}
    Consider some good phase $i \geq 1$. By definition, there is a single honest designated sender $v$, and $v$ chooses a random binary value uniformly at random. Moreover, $v$ must have succeeded in almost-everywhere broadcasting in the initialization part. Now, in phase $i$, $v$ transmits its random bit via the random walks computed during initialization. Recall that messages corresponding to these random walks are transmitted by honest nodes, whereas messages that do not correspond to these walks are ignored by honest nodes, which implies that $v$ also almost-everywhere broadcasts in phase $i$. More concretely, by Theorem \ref{thm:AEtoAEBroadcast} at least $n-o(n)$ honest nodes receive $\Theta(\log n)$ good tokens (that only visit honest nodes in the core and thus contain the random bit of $v$) and $o(\log n)$ bad tokens (having possibly visited Byzantine nodes and that may have the opposite random bit). Thus, these $n-o(n)$ honest nodes obtain $v$'s random bit after taking the majority bit out of its received messages. The lemma statement follows.
\end{proof}

\begin{lemma}
\label{lem:randomCoin}
    The proposed primitive correctly implements an eventual almost-everywhere common coin (EAECC) tolerating up to $|B| = o(n / \log n)$ Byzantine nodes. The initialization takes $\tilde{O}(n)$ rounds and each coin flip takes $\tilde{O}(n)$ rounds.
\end{lemma}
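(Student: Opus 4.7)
The proof splits naturally into two parts: verifying the EAECC specification (eventual almost-everywhere agreement on a random bit, with adversary tolerance $|B| = o(n/\log n)$) and establishing the $\tilde{O}(n)$ round bounds for initialization and for each coin flip. The plan is to derive correctness by cleanly composing Lemmas \ref{lem:existsGoodRank} and \ref{lem:goodCoinBroadcasts}, and to obtain the runtime by applying Theorem \ref{thm:informationDissemination} to the initialization and by re-using the capacity accounting from Algorithm \ref{alg:byzantineSamplingSparse} for the coin-flip replay.

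For correctness, I would first recall that the EAECC specification requires that among any $n$ successive calls to $CoinFlip(\cdot)$, at least one returns a bit that is uniform in $\{0,1\}$ and on which $n - o(n)$ honest nodes agree. By Lemma \ref{lem:existsGoodRank}, at least $n/8 - o(n)$ of the phases $i \in [1,n]$ are good whp, and by Lemma \ref{lem:goodCoinBroadcasts}, every good phase terminates with $n - o(n)$ honest nodes agreeing on a value that is 0 or 1 each with probability $1/2$. Composing these statements gives the eventual almost-everywhere property with substantial room to spare (a constant fraction of the first $n$ phases produce such a bit, not just one). The tolerance bound $|B| = o(n/\log n)$ is inherited directly from Theorem \ref{thm:informationDissemination}, which is the sole step in the construction where this hypothesis is actually used.

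For the runtime, the initialization is exactly one invocation of the AERID protocol of Subsection \ref{subsec:informationDissemination}, so its $\tilde{O}(n)$ round cost is immediate from Theorem \ref{thm:informationDissemination}. For a single coin flip, the designated sender replays its $\Theta(n \log n)$ precomputed random walks, each now carrying the chosen random bit; since during initialization these walks were scheduled over $\Theta(n/\log^2 n)$ phases of Algorithm \ref{alg:byzantineSamplingSparse} (each consuming at most $\capa = \Theta(\log^3 n)$ tokens per edge per round and lasting $2\tau_C = O(\log n)$ rounds), the same schedule is reused for the replay, yielding $\Theta(n/\log^2 n) \cdot O(\log n) = \tilde{O}(n)$ rounds per coin flip. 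The main obstacle I anticipate is ensuring that Byzantine nodes cannot inflate this replay congestion: each honest node must store, during initialization, for every token it forwarded, the source ID, rank, token counter, step number, incoming edge, and outgoing edge, and during the replay discard any message not matching such a recorded entry. Once this filtering is in place, the per-edge replay load is bounded by the original initialization load, and the claimed $\tilde{O}(n)$ cost per coin flip follows directly from the capacity parameters of the Byzantine Random Walk Protocol.
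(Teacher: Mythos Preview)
Your proposal is correct and follows essentially the same approach as the paper: correctness via Lemmas \ref{lem:existsGoodRank} and \ref{lem:goodCoinBroadcasts}, initialization runtime via Theorem \ref{thm:informationDissemination}, and coin-flip runtime via the $\tilde{O}(n)$ congestion incurred when the $\Theta(n\log n)$ recorded walks of a single source are replayed through its constant-degree edges. Your explicit accounting via the $\numPhases = \Theta(n/\log^2 n)$ schedule of Algorithm \ref{alg:byzantineSamplingSparse} and your remark about honest nodes filtering on stored records are consistent elaborations of what the paper states more tersely.
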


\begin{proof}
    The correctness follows from Lemmas \ref{lem:existsGoodRank} and \ref{lem:goodCoinBroadcasts}.
    As for the runtime, that of the initialization phase follows from Theorem \ref{thm:informationDissemination} whereas that of the coin flips follows from the $\tilde{O}(n)$ congestion incurred by the messages ($\tilde{O}(n)$ of them per phase) travelling along the $O(\log n)$ length random walks.
\end{proof}

\subsection{Faster Eventual Almost-Everywhere Common Coin}
\label{subsec:fasterCoin}

This second eventual almost-everywhere common coin (EAECC) primitive follows the same schema as that of the previous subsection. The main difference lies in our use of the second AERID primitive (see Subsection \ref{subsec:fasterCoin}) for the initialization. Although the paths computed during the initialization are more complex, they have nicer congestion properties and we use this to speed up the coin flips compared to the previous subsection. 

In more detail, the coin is initialized by executing the AERID primitive described in Subsection \ref{subsec:fasterCoin}. This primitive also generates $\Theta(n \log n)$ tokens per node, but in a more gradual and indirect fashion. Indeed, each node does not directly generate these tokens, but only sends $\tilde{\Theta}(\log^{3+2\varepsilon} n)$ random walks (or tokens) (for any arbitrarily chosen small $\varepsilon > 0$). The tokens generated by some node $v$ are said to have the source ID of $v$. They take $\Theta(\log n)$ steps, before getting duplicated twice; this possibly, or in fact likely, happens at some other node $u$ but without changing the source ID (which remains that of $v$). The process is repeated until we have $\Theta(n \log n)$ tokens with the source ID of $v$. The fact that these tokens get duplicated (mostly) at nodes beside $v$ greatly helps out in reducing the congestion when the token's paths are reused for the coin flips. In fact, we will show that the edge congestion is $\tilde{O}(1)$, which should be compared with the $\tilde{\Theta}(n)$ edge congestion within the coin flips in the previous subsection.
Note that here again, we use the AERID primitive with a slight modification: tokens contain the source ID's rank, in addition to the source ID and a token counter. The latter can be a pair where the first element is a stage number and the second element a unique ID for all tokens with the same source duplicated at a given node when that stage starts, such that the pair uniquely determines all tokens with a given source ID. Each node that is visited by a token stores the incoming and outgoing edge, the token's source ID, rank and counter as well as both the stage and steps during which the visit happens. Once again, storing this information allows us to send, during the coin flipping phases, messages along specifically chosen random walks (e.g., corresponding to a given rank) while ensuring messages that diverge from the stored walk are ignored. 

Next, we detail how, within the $i$th coin flip, a designated sender $v$ almost-everywhere broadcasts its randomly chosen bit $r_v$. This is done as follows. Node $v$ generates, for each of the $\Theta(\log^{3+2\varepsilon} n)$ tokens of the initialization, exactly as many (i.e., $\Theta(\log^{3+2\varepsilon} n)$) messages that contain the random bit $r_v$, the rank of $v$ and the corresponding (unique) token counter (which, as mentioned previously, is a pair here). These messages walk along the $\Theta(\log n)$ steps taken by the corresponding token. Each such step takes some $O(\log^{4+3\varepsilon} n)$ rounds and any message that cannot be transmitted because of edge overcongestion is simply discarded. (However, the AERID primitive we use ensures that for most ranks, no message fails to be transmitted due to edge overcongestion.) At the message's destination (i.e., the destination of the corresponding token during the first stage of the AERID primitive), the message is duplicated as many times as the corresponding token was, and the associated information (in particular the token counter) is updated to match the initialization phase. This repeats until the messages reach the last node reached by their corresponding tokens (at the end of the AERID primitive). Throughout this process, honest nodes ensure only messages with the rank $i$ are transmitted (and know how to forward the messages along the walks due to the information stored during the initialization). Crucially, the properties of the second AERID primitive ensures that the almost-everywhere broadcast fails for $o(n)$ ranks only, whether it is because too many tokens visited a Byzantine node during the initialization (and thus the random bit may get corrupted during the coin flip) or because too many messages take the same edge for a given step.

\paragraph{Analysis.} Since ranks are chosen here exactly as in the previous EAECC primitive, Lemma \ref{lem:noCollision} also applies here. Recall that we say that a phase $i \in [1,n]$ (or coin flip) is \emph{good} if there is exactly one honest designated sender for phase $i$ (i.e., a single honest node chose $i$) and that sender successfully (almost-everywhere) broadcasted during initialization. Here, we say that a good phase $i$ is additionally \emph{low-congestion} if the sender successfully (almost-everywhere) broadcasted during initialization without incurring high edge congestion, or more concretely, if even if it successfully (almost-everywhere) broadcasted even if good tokens except the low-congestion good tokens are discarded. Then, we show that there are sufficiently many good and low-congestion phases within any $n$ successive phases.

\begin{lemma}
\label{lem:existsGoodRank2}
    At least $n/8 - o(n)$ phases are good and low-congestion with high probability.
\end{lemma}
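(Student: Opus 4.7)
\medskip

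\noindent\textbf{Proof plan for Lemma \ref{lem:existsGoodRank2}.} The plan is to combine Lemma \ref{lem:noCollision} with the guarantees of the second AERID primitive (Theorem \ref{thm:informationDisseminationCongestionFree} and, more precisely, Theorem \ref{thm:correctnessCongestionAERID}) in essentially the same way as in the proof of Lemma \ref{lem:existsGoodRank}, but now tracking the additional low-congestion attribute. First, I would invoke Lemma \ref{lem:noCollision}: since ranks in the present primitive are chosen exactly as before (each honest node picks a uniform rank in $[1,n]$ independently), with high probability at least $n/8$ ranks are picked by exactly one honest node. Let $S$ denote the set of such singleton-honest ranks and $H_S \subseteq V$ the (at least $n/8$) honest nodes that chose a rank in $S$.

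Next, I would apply Theorem \ref{thm:correctnessCongestionAERID} to the initialization, which is the second AERID primitive run under $|B| = o(n/\log^2 n)$ Byzantine nodes. That theorem produces a broadcasting subset $C_b \subseteq C$ with $|C_b| = |C| - o(|C|) = n - o(n)$ such that every node $v \in C_b$ is received, at $n - o(n)$ honest nodes, via $\Theta(\log n)$ good and low-congestion tokens bearing $v$'s source ID and only $o(\log n)$ bad tokens with that source ID. In particular, every $v \in C_b$ successfully almost-everywhere broadcasts during initialization even when good-but-high-congestion tokens are discarded, which is exactly the additional requirement in the definition of a good and low-congestion phase.

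Finally, I would intersect: the ranks whose unique honest owner lies in $C_b$ are by definition good and low-congestion phases. Since $|H_S| \geq n/8$ whp and the set of honest nodes outside $C_b$ has size $o(n)$, the number of ranks in $S$ whose owner lies in $C_b$ is at least $n/8 - o(n)$ whp, after a union bound over the two high-probability events. This gives the claimed bound.

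The main (and only nontrivial) obstacle is the bookkeeping check that Theorem \ref{thm:correctnessCongestionAERID} really gives the ``low-congestion'' version of almost-everywhere broadcast needed here, that is, that success for $v \in C_b$ is witnessed solely by low-congestion good tokens; this is precisely the statement of the theorem (and of Theorem \ref{thm:informationDisseminationCongestionFree}), so no further work is required beyond invoking it and combining via union bound with Lemma \ref{lem:noCollision}.
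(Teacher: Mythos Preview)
Your proposal is correct and follows essentially the same approach as the paper: invoke Lemma~\ref{lem:noCollision} to get $\ge n/8$ singleton-honest ranks, invoke the low-congestion AERID guarantee (Theorem~\ref{thm:informationDisseminationCongestionFree}/\ref{thm:correctnessCongestionAERID}) to get that all but $o(n)$ honest nodes succeed in almost-everywhere broadcasting via low-congestion good tokens, and subtract. The paper's proof is just a terser version of exactly this argument.
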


\begin{proof}
     To start with, Lemma \ref{lem:noCollision} states that for large enough $n$, it holds with high probability that at least $n/8$ ranks are chosen by exactly one honest node. Next, by Theorem \ref{thm:informationDisseminationCongestionFree} (and the definition of AERID), it holds with high probability that $n-o(n)$ honest node succeed in almost-everywhere broadcasting during the initialization even if good tokens except the low-congestion good tokens are discarded. Or in other words, at most $o(n)$ honest nodes fail in almost-everywhere broadcasting even if when the coin flip is executed, all walks are allowed only $O(\log^4 n)$ rounds per step. The lemma statement follows from these two points.
\end{proof}

\begin{lemma}
\label{lem:goodCoinBroadcasts2}
  When any good and low-congestion phase $i \geq 1$ terminates, at least $n - o(n)$ honest nodes agree on a common binary value. Moreover, this value is 0 with probability 1/2, and 1 with probability 1/2.
\end{lemma}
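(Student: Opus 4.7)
The plan is to essentially mirror the proof of Lemma \ref{lem:goodCoinBroadcasts}, but replace the use of Theorem \ref{thm:AEtoAEBroadcast} with Theorem \ref{thm:correctnessCongestionAERID} (the correctness statement for the low-congestion AERID protocol) and carefully justify that the coin flip's message dissemination actually follows the precomputed random-walk paths despite the per-step congestion budget.

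First, I would fix a good and low-congestion phase $i$. By the definition of ``good,'' there is a unique honest designated sender $v$ (the single honest node of rank $i \bmod n$), and by ``low-congestion,'' $v$ successfully almost-everywhere broadcasted during the initialization \emph{even if} all good tokens except the low-congestion good tokens are discarded. Let $r_v \in \{0,1\}$ be the bit $v$ flips uniformly at random at the start of the phase. Then $v$ sends one message per (low-congestion good) token it had initiated in the first stage of the initialization, and each such message is forwarded along the exact edge sequence recorded for its corresponding token; honest nodes drop every message that does not match a stored (rank-$i$) walk step or that would exceed the per-edge budget of $O(\log^{4+3\varepsilon} n)$ messages in a single step, and duplicate messages at the nodes where the corresponding tokens were duplicated.

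Next, I would argue that this coin-flip dissemination faithfully retraces the low-congestion good tokens with source ID $v$. Since low-congestion good tokens are, by definition, shared by at most $O(\log^{4+3\varepsilon} n)$ tokens (of any type) with the same source ID on any edge in any single step, the per-step budget is sufficient: no low-congestion good message from $v$ is discarded due to overcongestion. Hence every low-congestion good token of $v$'s initialization corresponds to an uncorrupted delivery of $r_v$ at its endpoint. Applying Theorem \ref{thm:correctnessCongestionAERID} to $v$ (which lies in the broadcasting subset $C^* = C_b \cap C''$ by the definition of a good low-congestion phase), there is a receiving set $C_r \subseteq C$ of size $|C|-o(|C|) = n - o(n)$ such that every $u \in C_r$ receives $\Theta(\log n)$ low-congestion good messages carrying $r_v$ and only $o(\log n)$ bad messages (possibly injected or tampered by Byzantine nodes). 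Taking the majority over the received messages, every such $u$ outputs $r_v$.

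For the distribution claim, since $v$ is honest and its bit is chosen uniformly at random independently of anything the adversary does up to the start of phase $i$ (the paths were fixed in the initialization, and the rushing adversary cannot alter the deterministic forwarding rule of honest nodes), the common output bit is $0$ with probability $1/2$ and $1$ with probability $1/2$. The main subtlety, and the only place where real care is required, is the coupling between the stored random-walk paths and the coin-flip dissemination: one needs to ensure that the ``low-congestion'' tag carried through the analysis of Theorem \ref{thm:correctnessCongestionAERID} lines up exactly with the per-step budget enforced during the coin flip, so that no good message is dropped by honest guards at the core boundary. This is precisely what the definition of a low-congestion good token was tailored for, so the coupling is immediate and the lemma follows.
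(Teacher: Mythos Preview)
Your proposal is correct and follows essentially the same approach as the paper's own proof: both unpack the definition of a good low-congestion phase, argue that the per-step/per-stage budget of $O(\log^{4+3\varepsilon} n)$ is exactly what is needed so that no low-congestion good token's message is dropped, and then invoke Theorem~\ref{thm:correctnessCongestionAERID} to conclude that $n-o(n)$ nodes receive $\Theta(\log n)$ good messages versus $o(\log n)$ bad ones and hence recover $r_v$ by majority. Your write-up is in fact slightly more explicit than the paper's about the coupling between the stored paths and the coin-flip forwarding rule, and about the independence of $r_v$ from the adversary's prior actions.
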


\begin{proof}
    Consider some good phase $i \geq 1$. By definition, there is a single honest designated sender $v$, and $v$ chooses a random binary value uniformly at random. Moreover, $v$ must have succeeded in almost-everywhere broadcasting in the initialization part, even if good tokens except the low-congestion good tokens are discarded. Now, in phase $i$, $v$ transmits its random bit via the random walks computed during initialization. Recall that messages corresponding to these random walks are transmitted by honest nodes, whereas messages that do not correspond to these walks are ignored by honest nodes. Moreover, at most $O(\log^{4+3\varepsilon} n)$ rounds are allowed for each step and stage of these random walks, which allows at most $O(\log^{4+3\varepsilon} n)$ messages to transit through this edge in that step and that stage. Since low-congestion tokens (see Subsection \ref{subsec:congestionAERID}) by definition transit through edges with at most $O(\log^{4+3\varepsilon} n)$ congestion per step and stage, this implies that $v$ also almost-everywhere broadcasts in phase $i$ (despite the fact that some messages may be discarded due to runtime limitations). More concretely, by Theorem \ref{thm:correctnessCongestionAERID} at least $n-o(n)$ honest nodes receive $\Theta(\log n)$ good and low-congestion tokens (that only visit honest nodes in the core and thus contain the random bit of $v$, and never transit through an over-congested edge) and $o(\log n)$ bad tokens (having possibly visited Byzantine nodes and that may have the opposite random bit) with the source ID of $v$ during the initialization. Thus, during the $i$th coin flip these $n-o(n)$ honest nodes obtain $v$'s random bit after taking the majority bit out of its received messages. The lemma statement follows.
\end{proof}

\begin{lemma}
\label{lem:randomCoin2}
    The proposed primitive correctly implements an eventual almost-everywhere common coin (EAECC) tolerating up to $|B| = o(n / \log^2 n)$ Byzantine nodes. The initialization takes $\tilde{O}(n)$ rounds and each coin flip takes $\tilde{O}(1)$ rounds.
\end{lemma}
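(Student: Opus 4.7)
The plan is to assemble the statement from the three preceding lemmas, mirroring the structure used in the proof of Lemma \ref{lem:randomCoin}. For correctness, I would argue that Lemmas \ref{lem:existsGoodRank2} and \ref{lem:goodCoinBroadcasts2} together already imply the EAECC property. Indeed, Lemma \ref{lem:existsGoodRank2} guarantees that whp at least $n/8 - o(n)$ phases out of any $n$ successive phases are good and low-congestion, and Lemma \ref{lem:goodCoinBroadcasts2} then guarantees that each such phase terminates with at least $n - o(n)$ honest nodes agreeing on a uniformly random binary value. Since EAECC only requires that within $n$ successive coin flips at least one flip be a genuine random coin shared almost-everywhere, the existence of even a single good and low-congestion phase suffices, and we have linearly many whp.

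For the Byzantine tolerance, I would just trace where the $o(n/\log^2 n)$ bound enters: it comes entirely from the underlying AERID primitive of Subsection \ref{subsec:congestionAERID} (via Theorem \ref{thm:informationDisseminationCongestionFree}), which the coin primitive uses as a black box for initialization. Since Lemmas \ref{lem:existsGoodRank2} and \ref{lem:goodCoinBroadcasts2} both invoke this AERID tolerance, no worse bound is introduced by the coin-flipping layer.

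For the runtime, the initialization bound of $\tilde{O}(n)$ rounds is immediate from Theorem \ref{thm:informationDisseminationCongestionFree}. The $\tilde{O}(1)$ per-coin-flip bound is the only place where a small calculation is needed, so this is where I would focus. A coin flip consists of pushing messages along the precomputed walk paths recorded during initialization. There are $O(\log n)$ stages, each of which internally consists of $O(\log n)$ random-walk steps, and at each step honest nodes forward only messages conforming to the stored walks. Because the protocol restricts attention to low-congestion tokens, which by definition traverse any given edge in a given (stage, step) pair with multiplicity at most $O(\log^{4+3\varepsilon} n)$, a single step can be drained in $O(\log^{4+3\varepsilon} n)$ rounds (with any overflow dropped, which by Lemma \ref{lem:goodCoinBroadcasts2} does not spoil the almost-everywhere guarantee). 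Multiplying the three factors yields $O(\log n) \cdot O(\log n) \cdot O(\log^{4+3\varepsilon} n) = \tilde{O}(1)$ rounds per coin flip.

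The main obstacle, insofar as there is one, is verifying that the $O(\log^{4+3\varepsilon} n)$ per-step timeout cleanly aligns with the ``low-congestion'' definition from Subsection \ref{subsec:congestionAERID}, so that the tokens whose paths matter (the good low-congestion ones used in Lemma \ref{lem:goodCoinBroadcasts2}) are never themselves truncated; once that identification is made, the rest is a bookkeeping exercise.
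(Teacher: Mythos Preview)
Your proposal is correct and follows essentially the same approach as the paper: correctness from Lemmas~\ref{lem:existsGoodRank2} and~\ref{lem:goodCoinBroadcasts2}, the initialization bound from Theorem~\ref{thm:informationDisseminationCongestionFree}, and the $\tilde{O}(1)$ per-flip bound from multiplying $O(\log n)$ stages $\times$ $O(\log n)$ steps $\times$ $O(\log^{4+3\varepsilon} n)$ rounds per step. Your write-up is in fact more detailed than the paper's own two-sentence proof, but the structure and the calculation are identical.
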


\begin{proof}
    The correctness follows from Lemmas \ref{lem:existsGoodRank2} and \ref{lem:goodCoinBroadcasts2}.
    As for the runtime, that of the initialization follows from Theorem \ref{thm:informationDisseminationCongestionFree} whereas that of the coin flips follows from the fact that during a coin flip, each step is allowed $\tilde{O}(1)$ rounds only, and messages take at most $O(\log^2 n)$ steps (as there are $O(\log n)$ steps per stage in the AERID primitive, and $O(\log n)$ stages), so the coin flip takes $\tilde{O}(1)$ rounds.
\end{proof}   

\section{Fully-Distributed Byzantine Agreement Protocol}
\label{sec:BASparse}
In this section, we present our main result: a fully-distributed almost-everywhere Byzantine agreement algorithm (Algorithm \ref{alg:ByzantineAgreement}). This algorithm is based on Rabin's algorithm \cite{Rabin_1983} and the main difficulty lies in implementing a random common coin primitive in a fully-distributed fashion. Due to the fully-distributed constraint, we settle on implementing an eventual almost-everywhere common coin (or EAECC, see Section \ref{sec:randomCoin}), and we show that this suffices to solve almost-everywhere Byzantine agreement but with a significantly slower runtime (compared to the setting of Rabin's algorithm \cite{Rabin_1983}, in which the common coin is provided to all nodes by a trusted third party).

First, we give a high-level description of Algorithm \ref{alg:ByzantineAgreement}. The algorithm runs for $p = \Theta(n \log n)$ phases (of either $\tilde{O}(n)$ or $\tilde{O}(1)$ rounds, depending on the EAECC protocol used) and terminates afterward. Nodes start with their vote set to their input value. In each phase, nodes check if there already exists a strong majority of nodes (i.e., 0.9 of them) that agree on some vote. If so, nodes change their vote accordingly, to that majority vote. Otherwise, nodes flip the eventual almost-everywhere common coin and set their vote to the coin's output. Note that while Rabin's algorithm terminates in an expected constant number of phases, or terminates with high probability within $O(\log n)$ phases, Algorithm \ref{alg:ByzantineAgreement} requires significantly more phases because we use a weaker \emph{eventual} common coin.

Next, we give a more precise description of Algorithm \ref{alg:ByzantineAgreement}.
The algorithm starts by initializing the eventual almost-everywhere common coin (using $InitCoin()$, see Section \ref{sec:randomCoin}).
After which, the algorithm runs $p$ phases, each decomposed into two subphases. Consider phase $i \in [1,p]$. In the first subphase, all nodes run the Byzantine Random Walk protocol (Algorithm \ref{alg:byzantineSamplingSparse}) from Section \ref{sec:byzantineRandomWalk} for $\tilde{O}(1)$ rounds. More concretely, each node initiates some $\tot = \Theta(\log^3 n)$ tokens that contain the node's vote and execute a random walk of $O(\log n)$ length. Doing so guarantees that almost all nodes sample with good precision the proportion of both votes, and thus allows almost all nodes to detect if there exists a strong majority.
In the second subphase, nodes flip the EAECC (by calling $CoinFlip(i)$, see Section \ref{sec:randomCoin}). If the first EAECC primitive is used, then this takes $\tilde{O}(n)$ rounds but tolerates up to $|B| = o(n/\log n)$ Byzantine nodes. Whereas if the second EAECC primitive is used, this takes $\tilde{O}(1)$ rounds but tolerates up up to $|B| = o(n/\log^2 n)$ Byzantine nodes only. 
Once the two subphases are done, each node $v$ first checks if it detected a strong majority in the first subphase.
If so, $v$ changes its vote (if different) to that majority vote. Otherwise, $v$ changes its vote to the coin's output obtained during the second subphase.
Finally, once all $p$ phases are done, nodes terminate with their current vote.

\begin{algorithm}[ht]
\caption{Byzantine Agreement Algorithm for honest node $v$ with input $b_v$}
\label{alg:ByzantineAgreement}
\begin{algorithmic}[1]

\State $vote_v := b_v$
\State $InitCoin()$  

\For{phase $i = 1$ to $p$} 
    \State $samples_v$ are the tokens obtained by running Algorithm \ref{alg:byzantineSamplingSparse} with $\tot = \Theta(\log^3 n)$
    \State $maj_v :=$ the majority vote among the votes in $samples_v$
    \State $tally_v :=$ the number of majority votes in $samples_v$ divided by $n$
    \State $bit_v := CoinFlip(i)$

    \If{$tally_v > 0.9$} 
        \State $vote_v := maj_v$ \Comment{Set to majority vote} \label{line:maj}
    \Else
        \State $vote_v := bit_v$  \Comment{Set to EAECC flip} \label{line:coin}
    \EndIf

\EndFor

\end{algorithmic}
\end{algorithm}

Now, we analyze the behavior of Algorithm \ref{alg:ByzantineAgreement}. We remind that good tokens are tokens that only ever visit good nodes (and in particular the core subgraph) and thus remain uncorrupted. Other tokens, which we call bad, may be corrupted (i.e., their vote changed) when visiting a Byzantine node.

\begin{lemma}
\label{lem:estimationByz}
    Consider some phase $i \in [1,p]$. During the sampling subphase, each (honest) node sends out $\tot = \Theta(\log^3 n)$ tokens via the Byzantine Random Walk protocol. Let $f$ be the fraction of some vote $m$ held by honest nodes. Then there exists a large enough subset $R \subseteq C$ of honest nodes of size $|R| = n - o(n)$ such that for any node $u \in R$, node $u$ receives $(f \pm o(1)) \tot$ good tokens with vote $m$ and $o(\tot)$ bad tokens.
\end{lemma}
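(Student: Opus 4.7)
The plan is to apply the Byzantine Random Walk Theorem (Theorem \ref{thm:ByzSamplingFinal}) directly with $\tot = \Theta(\log^3 n)$ and $\kappa = (|B|\log n)/|C| = o(1)$ (which follows from $|B| = o(n/\log n)$), and then combine a per-node Chernoff argument for the good tokens with a global counting argument for the bad tokens, similar in spirit to the proofs of Claims \ref{claim:goodTokensSpreading} and \ref{claim:badTokens}.

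First I would handle the good tokens. Let $m \in \{0,1\}$ be the vote of interest and let $f$ be its fraction among honest nodes. Since $|C| = n - O(|B|) = n - o(n)$, the fraction of core nodes holding vote $m$ is $f \pm o(1)$, so the number of tokens initiated in $C$ by nodes with vote $m$ is $(f \pm o(1))\,|C|\,\tot$. By Item (1) of Theorem \ref{thm:ByzSamplingFinal}, at most $O(\kappa\,\tot\,|C|) = o(\tot\,|C|)$ of these become bad, so the number of good tokens with vote $m$ is still $(f \pm o(1))\,|C|\,\tot$. Now fix any $u \in C$. By Item (3), each such good token ends at $u$ with probability $\Theta(1/|C|)$ (up to the negligible $1/n^3$ error), so the expected number of good tokens with vote $m$ received by $u$ is $(f \pm o(1))\,\tot$. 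Because $\tot = \Theta(\log^3 n)$, applying Chernoff bounds with deviation parameter $\delta = 1/\log n = o(1)$ yields $\delta^2 \tot = \Omega(\log n)$, so with high probability $u$ receives $(f \pm o(1))\,\tot$ good tokens with vote $m$. A union bound over all $u \in C$ then shows that this holds simultaneously for every core node whp.

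Next I would handle the bad tokens via a counting argument. By Item (1) of Theorem \ref{thm:ByzSamplingFinal}, the total number of bad tokens (those entering or leaving $C$) is at most $O(\kappa\,\tot\,|C|) = o(\tot\,|C|)$. If more than $o(|C|)$ core nodes each received $\Omega(\tot)$ bad tokens, then the total number of bad tokens received would be $\omega(\tot\,|C|)$, contradicting the above bound. Hence there exists a subset $R' \subseteq C$ of size $|R'| = |C| - o(|C|) = n - o(n)$ such that every $u \in R'$ receives $o(\tot)$ bad tokens.

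Finally, intersecting $R'$ with the set of core nodes for which the good-token Chernoff bound holds (which is all of $C$ whp) gives the desired set $R \subseteq C$ with $|R| = n - o(n)$. The main technical obstacle is simply ensuring $\tot = \Theta(\log^3 n)$ is large enough for the Chernoff concentration to give an additive error of $o(\tot)$; this works because we only need deviation $\delta = 1/\log n$, and $\tot/\log^2 n = \Omega(\log n)$ is what makes the whp bound go through. Everything else is a direct application of the already-established Byzantine Random Walk Theorem.
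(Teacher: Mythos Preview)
Your proposal is correct and follows essentially the same approach as the paper's proof: both invoke the Byzantine Random Walk Theorem with $\kappa = o(1)$ to count good tokens carrying vote $m$, apply Chernoff at each core node to get the $(f\pm o(1))\,\tot$ concentration, and use a global counting (averaging) argument on the $o(\tot\,|C|)$ bad tokens to extract the large subset $R$. The only cosmetic differences are that you make the Chernoff deviation parameter $\delta = 1/\log n$ explicit, while the paper instead explicitly invokes Lemma~\ref{lem:boundCoreEdges} to pin down $p(u) = (1\pm o(1))/|C|$ before applying Chernoff.
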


\begin{proof}
    The sampling phase executes the Byzantine Random Walk protocol with $\tot = \Theta(\log^3 n)$. Recall that $f$ is the fraction of the vote $m$ held by honest nodes. This implies that there are $(f \pm o(1)) |C|$ core nodes with vote $m$, and these core nodes generate $(f \pm o(1)) |C| \tot$ tokens containing the vote $m$.  
    
    We show that most of these tokens disseminate uniformly throughout the core and obey concentration bounds. First, by Items (1) and (2) of Theorem \ref{thm:ByzSamplingFinal}, at least $(f \pm o(1)) |C| \tot - o(|C| \tot) = (f \pm o(1)) |C| \tot$ of these tokens are good, i.e., they walk only in $C$ (for $\kappa = o(1)$). Hence, each such walk mixes in the core, and ends at some node $u \in C$ with probability $p(u) = deg_C(u)/(2|E_C|) \pm 1/n^3$, by Item (3) of Theorem \ref{thm:ByzSamplingFinal}. By Lemma \ref{lem:boundCoreEdges} where $\mu = |B|/|C| = o(1)$, $(1-o(1)) d|C|/2 \leq |E_C| \leq d |C|/2$ (for $\mu = |B|/|C| = o(1)$). As a result, $1/|C| - 1/n^3 \leq p(u) \leq (1+o(1))/ |C| + 1/n^3$.
    Therefore, any node $u \in C$ receives in expectation $(f \pm o(1)) \tot$ good tokens containing vote $m$. Since the good tokens execute independent random walks, and $\tot = \Theta(\log^3 n)$, we can use standard Chernoff bounds (see Theorem 4.4 in \cite{MitzenmacherUpfalBook}) to prove that any node $u \in C$ receives $(f \pm o(1)) \tot$ good tokens containing vote $m$ whp.
    
    Finally, by Item (1) of Theorem \ref{thm:ByzSamplingFinal}, at most $o(|C| \tot)$ bad tokens enter and end in $C$ (for $\kappa = o(1)$). This implies that there exists a large enough subset $C' \subset C$ with $|C'| = |C| - o(|C|)$, such that any node $u \in C'$ ends with $o(\tot)$ bad tokens. Thus, we get the lemma statement. 
\end{proof}

\begin{lemma}
\label{lem:maintainingAgreement}
Consider some phase $i \in [1,p]$. If at least an $f = 1-o(1)$ fraction of the honest nodes start the phase with the same vote, then with high probability at least a $1-o(1)$ fraction of the honest nodes end the phase with that vote.
\end{lemma}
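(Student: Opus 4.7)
My plan is to apply Lemma \ref{lem:estimationByz} directly with the vote $m$ that is held by the $1-o(1)$ fraction of honest nodes at the start of the phase. Concretely, let $f = 1-o(1)$ be this fraction; there is no other vote held by more than an $o(1)$ fraction of honest nodes. Lemma \ref{lem:estimationByz} then guarantees the existence of a subset $R \subseteq C$ with $|R| = n - o(n)$ such that every $u \in R$ receives $(f \pm o(1))\tot = (1-o(1))\tot$ good tokens carrying vote $m$ and only $o(\tot)$ bad tokens.

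The next step is to translate this sampling guarantee into a decision in the algorithm. For any $u \in R$, the fraction of received tokens carrying vote $m$ is at least
\[
\frac{(1-o(1))\tot}{(1-o(1))\tot + o(\tot) + \text{(non-}m\text{ good tokens)}} \geq 1 - o(1).
\]
The ``non-$m$ good tokens'' contribution is bounded by applying Lemma \ref{lem:estimationByz} a second time to each alternative vote $m'$, whose honest-held fraction is $o(1)$; so these contribute at most $o(\tot)$ tokens at $u$ (possibly after intersecting $R$ with an additional subset of size $n-o(n)$, still of size $n-o(n)$). Thus for $n$ sufficiently large, every $u \in R$ computes $maj_u = m$ and $tally_u > 0.9$. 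By Line \ref{line:maj} of Algorithm \ref{alg:ByzantineAgreement}, each such $u$ sets $vote_u := m$ irrespective of the EAECC output obtained in the subphase.

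Since $|R| = n - o(n)$, at least a $1-o(1)$ fraction of honest nodes end the phase holding vote $m$, which is exactly the claim. The guarantees from Lemma \ref{lem:estimationByz} hold with high probability, so the conclusion holds with high probability as well.

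The argument is essentially bookkeeping on top of Lemma \ref{lem:estimationByz}; the only mild subtlety is the step that bounds the total number of non-$m$ tokens received by nodes in $R$, which requires applying the sampling lemma uniformly across the (binary, here) vote space so that the exceptional set of size $o(n)$ does not grow. Since the vote space has size $2$, one application to each value suffices and the intersected ``good'' subset is still of size $n-o(n)$, so no real obstacle arises.
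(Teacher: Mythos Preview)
Your proposal is correct and follows essentially the same approach as the paper: apply Lemma~\ref{lem:estimationByz} to the dominant vote $m$, conclude that nodes in the resulting large subset $R$ see an overwhelming fraction of $m$-tokens, and hence take the branch in Line~\ref{line:maj}. The paper's proof is slightly terser in that it asserts directly that nodes in $R$ receive $o(\tot)$ good tokens with the other vote, whereas you obtain this explicitly via a second invocation of Lemma~\ref{lem:estimationByz} and an intersection of the two exceptional-free subsets; this extra care is harmless since the vote alphabet is binary.
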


\begin{proof}
At least $n - o(n)$ honest nodes start the phase with the same vote, denoted by $m$, which implies that vote $m$ is held by a fraction $f = 1-o(1)$ of honest nodes. Thus, by Lemma \ref{lem:estimationByz}, with high probability there exists a subset $R$ of $n-o(n)$ nodes that receive $(1-o(1)) \tot$ good tokens with vote $m$, $o(\tot)$ good tokens with the other vote and $o(\tot)$ bad tokens. By the algorithm description, all nodes in $R$ detect that $m$ is in a strong majority and thus end the phase with vote $m$.
\end{proof}

\begin{lemma}
\label{lem:reachAgreement}
Consider some phase $i \in [1,p]$. If the $i$th coin flip (phase) is good, then almost-everywhere agreement is reached at the end of the phase with probability $1/2 - o(1)$. 
\end{lemma}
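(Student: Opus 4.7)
}

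The plan is to do a case analysis based on whether a strong majority vote exists at the start of the phase, using the coin's uniform randomness (guaranteed by the phase being good) to bridge the case where only \emph{some} honest nodes detect a strong majority. First I would set up the relevant subsets. By Lemma \ref{lem:estimationByz} applied to each vote value $m \in \{0,1\}$, there exists a subset $R_s \subseteq C$ of size $|R_s| = n - o(n)$ such that every $u \in R_s$ receives, for each $m$, exactly $(f_m \pm o(1)) \tot$ good tokens with vote $m$ and $o(\tot)$ bad tokens total, where $f_m$ is the fraction of honest nodes starting the phase with vote $m$. By Lemma \ref{lem:goodCoinBroadcasts} (or \ref{lem:goodCoinBroadcasts2}, depending on the EAECC primitive), a good coin flip delivers a uniformly random bit $b^* \in \{0,1\}$ to a subset $R_c \subseteq C$ of size $|R_c| = n - o(n)$. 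Let $R = R_s \cap R_c$, so $|R| = n - o(n)$.

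Next I would observe that at most one vote value can be detected as a strong majority by any node in $R$. Indeed, if some $u \in R$ sees tally $> 0.9$ for vote $m$, then by the $(f_m \pm o(1))$ accuracy, $f_m > 0.9 - o(1)$ and hence $f_{1-m} < 0.1 + o(1)$. Then every $u' \in R$ sees tally for $1-m$ at most $0.1 + o(1) < 0.9$, so no node in $R$ detects $1-m$ as a strong majority. This rules out ``cross'' disagreement where different nodes in $R$ switch to opposing majority votes.

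Now I would split into two cases. \textbf{Case A:} no node in $R$ detects a strong majority. Then every $u \in R$ executes line \ref{line:coin} and sets its vote to $b^*$, and since $|R| = n - o(n)$, almost-everywhere agreement is reached with probability $1$. \textbf{Case B:} some value $m$ is detected as a strong majority by at least one node in $R$. Nodes in $R$ that detect set their vote to $m$ (line \ref{line:maj}), and nodes in $R$ that do not detect set their vote to $b^*$ (line \ref{line:coin}). Because $b^*$ is uniform in $\{0,1\}$ and, in the good-phase setting, generated independently of the sampling subphase's outcome (the honest designated sender's bit is a fresh coin flip), we have $\Pr[b^* = m] = 1/2$; conditioned on $b^* = m$, every node in $R$ holds vote $m$ at the end of the phase, yielding almost-everywhere agreement. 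Combining both cases, almost-everywhere agreement at the end of the phase occurs with probability at least $1/2 - o(1)$ (the $o(1)$ absorbing the small slack in $|R| = n - o(n)$ relative to all honest nodes).

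The main obstacle I expect is the mixed subcase of Case B, where the honest nodes split between detecting and not detecting the strong majority; the key insight that resolves it is the uniqueness of the detectable majority value established in the second paragraph, which reduces the problem to a single coin-matching event with probability exactly $1/2$. A minor care point is ensuring the coin's randomness is independent of the sampling: this follows from the good-phase definition (the designated sender is honest and picks its bit uniformly, and nothing in the sampling subphase biases this choice), so no additional conditioning argument is required.
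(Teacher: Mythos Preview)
Your proposal is correct and follows essentially the same approach as the paper: a case split on whether a strong majority is detected, combined with the observation that at most one value can be so detected among the well-sampled nodes, and then matching the coin to that value with probability $1/2$. The main cosmetic differences are that you make the intersection $R = R_s \cap R_c$ explicit up front and you split on ``no node in $R$ detects'' versus ``some node in $R$ detects'', whereas the paper splits on ``$n-o(n)$ do not detect'' versus ``$\Omega(n)$ detect'' and then separately argues uniqueness of the detected value; your uniqueness paragraph collapses that into a single step. Both routes arrive at the same $\Pr[b^* = m] = 1/2$ conclusion via the temporal independence of the honest sender's bit from the sampling subphase.
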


\begin{proof}
First, note that either (a) at least $n-o(n)$ nodes do not detect a strong majority (i.e., have both tallies smaller than $0.9$), or (b) $\Omega(n)$ nodes detect a strong majority (i.e., have one of the two tallies strictly greater than $0.9$). For the simpler case (a), at least $n-o(n)$ nodes set their vote to the coin's value by the description of Algorithm \ref{alg:ByzantineAgreement}, thus reaching (almost-everywhere) agreement. 

As for case (b), it implies that out of the two votes, $\Omega(n)$ honest nodes consider one of the two, say $m$ without loss of generality, to be a strong majority. Note that this does not rule out, for now, $\Omega(n)$ other honest nodes considering the other vote to be a strong majority as well. However, we next prove that in fact, at least $n-o(n)$ honest nodes agree on $m$ as the majority vote. Indeed, by Lemma \ref{lem:estimationByz}, at least $0.9-o(1)$ honest nodes started the phase with vote $m$. However, this in turn implies that $m$ starts the phase as a majority vote held by at least $0.9-o(1)$ honest nodes. Thus, by Lemma \ref{lem:estimationByz}, at least $n-o(n)$ honest nodes agree on $m$ as the majority vote (but not necessarily as a strong majority). 

We next show that with probability $1/2$, the honest nodes that agree on $m$ as a majority vote but did not detect a strong majority set their vote to $m$ by the end of the phase. Indeed, Lemma \ref{lem:goodCoinBroadcasts} implies that during the second subphase, at least $n - o(n)$ honest nodes agree on a common value $b$, which is 0 with probability 1/2 and 1 with probability 1/2. Moreover, this value is chosen independently of $maj$, as $maj$ is fixed (possibly influenced by the Byzantine adversary) by the end of the first subphase of phase $i$ whereas the coin flip happens later, in the second subphase. (Thus even a full information Byzantine adversary cannot deduce the output of that coin flip during the first subphase.) Hence, $\Pr[b = maj] = 1/2$. As a result, at least $n-o(n)$ honest nodes set their vote to $b$ with probability $1/2-o(1)$. The lemma statement follows.
\end{proof}

\begin{theorem}
\label{thm:mainResult}
Let $G$ be an expander graph having $n$ nodes out of which a subset of $|B| = o(n / \log n)$ nodes are Byzantine. Then, there exists a fully-distributed algorithm solving almost-everywhere Byzantine agreement (AEBA) with high probability. Moreover, it does so in $\tilde{O}(n^2)$ rounds.
\end{theorem}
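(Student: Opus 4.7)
The plan is to analyze Algorithm~\ref{alg:ByzantineAgreement} instantiated with the slower but more robust EAECC primitive of Subsection~\ref{subsec:coinWithBetterTolerance} (Lemma~\ref{lem:randomCoin}), which matches the target tolerance $|B| = o(n/\log n)$, and with the total number of phases set to $p = \Theta(n \log n)$. The round complexity is then a direct budgetary computation: the one-time $InitCoin()$ costs $\tilde{O}(n)$ rounds by Lemma~\ref{lem:randomCoin}; within each phase, the sampling subphase runs Algorithm~\ref{alg:byzantineSamplingSparse} with $\tot = \Theta(\log^3 n)$ and costs $\tilde{O}(1)$ rounds by Lemma~\ref{lem:byzantineRandomWalkRuntime}, while the $CoinFlip(i)$ call costs $\tilde{O}(n)$ rounds by Lemma~\ref{lem:randomCoin}. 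Summing gives a total of $\tilde{O}(n) + p \cdot \tilde{O}(n) = \tilde{O}(n^2)$ rounds, matching the claim.

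For agreement, I would combine Lemmas~\ref{lem:existsGoodRank}, \ref{lem:reachAgreement}, and \ref{lem:maintainingAgreement}. Since designated senders cycle with period $n$ and ranks are fixed at initialization, each good rank of Lemma~\ref{lem:existsGoodRank} induces $\lceil p/n \rceil = \Theta(\log n)$ good phases, giving $\Omega(n \log n)$ good phases whp. In every good phase, Lemma~\ref{lem:reachAgreement} ensures that if almost-everywhere agreement does not yet hold, it is reached at the end of that phase with probability at least $1/2 - o(1)$. Crucially, because the honest designated sender generates its uniform random bit afresh after the sampling subphase has already fixed the (possibly adversarially-biased) perceived majority $maj$, the $1/2$-probability bound holds independently for each good phase, even against a rushing full-information adversary. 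Consequently, after just the first $\Theta(\log n)$ good phases, the probability of not having reached almost-everywhere agreement is at most $(1/2+o(1))^{\Theta(\log n)} \le 1/n^c$ for any desired constant $c$. Once agreement has been reached, Lemma~\ref{lem:maintainingAgreement} preserves it whp in every subsequent phase, and a union bound over the $p = \tilde{O}(n)$ phases keeps the guarantee whp.

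Finally, validity follows from the same maintenance argument applied from the very first phase: if every honest node begins with the common input $b$, then the $1-o(1)$-fraction hypothesis of Lemma~\ref{lem:maintainingAgreement} already holds at phase~1, and induction together with a union bound over the $p$ phases shows that at least $n - o(n)$ honest nodes still hold $b$ at termination. The main obstacle I anticipate is formalizing the per-phase independence of coin outcomes under an adaptive, rushing adversary; this is handled by observing that within a single phase the adversary's Byzantine moves during both the sampling subphase (which alone fixes $maj$) and the coin-flip broadcast do not depend on the honest sender's uniform random bit, since that bit is generated only at the start of the broadcast, yielding the clean conditional probability $\Pr[bit_v = maj_v \mid \text{history}] = 1/2$ used above.
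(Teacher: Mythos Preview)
Your proposal is correct and follows essentially the same approach as the paper: the same algorithm instantiation, the same round-complexity accounting, and the same combination of Lemmas~\ref{lem:existsGoodRank}, \ref{lem:reachAgreement}, and \ref{lem:maintainingAgreement} to argue that $\Omega(\log n)$ good phases occur and each independently yields agreement with probability $1/2 - o(1)$. You are slightly more explicit than the paper about the period-$n$ cycling of ranks, about validity, and about why the coin bit is independent of the adversary's influence on $maj$ (the paper pushes the latter into the proof of Lemma~\ref{lem:reachAgreement}), but these are elaborations rather than a different route.
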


\begin{proof}
We start by the correctness. First, note that once almost-everywhere agreement is reached in some phase $i \in [1,p]$, then nodes maintain almost-everywhere agreement for all subsequent phases $i' > i$ (by Lemma \ref{lem:maintainingAgreement}). Hence, it suffices to show that nodes reach almost-everywhere agreement in at least one phase with high probability. By Lemma \ref{lem:existsGoodRank}, within any successive $n$ coin flips, at least one coin flip (phase) is good. Thus within the $p = \Theta(n \log n)$ phases, at least $\Omega(\log n)$ of the coin flips are good. Moreover, independently for each such phase, almost-everywhere agreement is reached at the end of the phase with probability at least $1/2 - o(1)$ (by Lemma \ref{lem:reachAgreement}). Hence, nodes reach almost-everywhere agreement in at least one phase with high probability and Algorithm \ref{alg:ByzantineAgreement} solves almost-everywhere Byzantine agreement with high probability.

Next, we consider the round complexity. First, setting up the coin primitive takes $\tilde{O}(n)$ rounds by Lemma \ref{lem:randomCoin}. As for the $p$ phases, the sampling subphase takes $\tilde{O}(1)$ rounds (by Theorem \ref{thm:ByzSamplingFinal}) whereas the coin flip subphase takes $\tilde{O}(n)$ rounds (by Lemma \ref{lem:randomCoin}). Hence, Algorithm \ref{alg:ByzantineAgreement} takes $\tilde{O}(n^2)$ rounds.
\end{proof}

\begin{theorem}
\label{thm:mainResult2}
Let $G$ be an expander graph having $n$ nodes out of which a subset of $|B| = o(n/\log^2 n)$ nodes are Byzantine. Then, there exists a fully-distributed algorithm solving almost-everywhere Byzantine agreement (AEBA) with high probability. Moreover, it does so in $\tilde{O}(n)$ rounds.
\end{theorem}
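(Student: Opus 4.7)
The plan is to run exactly the same Algorithm \ref{alg:ByzantineAgreement} as in Theorem \ref{thm:mainResult}, but instantiate the eventual almost-everywhere common coin using the second (faster) EAECC primitive of Subsection \ref{subsec:fasterCoin} rather than the first. The correctness argument is essentially a line-by-line translation of the proof of Theorem \ref{thm:mainResult}, with the substitution of the corresponding ``low-congestion'' analogues from Section \ref{sec:randomCoin}. The runtime, however, improves by a factor of roughly $n$ because each coin flip now takes only $\tilde{O}(1)$ rounds instead of $\tilde{O}(n)$.

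First I would verify that all of the auxiliary sampling lemmas carry over unchanged. Lemma \ref{lem:estimationByz} relies only on Theorem \ref{thm:ByzSamplingFinal} with $\kappa = o(1)$; since $|B| = o(n/\log^2 n) \subseteq o(n/\log n)$, the required bound $\kappa = (|B|\log n)/|C| = o(1)$ holds a fortiori, so Lemma \ref{lem:estimationByz} and Lemma \ref{lem:maintainingAgreement} remain valid verbatim. Thus the ``once reached, agreement persists'' property is unaffected, and honest nodes continue to sample the global vote distribution accurately in each sampling subphase.

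Next I would adapt Lemma \ref{lem:reachAgreement} to the setting where the coin flip is \emph{good and low-congestion}. The case analysis is identical: either at least $n-o(n)$ honest nodes detect no strong majority (and all adopt the coin value), or $\Omega(n)$ honest nodes detect a strong majority, in which case Lemma \ref{lem:estimationByz} forces the agreed majority value $m$ to be held by at least $0.9 - o(1)$ fraction of honest nodes and to be seen as the majority by $n - o(n)$ of them. The only change is to invoke Lemma \ref{lem:goodCoinBroadcasts2} in place of Lemma \ref{lem:goodCoinBroadcasts} to conclude that the coin output is an unbiased bit shared by $n-o(n)$ honest nodes; because the coin is flipped strictly after the sampling subphase, even a rushing Byzantine adversary cannot correlate it with $m$, so $\Pr[b = m] = 1/2$, yielding almost-everywhere agreement with probability $1/2 - o(1)$. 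Replacing Lemma \ref{lem:existsGoodRank} by Lemma \ref{lem:existsGoodRank2} then guarantees $\Omega(\log n)$ good-and-low-congestion coin flips within the $p = \Theta(n \log n)$ phases, so almost-everywhere agreement is reached in at least one phase whp and persists thereafter by Lemma \ref{lem:maintainingAgreement}.

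Finally, for the runtime I would simply add up the costs. Initialization of the second EAECC costs $\tilde{O}(n)$ rounds by Lemma \ref{lem:randomCoin2}. Each of the $p = \Theta(n \log n)$ phases consists of a sampling subphase costing $\tilde{O}(1)$ rounds (Theorem \ref{thm:ByzSamplingFinal} with $\tot = \Theta(\log^3 n)$) and a coin flip subphase costing $\tilde{O}(1)$ rounds (Lemma \ref{lem:randomCoin2}). The total is $\tilde{O}(n) + \Theta(n \log n) \cdot \tilde{O}(1) = \tilde{O}(n)$. The main subtlety to be careful about—and where I expect the only real work to lie—is the bookkeeping around the ``low-congestion'' qualifier: I must check that the probabilistic events (``the phase is good \emph{and} low-congestion'', the unbiasedness of the coin, the concentration of samples) remain mutually compatible under a union bound, and that the stricter Byzantine bound $|B| = o(n/\log^2 n)$ is precisely what is needed to make $\kappa = o(1/\log n)$ in the analysis of the second AERID protocol (Corollary \ref{cor:tokenUniformDispersal}) so that low-congestion-good tokens dominate the bad ones at a constant fraction of ranks.
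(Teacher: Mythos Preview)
Your proposal is correct and follows essentially the same approach as the paper's own proof, which is itself just a one-line pointer back to Theorem~\ref{thm:mainResult} with Lemma~\ref{lem:randomCoin2} substituted for the coin-flip runtime. If anything, you spell out more explicitly than the paper does which auxiliary lemmas (\ref{lem:goodCoinBroadcasts2}, \ref{lem:existsGoodRank2}) replace their Section~\ref{subsec:coinWithBetterTolerance} counterparts, and you correctly identify that the stricter bound $|B| = o(n/\log^2 n)$ is exactly what forces $\kappa = o(1/\log n)$ so that Corollary~\ref{cor:tokenUniformDispersal} applies.
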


\begin{proof}
    The correctness and runtime can be shown following the proof of Theorem \ref{thm:mainResult}. For the latter, note that the coin flip subphase takes $\tilde{O}(1)$ rounds (by Lemma \ref{lem:randomCoin2}) and thus Algorithm \ref{alg:ByzantineAgreement} takes $\tilde{O}(n)$ rounds only.
\end{proof}

\section{Conclusion and Open Problems}
\label{sec:conclusion}

We address the fundamental Byzantine agreement problem in sparse (bounded-degree) networks, a practically relevant setting
to real-world networks, especially modern P2P networks that underlie blockchains and cryptocurrencies.
In these networks, it is crucial to develop efficient fully-distributed protocols which operate with only local (initial) knowledge.
In this work, we develop  fully-distributed protocols that tolerate a large number of Byzantine nodes --- up to $o(n/\log n)$. This answers open questions raised in previous works \cite{Dwork_1988, King_2006_FOCS} of whether such algorithms are possible.

Several key questions remain. Our protocols run in a polynomial number of rounds. In particular, one of our protocols
runs in near-linear $\tilde{O}(n)$ rounds while tolerating $o(n/\log^2 n)$ Byzantine nodes. It is not clear whether this is the best possible round complexity for tolerating a nearly linear number of Byzantine nodes or whether significantly faster (say, $\polylog{n}$ round) algorithms are possible.  Unlike complete networks where there are well-established message  lower bounds\footnote{Note that there are fast $O(\log n)$-round algorithms for BA in complete networks that tolerate even up to nearly $n/3$ Byzantine nodes, but these take at least quadratic messages \cite{Ben-Or_2006,Goldwasser_2006}.} (e.g., \cite{DolevR85,HadzilacosH93}),
we are not aware of message or time lower bounds on the runtime of Almost-Everywhere Byzantine Agreement (AEBA)
in {\em sparse} networks.  Since we require only almost-everywhere agreement and  Byzantine nodes can only communicate through
the graph edges, the power of the adversary is somewhat reduced  (compared to complete networks), and it is not clear how to show lower bounds in sparse networks. In particular, there is a striking contrast between the two settings. In complete networks, it is known that $\Omega(nt)$ messages are necessary even for randomized algorithms \cite{HadzilacosH93}. However, in sparse networks, for $t = \sqrt{n}/\polylog{n}$, there is a $O(\log^3 n)$ round algorithm~\cite{Augustine_2013_PODC}. Is it the case
that the $\Omega(nt)$ message lower bound holds in the sparse setting for higher values of $t$, in particular when $t$ is near-linear in $n$? If so, then since the degree is bounded, the $\Omega(nt)$ message lower bound will imply that
$\Omega(t)$ is a lower bound on the round complexity of AEBA protocols (under bandwidth constraint)  that tolerate up to $t$ Byzantine nodes.
If this is true, then our second protocol will be nearly-optimal.

\newpage

\bibliographystyle{plainurl}
\bibliography{reference}

\end{document}